\newtheorem{thm}{Theorem}
\newtheorem{cor}{Corollary}
\newtheorem{clm}{Claim}
\newtheorem{fact}{Fact}
\newtheorem{lem}{Lemma}
\newtheorem{asm}{Assumption}
\newtheorem{obs}{Observation}
\begin{document}
\title{\Large \bf Acyclic Edge Coloring of Triangle Free Planar Graphs}

\author{Manu Basavaraju\thanks{\textbf{Corresponding Author:} Computer Science and Automation department,
Indian Institute of Science,
Bangalore- 560012,
India.  {\tt manu@csa.iisc.ernet.in; iammanu@gmail.com}} \and L. Sunil Chandran\thanks{Computer Science and Automation department,
Indian Institute of Science,
Bangalore- 560012,
India.  {\tt sunil@csa.iisc.ernet.in}}
}

\date{}
\pagestyle{plain}
\maketitle

\begin{abstract}

An $acyclic$ edge coloring of a graph is a proper edge coloring such that there are no bichromatic cycles. The \emph{acyclic chromatic index} of a graph is the minimum number k such that there is an acyclic edge coloring using k colors and is denoted by $a'(G)$.  It was conjectured by Alon, Sudakov and Zaks (and much earlier by Fiamcik) that $a'(G)\le \Delta+2$, where $\Delta =\Delta(G)$ denotes the maximum degree of the graph.

If every induced subgraph $H$ of $G$ satisfies the condition $\vert E(H) \vert \le 2\vert V(H) \vert -1$, we say that the graph $G$ satisfies $Property\ A$. In this paper, we prove that if $G$ satisfies $Property\ A$, then $a'(G)\le \Delta + 3$. Triangle free planar graphs satisfy $Property\ A$. We infer that $a'(G)\le \Delta + 3$, if $G$ is a triangle free planar graph. Another class of graph which satisfies $Property\ A$ is  2-fold graphs (union of two forests).

\end{abstract}

\noindent \textbf{Keywords:} Acyclic edge coloring, acyclic edge chromatic number, planar graphs.

\section{Introduction}

All graphs considered in this paper are finite and simple. A proper \emph{edge coloring} of $G=(V,E)$ is a map $c: E\rightarrow C$ (where $C$ is the set of available $colors$ ) with $c(e) \neq c(f)$ for any adjacent edges $e$,$f$. The minimum number of colors needed to properly color the edges of $G$, is called the chromatic index of $G$ and is denoted by $\chi'(G)$. A proper edge coloring c is called acyclic if there are no bichromatic cycles in the graph. In other words an edge coloring is acyclic if the union of any two color classes induces a set of paths (i.e., linear forest) in $G$. The \emph{acyclic edge chromatic number} (also called \emph{acyclic chromatic index}), denoted by $a'(G)$, is the minimum number of colors required to acyclically edge color $G$. The concept of \emph{acyclic coloring} of a graph was introduced by Gr\"unbaum \cite{Grun}. The \emph{acyclic chromatic index} and its vertex analogue can be used to bound other parameters like \emph{oriented chromatic number} and \emph{star chromatic number} of a graph, both of which have many practical applications, for example, in wavelength routing in optical networks ( \cite{ART}, \cite{KSZ} ). Let $\Delta=\Delta(G)$ denote the maximum degree of a vertex in graph $G$. By Vizing's theorem, we have $\Delta \le \chi'(G) \le \Delta +1 $(see \cite{Diest} for proof). Since any acyclic edge coloring is also proper, we have $a'(G)\ge\chi'(G)\ge\Delta$. \newline

It has been conjectured by Alon, Sudakov and Zaks \cite{ASZ} (and much earlier by Fiamcik \cite{Fiam}) that $a'(G)\le\Delta+2$ for any $G$. Using probabilistic arguments Alon, McDiarmid and Reed \cite{AMR} proved that $a'(G)\le60\Delta$. The best known result up to now for arbitrary graph, is by Molloy and Reed  \cite{MolReed} who showed that $a'(G)\le16\Delta$. Muthu, Narayanan and Subramanian \cite{MNS1} proved that $a'(G)\le4.52\Delta$ for graphs $G$ of girth at least 220 (\emph{Girth} is the length of a shortest cycle in a graph).\newline

Though the best known upper bound for general case is far from the conjectured $\Delta+2$, the conjecture has been shown to be true for some special classes of graphs. Alon, Sudakov and Zaks \cite{ASZ} proved that there exists a constant $k$ such that $a'(G)\le\Delta+2$ for any graph $G$ whose girth is at least $k\Delta\log\Delta$. They also proved that $a'(G)\le\Delta+2$ for almost all $\Delta$-regular graphs. This result was improved by Ne\v set\v ril and Wormald \cite{NesWorm} who showed that for a random $\Delta$-regular graph $a'(G)\le \Delta+1$. Muthu, Narayanan and Subramanian proved the conjecture for grid-like graphs \cite{MNS2}. In fact they gave a better bound of $\Delta+1$ for these class of graphs. From Burnstein's \cite{Burn} result it follows that the conjecture is true for subcubic graphs. Skulrattankulchai \cite{Skul} gave a polynomial time algorithm to color a subcubic graph using $\Delta+2 = 5$ colors. Fiamcik \cite{Fiam1}, \cite{Fiam2} proved that every subcubic graph, except for $K_4$ and $K_{3,3}$, is acyclically edge colorable using $4$ colors.

Determining $a'(G)$ is a hard problem both from theoretical and algorithmic points of view. Even for the simple and highly structured class of complete graphs, the value of $a'(G)$ is still not determined exactly. It has also been shown by Alon and Zaks \cite{AZ} that determining whether $a'(G)\le3$ is NP-complete for an arbitrary graph $G$. The vertex version of this problem has also been extensively studied ( see \cite{Grun}, \cite{Burn}, \cite{Boro}). A generalization of the acyclic edge chromatic number has been studied: The \emph{r-acyclic edge chromatic number} $a'_r(G)$ is the minimum number of colors required to color the edges of the graph $G$ such that every cycle $C$ of $G$ has at least min\{$\vert C \vert$,$r$\} colors ( see \cite{GeRa}, \cite{GrePi}).

~~~~~~~~~

\noindent\textbf{Our Result:} The acyclic chromatic index of planar graphs has been studied previously. Fiedorowicz, Hauszczak and Narayanan \cite{AMN} gave an upper bound of $2\Delta+29$ for planar graphs. Independently Hou, Wu, GuiZhen Liu and Bin Liu \cite{JWu1} gave an upper bound of $max(2\Delta - 2,\Delta+22)$. Note that for $\Delta \ge 24$, it is equal to $2\Delta-2$. Basavaraju and Chandran \cite{MBSC6} improved the bound significantly to  $\Delta +12$.

The acyclic chromatic index of special classes of planar graphs characterized by some lower bounds on girth or the absence of short cycles have also been studied. In \cite{JWu1} an upper bound of $\Delta+2$ for planar graphs of girth at least $5$ has been proved. Fiedorowicz and Borowiecki \cite{AFMB} proved an upper bound of $\Delta+1$ for planar graphs of girth at least $6$ and an upper bound of $\Delta+15$ for planar graphs without cycles of length $4$. In \cite{AMN}, an upper bound of $\Delta+6$ for triangle free planar graphs has been proved. In this paper we improve the bound to $\Delta+3$. In fact we prove a more general theorem as described below:

\noindent $Property\ A:$ Let $G$ be a simple graph. If every induced subgraph $H$ of $G$ satisfies the condition $\vert E(H) \vert \le 2\vert V(H) \vert -1$, we say that the graph $G$ satisfies $Property\ A$. If $G$ satisfies $Property\ A$, then every subgraph of $G$ also satisfies $Property\ A$.

In this paper, we prove the following theorem:

\begin{thm}
\label{thm:thm1}
If a graph $G$ satisfies $Property\ A$, then $a'(G) \le \Delta(G) +3$.
\end{thm}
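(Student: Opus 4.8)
The plan is to argue by contradiction with a minimal counterexample. Suppose the theorem fails and let $G$ be a counterexample with the fewest edges; write $\Delta=\Delta(G)$, and assume $G$ connected. Since every subgraph of $G$ again satisfies $Property\ A$ and has maximum degree at most $\Delta$, minimality gives that every proper subgraph of $G$ — in particular $G-e$ for each edge $e$, and $G-e$ with one further edge uncoloured — admits an acyclic edge colouring with $\Delta+3$ colours. The body of the proof is then a list of ``reducible configurations'' that $G$ cannot contain, followed by a global argument showing that a graph satisfying $Property\ A$ must contain one of them.

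The tool for ruling out a configuration is the usual extension argument. Pick an edge $uv$ incident to a low-degree vertex $u$ (say $d(u)\le d(v)$ with $d(u)$ as small as possible), let $c$ be a $(\Delta+3)$-acyclic colouring of $G-uv$, and try to colour $uv$. A colour is forbidden on $uv$ if it is one of the at most $d(u)+d(v)-2$ colours already appearing at $u$ or $v$, or if assigning it to $uv$ closes a bichromatic cycle; in the latter case the colour $\alpha$ must be such that, for some colour $\beta$ appearing at both $u$ and $v$, the maximal $(\alpha,\beta)$-alternating path leaving $u$ along its $\beta$-edge ends at $v$. Hence there are at least $\Delta+5-d(u)-d(v)$ candidate colours, while the number of ``dangerous'' ones is at most a sum, over those neighbours $w$ of $u$ whose edge-colour also appears at $v$, of $d(w)-1$ (and symmetrically with $u,v$ exchanged). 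When this count already leaves a free colour we are done; otherwise we first modify $c$ by a recolouring move — swapping the two colours along an alternating path based at $u$ or $v$, or uncolouring a second carefully chosen edge at $u$ or $v$ and recolouring the two uncoloured edges together — chosen so as to destroy the obstructing alternating path without creating any new bichromatic cycle, and then repeat. Pushing this through the small-degree cases is expected to yield statements such as: $\delta(G)\ge 2$; no two degree-$2$ vertices are adjacent; the two neighbours of a degree-$2$ vertex have degree-sum at least $\Delta+5$; a degree-$3$ vertex is adjacent to few small-degree vertices; and similar constraints binding each low-degree vertex to sufficiently many high-degree neighbours.

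Finally one derives a contradiction from $Property\ A$. Applying $|E(H)|\le 2|V(H)|-1$ with $H=G$ gives $\sum_{v}(4-d(v))\ge 2$, so the vertices of degree at most $3$ carry a strictly positive total ``deficiency'' while vertices of degree at least $4$ carry none. A discharging argument — move charge $d(v)-4$ from high-degree vertices to their low-degree neighbours by rules calibrated to the reducibility lemmas — or, equivalently, a direct count of the edges of the subgraph $H$ spanned by the low-degree vertices together with their neighbours, then contradicts either the non-negativity of the redistributed charge or the bound $|E(H)|\le 2|V(H)|-1$ for that $H$. The heart of the proof, and the main obstacle, is the second step: because only three colours beyond $\Delta$ are available, the naive count fails for a large number of configurations, and one must design the recolouring moves and check, case by case, that no new bichromatic cycle is introduced elsewhere in the graph.
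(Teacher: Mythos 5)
Your proposal correctly identifies the paper's overall strategy --- minimal counterexample, a discharging argument based on $\sum_v (d(v)-4) \le -2$ to force an unavoidable low-degree configuration, and reducibility arguments that extend a $(\Delta+3)$-colouring of $G-uv$ by counting candidates against colours that are ``actively present'' via critical alternating paths, falling back on recolouring moves when the count fails. But as written it is a plan, not a proof, and the gap is precisely where you locate it yourself. You never commit to a specific list of unavoidable configurations, so the discharging step cannot be checked: the paper's Lemma 4 fixes five configurations $B1$--$B5$ (e.g.\ $k=5$ with three neighbours of degree $\le 3$, $k=6$ with five neighbours of degree $\le 3$, $k\ge 7$ with all but possibly one neighbour of degree $\le 3$), calibrated so that a single rule (vertices of degree $\ge 5$ send $1/2$ to each degree-$3$ neighbour) makes every vertex nonnegative; your tentative reducibility statements (``the two neighbours of a degree-$2$ vertex have degree-sum at least $\Delta+5$'') do not match what is actually needed, and indeed the paper handles degree-$2$ vertices by a different manoeuvre (showing they form an independent set, deleting them, and locating a pivot in the reduced graph whose neighbourhood in $G$ still has few high-degree members).

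More seriously, the reducibility of each configuration --- which you defer with ``is expected to yield'' and ``one must design the recolouring moves and check, case by case'' --- is essentially the entire content of the paper. With only three spare colours the candidate count fails in many subcases, and the paper needs a long chain of claims (roughly Claims 1--19, split by $|F_u\cap F_v|\in\{1,2\}$ and by which configuration the pivot lies in) involving multi-step recolourings: uncolouring a second edge, recolouring an edge $vv_i$ with a colour outside $S_{uw}$, then recolouring $uw$ and finally $va$ to break the one bichromatic cycle that can survive, each step justified by Fact 1 (uniqueness of maximal bichromatic paths) or Lemma 1. None of this is present or even sketched in enough detail to be verified, so the proposal does not establish the theorem.
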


Note that triangle free planar graphs, 2-degenerate graphs, 2-fold graphs (union of two forests), etc.  are some classes of graphs which satisfy $Property\ A$.  The following corollary is obvious.

\begin{cor}
\label{cor:cor1}
If $G$ is a triangle free planar graph, then $a'(G)\le \Delta +3$.
\end{cor}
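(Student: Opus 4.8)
The plan is to argue by contradiction via a minimal counterexample. Suppose the theorem fails, and let $G$ be a graph satisfying \emph{Property A} with $a'(G) > \Delta(G) + 3$ and with $|E(G)|$ minimum among all such graphs; write $\Delta = \Delta(G)$ and fix a palette of $k = \Delta + 3$ colors. Since every proper subgraph of $G$ again satisfies \emph{Property A} (as noted in the definition) and has maximum degree at most $\Delta$, minimality guarantees that $G$ minus any edge admits an acyclic edge coloring with $k$ colors. The strategy is to delete a carefully chosen edge $uv$ incident to a low-degree vertex, take such a coloring $c$ of $G - uv$, and then show that $c$ can always be extended to $uv$ — contradicting $a'(G) > k$. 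From this one extracts a short list of configurations that cannot appear in $G$, and then \emph{Property A} is used in a counting argument to conclude that $G$ cannot exist.

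For the extension step I will use the standard toolkit for acyclic edge colorings. Given the partial coloring $c$ of $G - uv$, call a color \emph{valid} for $uv$ if it appears on no edge incident to $u$ or to $v$; there are at least $k - (d(u) - 1) - (d(v) - 1) = \Delta + 5 - d(u) - d(v)$ valid colors. Assigning a valid color $\alpha$ to $uv$ fails to be acyclic only if it closes a bichromatic cycle, i.e.\ only if there is a color $\beta$ and an $(\alpha, \beta)$-bichromatic path between $u$ and $v$; for each of the $d(u) - 1$ colors on edges at $u$ there is at most one maximal bichromatic path leaving $u$ in that pair, and likewise at $v$, so only a bounded number of valid colors are blocked. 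When no naive valid color survives, I will recolor one edge incident to $u$ or $v$ — swapping colors along a suitable maximal bichromatic path, or recoloring an edge whose other endpoint has few constraints — to free a color, while tracking the new bichromatic paths this creates. Iterating this reasoning yields the reducible configurations: first that $\delta(G) \ge 2$ (a degree-$1$ vertex is trivial to handle), then that an edge joining two vertices of small degree is impossible — quantitatively, every edge $uv$ obeys a lower bound on $d(u) + d(v)$, and degree-$2$ and degree-$3$ vertices must have high-degree neighbours subject to further restrictions on their shared neighbourhoods.

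The final step is global counting. \emph{Property A} applied to $G$ itself gives $|E(G)| \le 2|V(G)| - 1$, equivalently $\sum_{v \in V(G)} (d(v) - 4) \le -2 < 0$. Assign to each vertex $v$ the charge $\mu(v) = d(v) - 4$ and redistribute it from vertices of degree at least $5$ to nearby vertices of degree $2$ and $3$ according to rules calibrated to the forbidden configurations (for instance, a degree-$2$ vertex — both of whose neighbours are then forced to have large degree — draws a fixed amount of charge from each neighbour). The configuration lemmas are precisely what is needed to show that after discharging every vertex has nonnegative charge, so $\sum_v \mu(v) \ge 0$, contradicting the strict inequality above. This contradiction proves the theorem; Corollary~\ref{cor:cor1} then follows because a triangle-free planar graph $H$ satisfies $|E(H)| \le 2|V(H)| - 4 < 2|V(H)| - 1$ for $|V(H)| \ge 3$ by Euler's formula (the remaining cases being immediate), so every triangle-free planar graph satisfies \emph{Property A}.

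The main obstacle is the extension step: because \emph{Property A} is considerably weaker than planarity, fewer local configurations can be excluded, so the reducible configurations must be pushed as far as possible, which forces a delicate case analysis of how the maximal bichromatic $(\alpha,\beta)$-paths through $u$ and $v$ interact as several edges around $u$ and $v$ are recolored in sequence. Matching the discharging rules exactly to the relatively meagre list of excluded configurations — in particular handling adjacent vertices of degrees such as $2$ and $4$, or two degree-$3$ vertices sharing a neighbour — is where the real work lies.
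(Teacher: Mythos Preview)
Your derivation of the corollary from Theorem~\ref{thm:thm1} is exactly the paper's: triangle-free planar graphs satisfy \emph{Property A} (by Euler's formula every subgraph $H$ with $|V(H)|\ge 3$ has $|E(H)|\le 2|V(H)|-4$), so the bound $a'(G)\le\Delta+3$ follows. Your outline for Theorem~\ref{thm:thm1} itself --- minimal counterexample, charge $\mu(v)=\deg(v)-4$, discharging from high-degree to low-degree vertices, and recoloring/bichromatic-path arguments to reduce the resulting unavoidable configurations --- is also the paper's strategy.

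One point where the paper's execution differs from what your sketch anticipates: in the discharging (Lemma~\ref{lem:lem4}) only degree-$3$ vertices receive charge, and a degree-$2$ vertex is simply declared configuration $B1$ with no discharging at all. The reduction of $B1$ is then not handled by a direct extension argument as you suggest; instead, when $B1$ is the only configuration present, the paper deletes the (independent) set of all degree-$2$ vertices, reapplies Lemma~\ref{lem:lem4} to the smaller graph to locate a pivot of some $B1$--$B5$ there, and argues about that pivot back in $G$ (Section~3.2 and Claim~\ref{clm:clm19}). This two-pass use of the structural lemma is the one manoeuvre your plan does not foresee.
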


\indent Note that this is the best result known for triangle free planar graphs and 2-fold graphs. The earlier known bound for these classes of graphs was $\Delta + 6$ by \cite{AMN}. In case of 2-degenerate graphs a tight bound of $\Delta+1$ has been proved in \cite{MBSC4}.

Our proof is constructive and yields an efficient polynomial time algorithm. We have presented the proof in a non-algorithmic way. But it is easy to extract the underlying algorithm from it.

\section{Preliminaries}

Let $G=(V,E)$ be a simple, finite and connected graph of $n$ vertices and $m$ edges. Let $x \in V$. Then $N_{G}(x)$ will denote the neighbours of $x$ in $G$. For an edge $e \in E$, $G-e$ will denote the graph obtained by deletion of the edge $e$. For $x,y \in V$, when $e=(x,y)=xy$, we may use $G-\{xy\}$ instead of $G-e$. Let $c:E\rightarrow \{1,2,\ldots,k\}$ be an \emph{acyclic edge coloring} of $G$. For an edge $e\in E$, $c(e)$ will denote the color given to $e$ with respect to the coloring $c$. For $x,y \in V$, when $e=(x,y)=xy$ we may use $c(x,y)$ instead of $c(e)$. For $S \subseteq V$, we denote the induced subgraph on $S$ by $G[S]$.

Many of the definitions, facts and lemmas that we develop in this section are already present in our earlier papers \cite{MBSC2}, \cite{MBSC4}. We include them here for the sake of completeness. The proofs of the lemmas will be omitted whenever it is available in \cite{MBSC2}, \cite{MBSC4}.

~~~~~~

\noindent \textbf{Partial Coloring:} Let H be a subgraph of $G$. Then an edge coloring $c'$ of $H$ is also a partial coloring of $G$. Note that $H$ can be $G$ itself. Thus a coloring $c$ of $G$ itself can be considered a partial coloring. A partial coloring $c$ of $G$ is said to be a proper partial coloring if $c$ is proper. A proper partial coloring $c$ is called acyclic if there are no bichromatic cycles in the graph. Sometimes we also use the word valid coloring instead of acyclic coloring. Note that with respect to a partial coloring $c$, $c(e)$ may not be defined for an edge $e$. So, whenever we use $c(e)$, we are considering an edge $e$ for which $c(e)$ is defined, though we may not always explicitly mention it.

Let $c$ be a partial coloring of $G$. We denote the set of colors in the partial coloring $c$ by $C = \{1,2,\ldots,k\}$. For any vertex $u \in V(G)$, we define $F_u(c) =\{c(u,z) \vert z \in N_{G}(u)\}$. For an edge $ab \in E$, we define $S_{ab}(c) = F_b - \{c(a,b)\}$. Note that $S_{ab}(c)$ need not be the same as $S_{ba}(c)$. We will abbreviate the notation to $F_u$ and $S_{ab}$ when the coloring $c$ is understood from the context.

To prove the main result, we plan to use contradiction. Let $G$ be the minimum counter example with respect to the number of edges for the statement in the theorems that we plan to prove.  Let $G =(V,E)$ be a graph on $m$ edges where $m \ge 1$. We will remove an edge $e=(x,y)$ from $G$ and get a graph $G'=(V,E')$. By the minimality of $G$, the graph $G'$ will have an acyclic edge coloring $c:E'\rightarrow \{1,2,\ldots,t\}$, where $t$ is the claimed upper bound for $a'(G)$. Our intention will be to extend the coloring $c$ of $G'$ to $G$ by assigning an appropriate color for the edge $e$ thereby contradicting the assumption that $G$ is a minimum counter example.

~~~~~

The following definitions arise out of our attempt to understand what may prevent us from extending a partial coloring of $G-e$ to $G$.

\noindent \textbf{Maximal bichromatic Path:} An ($\alpha$,$\beta$)-maximal bichromatic path with respect to a partial coloring $c$ of $G$ is a maximal path consisting of edges that are colored using the colors $\alpha$ and $\beta$ alternatingly. An ($\alpha$,$\beta$,$a$,$b$)-maximal bichromatic path is an ($\alpha$,$\beta$)-maximal bichromatic path which starts at the vertex $a$ with an edge colored $\alpha$ and ends at $b$. We emphasize that the edge of the ($\alpha$,$\beta$,$a$,$b$)-maximal bichromatic path incident on vertex $a$ is colored $\alpha$ and the edge incident on vertex $b$ can be colored either $\alpha$ or $\beta$. Thus the notations ($\alpha$,$\beta$,$a$,$b$) and ($\alpha$,$\beta$,$b$,$a$) have different meanings. Also note that any maximal bichromatic path will have at least two edges. The following fact is obvious from the definition of proper edge coloring:

\begin{fact}
\label{fact:fact1}
Given a pair of colors $\alpha$ and $\beta$ of a proper coloring $c$ of $G$, there can be at most one maximal ($\alpha$,$\beta$)-bichromatic path containing a particular vertex $v$, with respect to $c$.
\end{fact}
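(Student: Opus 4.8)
The statement to establish is a purely structural consequence of properness, so the plan is to reduce it to a maximum-degree observation rather than to argue path by path. First I would fix the color pair $\alpha,\beta$ and the proper coloring $c$, and record the elementary fact that drives everything: since $c$ is proper, no two edges of the same color share an endpoint, so every vertex of $G$ is incident to \emph{at most one} edge colored $\alpha$ and \emph{at most one} edge colored $\beta$.

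Next I would pass to the spanning subgraph $H$ of $G$ whose edge set is exactly the set of edges colored $\alpha$ or $\beta$. The previous observation says precisely that every vertex has degree at most $2$ in $H$. Using the standard fact that a graph of maximum degree at most $2$ is a disjoint union of simple paths and cycles, each vertex of $G$ lies in exactly one connected component of $H$. I would then relate the maximal $(\alpha,\beta)$-bichromatic paths to these components: any such path is a connected subgraph of $H$, and because at each vertex it meets there is only one available $\alpha$-edge and one available $\beta$-edge, the path is forced to traverse those edges, with no freedom of choice. Hence, starting from $v$ and extending in both directions, the maximal $(\alpha,\beta)$-bichromatic path containing $v$ is uniquely determined, which is exactly the claim.

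The argument is short, and the only situation needing a moment's care is the degenerate case where the component of $H$ containing $v$ is a cycle rather than a path; there the number of maximal bichromatic paths through $v$ is $0$, still consistent with ``at most one,'' so no separate treatment is required. I expect no genuine obstacle, since the entire content is the reduction to $\Delta(H)\le 2$. If a more self-contained phrasing were preferred, one could instead argue by direct contradiction: assuming two distinct maximal $(\alpha,\beta)$-bichromatic paths both pass through $v$ would force two distinct $\alpha$-edges or two distinct $\beta$-edges at $v$, contradicting the properness of $c$; this yields the same conclusion.
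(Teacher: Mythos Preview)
The paper does not supply a proof of this Fact; it simply declares it ``obvious from the definition of proper edge coloring.'' Your argument via the auxiliary subgraph $H$ of $\alpha$- and $\beta$-coloured edges, observing $\Delta(H)\le 2$ and hence that each component of $H$ is a path or a cycle, is exactly the natural way to make this rigorous, and it is correct whenever the component of $H$ containing $v$ is a path.

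One small correction concerning the cycle case: your claim that there are then zero maximal $(\alpha,\beta)$-bichromatic paths through $v$ is not right. If the component of $H$ containing $v$ is an $(\alpha,\beta)$-bichromatic cycle of length $n$, then deleting any one of its $n$ edges yields a maximal $(\alpha,\beta)$-bichromatic path through $v$, so there are $n>1$ such paths. In other words, the Fact as literally stated (for an arbitrary \emph{proper} coloring) is actually false in this degenerate situation. This does no damage in the paper: every invocation of the Fact is made with respect to a \emph{valid} (i.e.\ acyclic) partial coloring, where by definition no $(\alpha,\beta)$-bichromatic cycle exists, so the cycle case is vacuous and your main argument already suffices. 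If you want the write-up to be self-contained, simply add the hypothesis that $v$ lies on no $(\alpha,\beta)$-bichromatic cycle (or that $c$ is acyclic), and drop the sentence about the cycle case.
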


A color $\alpha \neq c(e)$ is a \emph{candidate} for an edge \emph{e} in $G$ with respect to a partial coloring $c$ of $G$ if none of the adjacent edges of \emph{e} are colored $\alpha$. A candidate color $\alpha$ is \emph{valid} for an edge \emph{e} if assigning the color $\alpha$ to \emph{e} does not result in any bichromatic cycle in $G$.

Let $e=(a,b)$ be an edge in $G$. Note that any color $\beta \notin F_a \cup F_b$ is a candidate color for the edge $ab$ in $G$ with respect to the partial coloring $c$ of $G$. A sufficient condition for a candidate color being valid is captured in the Lemma below (See Appendix for proof):

\begin{lem}
\label{lem:lem1}
\cite{MBSC2}
A candidate color for an edge $e=ab$, is valid if $(F_a \cap F_b) - \{c(a,b)\} = (S_{ab} \cap S_{ba}) = \emptyset$.
\end{lem}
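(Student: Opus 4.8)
The plan is to argue by contradiction. Suppose $\alpha$ is a candidate color for $e=ab$ but is \emph{not} valid, and let $c'$ be the coloring obtained from $c$ by assigning $\alpha$ to $e$ and leaving every other edge untouched. Since $\alpha$ is a candidate, no edge adjacent to $e$ carries color $\alpha$, so $c'$ is a proper partial coloring of $G$. Because $c'$ is not valid, $G$ contains a bichromatic cycle $C$ under $c'$. Now $c$ and $c'$ agree on $E \setminus \{e\}$, and $c$ is acyclic, so $C$ cannot avoid $e$; hence $C$ passes through $e$ and is an $(\alpha,\gamma)$-bichromatic cycle for some color $\gamma \neq \alpha$.

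Next I would locate $\gamma$ inside the forbidden set. Since $C$ is a properly colored cycle alternating between $\alpha$ and $\gamma$ and $c'(e)=\alpha$, the two edges of $C$ incident with $e$ — one at $a$ and one at $b$ — are both colored $\gamma$ under $c'$. Neither of these edges is $e$ itself, so each is already colored $\gamma$ under $c$; thus $\gamma \in F_a$ and $\gamma \in F_b$. Moreover $c$ is proper, so the colors on the edges at $a$ are pairwise distinct, and hence the color $\gamma$, which appears on an edge at $a$ different from $ab$, cannot equal $c(a,b)$; the same argument applies at $b$. Therefore $\gamma \in (F_a \cap F_b) - \{c(a,b)\}$.

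To finish, I would invoke the set identity $S_{ab} \cap S_{ba} = (F_b - \{c(a,b)\}) \cap (F_a - \{c(a,b)\}) = (F_a \cap F_b) - \{c(a,b)\}$ already recorded in the statement, which shows the hypothesized set $(S_{ab}\cap S_{ba})$ is nonempty — contradicting the assumption that it is $\emptyset$. Hence no bichromatic cycle arises and $\alpha$ is valid.

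I do not expect a real obstacle here: the lemma is essentially an unwinding of definitions. The only points needing care are the bookkeeping of how $F_a$ and $F_b$ behave when $e$ is (re)colored — in particular treating uniformly the case where $e$ was already colored and the case where it was uncolored, in which $c(a,b)$ is undefined and $\{c(a,b)\}$ is read as $\emptyset$ — and verifying that the repeated color $\gamma$ on the cycle is genuinely witnessed by edges incident to $a$ and to $b$ \emph{other than} $e$, which is what lets us conclude $\gamma \ne c(a,b)$.
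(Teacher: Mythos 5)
Your proof is correct: any new bichromatic cycle must use $e$, its two cycle-neighbours at $a$ and $b$ force the second colour $\gamma$ into $F_a\cap F_b$ with $\gamma\neq c(a,b)$ by properness, and the identity $S_{ab}\cap S_{ba}=(F_a\cap F_b)-\{c(a,b)\}$ yields the contradiction. This is the same definitional unwinding that the paper relegates to its appendix (citing \cite{MBSC2}), so no further comment is needed.
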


Now even if $S_{ab} \cap S_{ba} \neq \emptyset$, a candidate color $\beta$ may be valid. But if $\beta$ is not valid, then what may be the reason? It is clear that color $\beta$ is not $valid$ if and only if there exists $\alpha \neq \beta$ such that a ($\alpha$,$\beta$)-bichromatic cycle gets formed if we assign color $\beta$ to the edge $e$. In other words, if and only if, with respect to coloring $c$ of $G$ there existed a ($\alpha$,$\beta$,$a$,$b$) maximal bichromatic path with $\alpha$ being the color given to the first and last edge of this path. Such paths play an important role in our proofs. We call them $critical\ paths$. It is formally defined below: \newline

\noindent\textbf{Critical Path:} Let $ab \in E$ and $c$ be a partial coloring of $G$. Then a $(\alpha,\beta,$a$,$b$)$ maximal bichromatic path which starts out from the vertex $a$ via an edge colored  $\alpha$ and ends at the vertex $b$ via an edge colored $\alpha$ is called an $(\alpha,\beta,ab)$ critical path. Note that any critical path will be of odd length. Moreover the smallest length possible is three.

~~~~~~~~~
%

An obvious strategy to extend a valid partial coloring $c$ of $G$ would be to try to assign one of the candidate colors to an uncolored edge $e$. The condition that a candidate color being not valid for the edge $e$ is captured in the following fact.

\begin{fact}
\label{fact:fact2}
Let $c$ be a partial coloring of $G$. A candidate color $\beta$ is not $valid$ for the edge $e=(a,b)$ if and only if $\exists \alpha \in S_{ab} \cap S_{ba}$ such that there is a $(\alpha,\beta,ab)$  critical path in $G$ with respect to the coloring $c$.
\end{fact}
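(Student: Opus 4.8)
\noindent\textbf{Proof plan for Fact~\ref{fact:fact2}.} The plan is to prove the two implications separately, straight from the definitions, under the (implicit, standing) assumption that $c$ is an \emph{acyclic} partial coloring. First I would record the one observation that drives everything: since $c$ is acyclic and the only edge whose color changes is $e=(a,b)$, any bichromatic cycle created by (re)coloring $e$ must pass through $e$ and carry the new color $\beta$ on $e$; consequently such a cycle alternates between $\beta$ and some second color $\alpha\ne\beta$.

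\emph{The ``if'' direction.} Suppose some $\alpha\in S_{ab}\cap S_{ba}$ admits an $(\alpha,\beta,ab)$ critical path $P$. By definition $P$ joins $a$ to $b$, alternates between $\alpha$ and $\beta$, and has its first and last edges colored $\alpha$; hence $P$ has odd length, and $C:=P+e$ is an even cycle. Assigning $\beta$ to $e$ keeps the coloring proper, because the only edges that could clash with $e$ are those incident to $a$ or $b$ and none of them is colored $\beta$ (as $\beta$ is a candidate). Along $C$ the edge $e$ (colored $\beta$) is flanked by the two end-edges of $P$ (colored $\alpha$), so $C$ alternates $\alpha,\beta$; it is an $(\alpha,\beta)$-bichromatic cycle, and therefore $\beta$ is not valid.

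\emph{The ``only if'' direction.} Conversely, if $\beta$ is not valid then, $\beta$ being a candidate, coloring $e$ with $\beta$ is proper but creates a bichromatic cycle $C$; by the preliminary observation $C$ contains $e$ and uses a second color $\alpha\ne\beta$. Deleting $e$ from $C$ leaves an $(\alpha,\beta)$-bichromatic path $P$ from $a$ to $b$; the edges of $C$ meeting $a$ and meeting $b$ cannot be colored $\beta$ (properness of the new coloring, $e$ itself being $\beta$), so both are colored $\alpha$. Hence $\alpha\in F_a$ and $\alpha\in F_b$, and (otherwise $c$ would already be improper at $a$) $\alpha\ne c(a,b)$, so $\alpha\in S_{ab}\cap S_{ba}$. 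Finally $P$ is maximal with respect to $c$: an edge extending it at $a$ or at $b$ would be incident to $e$ and colored $\beta$, contradicting that $\beta$ is a candidate for $e$. Thus $P$ is an $(\alpha,\beta,ab)$ critical path.

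I do not expect a real obstacle here: the statement is essentially bookkeeping, combining the definition of a critical path with the fact that any newly created bichromatic cycle must run through $e$. The two points that need a little care are verifying that the reconstructed path $P$ is \emph{maximal} at its two endpoints — which is exactly what ``$\beta$ is a candidate'' supplies — and keeping track of the excluded color $c(a,b)$ in the sets $S_{ab},S_{ba}$ when $e$ happens to be already colored rather than uncolored.
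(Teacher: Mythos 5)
Your proof is correct and follows essentially the same route as the paper, which states Fact~\ref{fact:fact2} without a formal proof, relying on exactly this definitional unwinding (a new bichromatic cycle must pass through $e$, and deleting $e$ from it leaves a maximal $(\alpha,\beta)$ path whose end edges are forced to be colored $\alpha$ because $\beta$ is a candidate). Your two points of care --- maximality of the reconstructed path and the exclusion of $c(a,b)$ --- are handled correctly.
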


\noindent \textbf{Actively Present:} Let $c$ be a partial coloring of $G$. Let $a \in N_{G}(x)$ and let $c(x,a)=\alpha$. Let $\beta \in S_{xa}$. Color $\beta$ is said to be \emph{actively present} in a set $S_{xa}$ with respect to the edge $xy$, if there exists a $(\alpha,\beta,xy)$  critical path. When the edge $xy$ is understood in the context, we just say that $\beta$ is actively present in $S_{xa}$.

~~~~~

%

\noindent \textbf{Color Exchange:} Let $c$ be a partial coloring of $G$. Let $u,i,j \in V(G)$ and $ui,uj \in E(G)$. We define $Color\ Exchange$ with respect to the edge $ui$ and $uj$, as the modification of the current partial coloring $c$ by exchanging the colors of the edges $ui$ and $uj$ to get a partial coloring $c'$, i.e., $c'(u,i)=c(u,j)$, $c'(u,j)=c(u,i)$ and $c'(e)=c(e)$ for all other edges $e$ in $G$. The color exchange with respect to the edges $ui$ and $uj$ is said to be proper if the coloring obtained after the exchange is proper. The color exchange with respect to the edges $ui$ and $uj$ is $valid$ if and only if the coloring obtained after the exchange is acyclic. The following fact is obvious:

\begin{fact}
\label{fact:fact3}
Let $c'$ be the partial coloring obtained from a valid partial coloring $c$ by the color exchange with respect to the edges $ui$ and $uj$. Then the partial coloring $c'$ will be proper if and only if $c(u,i) \notin S_{uj}$ and $c(u,j) \notin S_{ui}$.
\end{fact}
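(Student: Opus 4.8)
The plan is to prove the equivalence directly by analysing exactly which adjacencies can become monochromatic after the swap. Write $\alpha = c(u,i)$ and $\beta = c(u,j)$, so that after the color exchange the edge $ui$ carries color $\beta$, the edge $uj$ carries color $\alpha$, and every other edge retains its color. Since $c'$ differs from $c$ only on the two edges $ui$ and $uj$, the only pairs of adjacent edges that could possibly receive the same color under $c'$ (but did not under $c$) are those in which at least one member is $ui$ or $uj$. Thus the entire verification of properness reduces to checking the adjacencies of these two edges, and it suffices to organise that check by common endpoint.

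First I would dispose of the adjacencies at the shared endpoint $u$. The set of edges incident on $u$ is the same under $c$ and $c'$, and the exchange merely permutes the two colors $\alpha$ and $\beta$ among them while fixing all the others. Since $c$ is proper, $\alpha$ and $\beta$ were the unique colors at $u$ carried by $ui$ and $uj$ respectively, and appeared on no other edge at $u$; hence after interchanging them the colors at $u$ remain pairwise distinct. Therefore no impropriety can arise at $u$, and the only places where a new conflict could appear are the other endpoints $i$ and $j$.

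The remaining analysis is then a pair of symmetric translations into the $S$-notation. A conflict at $i$ occurs precisely when some edge $iz$ with $z \neq u$ already carried the color $\beta$ that $ui$ has just received. Because $c$ is proper, the colors on the edges at $i$ other than $ui$ form exactly the set $F_i - \{\alpha\} = S_{ui}$, so properness at $i$ holds if and only if $c(u,j) = \beta \notin S_{ui}$. Symmetrically, the colors on the edges at $j$ other than $uj$ form $F_j - \{\beta\} = S_{uj}$, and properness at $j$ holds if and only if $c(u,i) = \alpha \notin S_{uj}$. Combining the two conditions, $c'$ is proper if and only if both $c(u,i) \notin S_{uj}$ and $c(u,j) \notin S_{ui}$, as claimed. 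I expect no genuine obstacle here: the statement is essentially bookkeeping, and the only points demanding care are the vertex-$u$ argument and the off-by-one convention in the definition $S_{ab} = F_b - \{c(a,b)\}$, which is exactly what excludes the edges $ui$ and $uj$ themselves from the two relevant color sets.
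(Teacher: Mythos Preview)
Your proof is correct; the paper gives no proof at all, simply stating ``The following fact is obvious'' and moving on. Your detailed verification---localising possible conflicts to $u$, $i$, and $j$, dispensing with $u$ by the permutation argument, and then translating the conditions at $i$ and $j$ into the $S$-notation---is exactly the routine check the paper leaves to the reader.
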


~~~~~~

\noindent The color exchange is useful in breaking some critical paths as is clear from the following lemma (See Appendix for proof):

\begin{lem}
\label{lem:lem2}
\cite{MBSC2}, \cite{MBSC4}
Let $u,i,j,a,b \in V(G)$, $ui,uj,ab \in E$. Also let $\{\lambda,\xi\} \in C$ such that $\{\lambda,\xi\} \cap \{c(u,i),c(u,j)\} \neq \emptyset$ and $\{i,j\} \cap \{a,b\} = \emptyset$. Suppose there exists an ($\lambda$,$\xi$,$ab$)-critical path that contains vertex $u$, with respect to a valid partial coloring $c$ of $G$. Let $c'$ be the partial coloring obtained from $c$ by the color exchange with respect to the edges $ui$ and $uj$. If $c'$ is proper, then there will not be any ($\lambda$,$\xi$,$ab$)-critical path in $G$ with respect to the partial coloring $c'$.
\end{lem}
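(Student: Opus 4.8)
The plan is to argue by contradiction: assume $c'$ is proper and that, nonetheless, there is still an $(\lambda,\xi,ab)$-critical path $P'$ with respect to $c'$, and derive a contradiction with the properness of $c$ or of $c'$. Let $P$ denote the given $(\lambda,\xi,ab)$-critical path with respect to $c$; it is a maximal $(\lambda,\xi)$-bichromatic path through $u$, and by Fact~\ref{fact:fact1} the unique one through $u$. First I would identify which edge at $u$ the exchange touches. When $u$ is an internal vertex of $P$, the two edges of $P$ at $u$ are exactly the unique $\lambda$-coloured and the unique $\xi$-coloured edge at $u$ of the proper colouring $c$; since $\{\lambda,\xi\}\cap\{c(u,i),c(u,j)\}\neq\emptyset$, one of $ui,uj$ therefore lies on $P$. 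Relabel so that this edge is $ui$; note $c(u,i)\in\{\lambda,\xi\}$. The exchange moves the colour $c(u,i)$ onto $uj$, assigns $ui$ the colour $c(u,j)$, and leaves every other edge unchanged.

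Next I would show that $P'$ must pass through $u$. The colourings $c$ and $c'$ differ only on $ui$ and $uj$, both of which are incident to $u$, so any $(\lambda,\xi)$-bichromatic path avoiding $u$ has the same colours under $c$ and $c'$. Hence, if $P'$ avoided $u$, it would also be an $(\lambda,\xi)$-bichromatic path from $a$ to $b$ with respect to $c$, and by Fact~\ref{fact:fact1} it would lie inside the unique maximal $(\lambda,\xi)$-bichromatic path through $a$, namely $P$; but a sub-path of $P$ that shares both endpoints with $P$ is $P$ itself, contradicting that $P'$ avoids $u$ while $P$ does not. So $u\in P'$.

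The crux is a short analysis of the colours at $u$ after the exchange. If $c(u,j)\in\{\lambda,\xi\}$, then $\{c(u,i),c(u,j)\}=\{\lambda,\xi\}$ (the two colours being distinct, as $c$ is proper), so $ui$ and $uj$ are precisely the two edges of $P$ at $u$; exchanging their colours would then give the $P$-neighbour $i$ of $u$ two equally coloured edges (here $i\notin\{a,b\}$, so $i$ is internal to $P$), contradicting the properness of $c'$. Hence $c(u,j)\notin\{\lambda,\xi\}$. In that case, under $c'$ the edge $uj$ carries a colour from $\{\lambda,\xi\}$ while $ui$ does not, and tracing $P'$ through $u$ forces $uj\in P'$ --- otherwise the two edges of $P'$ at $u$, together with $ui$, would be three distinct edges at $u$ coloured from $\{\lambda,\xi\}$ already under $c$, violating the properness of $c$. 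Now cut $P'$ at the edge $uj$ into the sub-path from $u$ to one end of $ab$ and the sub-path from $j$ to the other end; neither sub-path uses $ui$ or $uj$, so each is $(\lambda,\xi)$-bichromatic under $c$ as well and, by Fact~\ref{fact:fact1}, lies inside $P$ (the unique maximal such path through $u$, respectively through the relevant endpoint of $ab$). In particular $j$ is a vertex of $P$, internal to it since $j\notin\{a,b\}$, and $j$ is not a $P$-neighbour of $u$ (otherwise the $P$-edge $uj$ would be coloured from $\{\lambda,\xi\}$ under $c$, contradicting $c(u,j)\notin\{\lambda,\xi\}$). But then the $P$-edge at $j$ coloured $c(u,i)$ and the edge $uj$ (also coloured $c(u,i)$ under $c'$) are two distinct edges of the same colour at $j$, again contradicting the properness of $c'$. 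This is the desired contradiction.

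The step I expect to be the main obstacle is this last one: it requires keeping precise track of which edges incident to $u$, and to $j$, carry which of $\lambda$ and $\xi$ before and after the exchange, and repeatedly combining Fact~\ref{fact:fact1} with the properness of both $c$ and $c'$ to rule out each bad configuration; the contradiction set-up and the argument that $P'$ meets $u$ are routine by comparison. The remaining technicality is the boundary case $u\in\{a,b\}$, which is not excluded by the hypotheses; it is treated by the same bookkeeping, the only new phenomenon being that there the exchange may \emph{lengthen} $P$ into a bigger $(\lambda,\xi)$-bichromatic path rather than breaking it, which still destroys every critical path from $a$ to $b$ because it makes $u$ an internal vertex of its unique maximal $(\lambda,\xi)$-bichromatic path.
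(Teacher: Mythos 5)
Your argument is correct and follows essentially the standard route for this lemma (the paper itself defers the proof to an appendix and to \cite{MBSC2}, \cite{MBSC4}): determine which of $ui,uj$ lies on the original critical path, show via Fact~\ref{fact:fact1} and the properness of $c$ that any surviving $(\lambda,\xi,ab)$-critical path under $c'$ would be forced to use the edge $uj$, deduce that $j$ is an internal vertex of the original path $P$, and obtain two edges coloured $c(u,i)$ meeting at $j$ under $c'$, contradicting properness. One minor inaccuracy in your closing remark: in the boundary case $u\in\{a,b\}$ the exchange does \emph{not} make $u$ internal to a longer bichromatic path --- maximality of $P$ gives $\xi\notin F_u$, and the exchange leaves $F_u$ unchanged as a set, so $u$ stays an endpoint of its maximal $(\lambda,\xi)$-path; but this remark is inessential, since your main bookkeeping ($P'$ must leave $u=a$ along $uj$, hence $j$ is internal to $P$, hence the clash at $j$) applies verbatim in that case.
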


\noindent The following is the main result of \cite{MBSC2}. We will need this result for proving our theorems.

\begin{lem}
\label{lem:lem3}
\cite{MBSC2}
Let $G$ be a connected graph on $n$ vertices, $m \le 2n-1$ edges and maximum degree $\Delta \le 4$, then $a'(G)\le 6$.
\end{lem}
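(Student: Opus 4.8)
The plan is to argue by contradiction. Suppose the lemma fails and let $G=(V,E)$ be a counterexample with the fewest edges, so that $G$ is connected, $\Delta(G)\le 4$, $m\le 2n-1$, $a'(G)>6$, while every graph with fewer edges meeting these three conditions is acyclically $6$-edge-colorable. If $G$ were a forest, then any proper edge coloring would already be acyclic and $\chi'(G)\le\Delta+1\le 5$, so $G$ must contain a cycle. The first useful observation is that for \emph{every} edge $e$ of $G$, each connected component of $G-e$ again satisfies the hypotheses: it has fewer than $m$ edges and maximum degree at most $4$, and if $e=xy$ is a bridge with $G-e=G_1\sqcup G_2$ and $x\in G_1$, then in $G_1$ the vertex $x$ has degree at most $3$, so the degree sum of $G_1$ is at most $4(|V(G_1)|-1)+3=4|V(G_1)|-1$, giving $|E(G_1)|\le 2|V(G_1)|-1$ (and symmetrically for $G_2$). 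Hence by minimality $G-e$ always admits an acyclic edge coloring with the colour set $C=\{1,\dots,6\}$, and the task in each case below is to extend such a coloring $c$ to the deleted edge.

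Next I would clear away the trivial configurations. If $G$ has a vertex $v$ of degree $1$ with neighbour $u$, then the other $n-1$ vertices have degree sum $2m-1\le 4(n-1)$, which forces $m\le 2n-2$; thus $G-v$ satisfies the hypotheses, and in an acyclic $6$-coloring of $G-v$ one may colour $uv$ with any of the at least $6-3=3$ colours absent at $u$, since no bichromatic cycle can pass through a degree-$1$ vertex. This contradicts the choice of $G$, so $\delta(G)\ge 2$; and since $2m\le 4n-2$ forbids all degrees from equalling $4$, $G$ has a vertex $v$ with $\deg_G(v)\in\{2,3\}$. Fix such a $v$ (and note that the sub-case $\Delta(G)\le 3$ is already covered by the known bound $a'\le 5$ for subcubic graphs, so effectively $\Delta(G)=4$).

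The heart of the proof is the two remaining cases. If $\deg_G(v)=2$ with $N_G(v)=\{a,b\}$, delete $e=vb$ and take an acyclic $6$-coloring $c$ of $G-e$. Since $|F_v|=1$ and $|F_b|\le 3$, at least two colours are candidates for $vb$, and by Lemma~\ref{lem:lem1} a candidate $\beta$ is valid unless $S_{vb}\cap S_{bv}\ne\emptyset$; here $S_{bv}=\{c(v,a)\}$, so by Fact~\ref{fact:fact2} the only possible obstruction is a $(c(v,a),\beta,vb)$ critical path, and by Fact~\ref{fact:fact1} at most one such path exists for each $\beta$. If every candidate is blocked one recolours the edge $va$ with a colour of $C\setminus F_a$ (legal because $v$ is a leaf of $G-e$, so no cycle runs through $va$) and/or applies a Color Exchange at $a$ or $b$; Lemma~\ref{lem:lem2} guarantees that a proper exchange through $v$ destroys the offending critical path, and a short analysis of the few surviving sub-cases (distinguished by $|F_a|$, $|F_b|$ and which colours they share) then yields a valid colour for $vb$. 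The case $\deg_G(v)=3$ with $N_G(v)=\{a,b,d\}$ is treated in the same spirit after deleting $vd$: now $|F_v|=2$ and $|F_d|\le 3$, so only one candidate colour is guaranteed, and one needs a longer, carefully ordered sequence of colour exchanges at $v$, at $d$, and at neighbours of $a$ and $b$, each step invoking Lemma~\ref{lem:lem2} to remove one critical path at a time while Fact~\ref{fact:fact3} keeps the coloring proper.

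The main obstacle is exactly this last case. When $\deg_G(v)=3$ and the single guaranteed candidate colour for $vd$ is blocked, one must show that some admissible chain of exchanges eventually frees a colour, and the real danger is a neighbourhood around $v$ so tightly interlinked that every proper exchange merely relocates a critical path rather than eliminating the last one. Excluding these configurations is where both hypotheses are used: $\Delta(G)\le 4$ caps $|F_a|,|F_b|,|F_d|$ and hence the number of critical paths that can coexist, while the bound $m\le 2n-1$ — which persists under the edge deletions above and supplies a further low-degree vertex whenever one is needed — rules out the densest obstructions. It is convenient to sharpen the extremal setup by also choosing, among all acyclic $6$-colorings of $G-e$, one minimising a suitable measure of the obstruction (e.g. the number of critical paths at $v$), which forces extra structure on the stubborn sub-cases. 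Once all sub-cases are exhausted, $c$ extends to an acyclic $6$-edge-coloring of $G$, contradicting the choice of $G$; hence $a'(G)\le 6$. \bqed
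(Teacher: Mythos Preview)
The paper does not prove Lemma~\ref{lem:lem3} at all: it is quoted verbatim as ``the main result of \cite{MBSC2}'' and used as a black box to dispose of the case $\Delta\le 4$ in Assumption~\ref{asm:asm1}. So there is no ``paper's own proof'' to compare your attempt against; any comparison would have to be with the cited paper, whose argument spans many pages of case analysis.

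As for your proposal on its own terms, it is a reasonable \emph{plan} but not a proof. The set-up (minimal counterexample, $\delta\ge 2$, reduce to $\Delta=4$ via the subcubic bound, delete an edge at a low-degree vertex and try to extend) is exactly the standard template and is almost certainly how \cite{MBSC2} proceeds. However, the two places where the real work happens are precisely the places you wave your hands: in the $\deg(v)=2$ case you write ``a short analysis of the few surviving sub-cases \dots\ then yields a valid colour'', and in the $\deg(v)=3$ case you write ``one needs a longer, carefully ordered sequence of colour exchanges''. Neither of these is carried out. In particular, for $\deg(v)=3$ with all three neighbours of degree~$4$, the single guaranteed candidate colour can indeed be blocked, and showing that some chain of exchanges always succeeds is the substance of the lemma; appealing generically to Lemma~\ref{lem:lem2} and Fact~\ref{fact:fact3} does not establish that such a chain exists or terminates. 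Your closing paragraph even concedes this (``the real danger is a neighbourhood around $v$ so tightly interlinked \dots''), but then asserts without argument that the hypotheses rule it out. Until that case analysis is written down, the proposal has a genuine gap.
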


\section{Proof of Theorem 1}

\begin{proof}
A well-known strategy that is used in proving coloring theorems in the context of sparse graphs is to make use of induction combined with the fact that there are some \emph{unavoidable} configurations in any such graphs. Typically the existence of these \emph{unavoidable} configurations are proved using the so called \emph{charging and discharging argument} (See \cite{Salavatipour}, for a comprehensive exposition). $Lemma$ \ref{lem:lem4} will establish that one of the five configurations $B1, \ldots,B5$ is unavoidable in any graph $G$ that satisfies $Property\ A$. Loosely speaking, for the purpose of this paper, a \emph{configuration} is a subset $Q$ of $V$, where one special vertex $v \in Q$ is called the $pivot$ of the configuration and $Q = \{v\} \cup N(v)$. Besides $v$, one more vertex in $Q$ will be given a special status: This vertex, called the \emph{co-pivot} of the configuration, is selected such that it is a vertex of smallest degree in $N(v)$ and will be denoted by $u$. Moreover the vertices of $N(v)$ will be partitioned into two sets namely $N'(v)$ and $N''(v)$. The members of $N'(v)$ and $N''(v)$ are explicitly defined for each configuration.

\begin{lem}
\label{lem:lem4}
Let G be a simple graph such that $\vert E(G)\vert \le 2\vert V(G)\vert-1$ with minimum degree $\delta \ge 2$. Then there exists a vertex $v$ in $G$ with $k=deg(v)$ neighbours such that at least one of the following is true:
\begin{enumerate}
\item[(B1)] $k=2$,
\item[(B2)] $k=3$ with $N(v)=\{u,v_1,a\}$ such that $deg(u),deg(v_1) \le 4$. $N'(v)=\{u,v_1\}$ and $N''(v)=\{a\}$,
\item[(B3)] $k=5$ with $N(v)=\{u,v_1,v_2,a,b\}$ such that $deg(u),deg(v_1),deg(v_2) \le 3$. $N'(v)=\{u,v_1,v_2\}$ and $N''(v)=\{a,b\}$,
\item[(B4)] $k=6$ with $N(v)=\{u,v_1,v_2,v_3,v_4,a\}$ such that $deg(u),deg(v_1),deg(v_2),deg(v_3),deg(v_4) \le 3$. $N'(v)=\{u,v_1,v_2,v_3,v_4\}$ and $N''(v)=\{a\}$,
\item[(B5)] $k \ge 7$ with $N(v) = \{u,v_1,v_2,\dots,v_{k-1}\}$ such that $deg(u),deg(v_1),deg(v_2), \ldots ,deg(v_{k-1}) \le 3$. $N'(v)=\{u,v_1,v_2,\ldots, v_{k-1}\}$.
\end{enumerate}
\end{lem}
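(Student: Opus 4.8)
The plan is to use a discharging argument on the vertices of $G$. We may assume $\delta(G) \ge 2$ by hypothesis, and we want to show that if none of the configurations $B1,\dots,B5$ occurs, then $|E(G)| > 2|V(G)|-1$, contradicting the assumption. Assign to each vertex $v$ the initial charge $\mathrm{ch}(v) = \deg(v) - 4$; then $\sum_{v} \mathrm{ch}(v) = 2|E(G)| - 4|V(G)|$, so the hypothesis $|E(G)| \le 2|V(G)|-1$ is equivalent to $\sum_v \mathrm{ch}(v) \le -2$. I will redistribute the charge without changing the total and then argue that, in the absence of $B1,\dots,B5$, every vertex ends with charge $\ge 0$, forcing $\sum_v \mathrm{ch}(v) \ge 0$ — the desired contradiction.

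Since $B1$ is excluded, there are no degree-$2$ vertices, so the vertices of negative initial charge are exactly those of degree $3$ (charge $-1$); vertices of degree $4$ are neutral and vertices of degree $\ge 5$ have positive charge to give away. The natural discharging rule is: every vertex $v$ of degree $\ge 5$ sends a charge of $\tfrac12$ along each edge to each neighbour of degree $3$ (one can also phrase it as sending $\tfrac12$ to every neighbour in $N'(v)$ in the relevant configurations). I then need the following two facts. First, a degree-$3$ vertex $w$ whose charge must be rescued cannot have two neighbours of degree $\le 4$: if it did, taking $v=w$ with the two low-degree neighbours as $N'(v)$ and the third as $N''(v)$ would realize configuration $B2$ (with $u$ the smaller-degree one). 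Hence every degree-$3$ vertex of negative residual charge has at least two neighbours of degree $\ge 5$, each contributing $\tfrac12$, so its charge rises to at least $-1 + 2\cdot\tfrac12 = 0$. Second, I must check that every vertex $v$ of degree $k \ge 5$ still has nonnegative charge after giving away $\tfrac12$ per low-degree neighbour: $v$ can afford to send $\tfrac12$ to at most $2(k-4)$ neighbours of degree $3$. If $v$ has more than $2(k-4)$ neighbours of degree $\le 3$ — which for $k=5$ means $\ge 3$, for $k=6$ means $\ge 5$, and for $k \ge 7$ means $\ge k-1$ neighbours of degree $\le 3$ — then $v$ together with those neighbours realizes $B3$, $B4$, or $B5$ respectively (choosing $u$ to be a smallest-degree neighbour and $N'(v)$, $N''(v)$ as prescribed). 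So the exclusion of $B3,B4,B5$ guarantees $v$ retains charge $\ge 0$.

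Putting this together: after discharging, every vertex has charge $\ge 0$, hence $2|E(G)| - 4|V(G)| = \sum_v \mathrm{ch}(v) \ge 0$, i.e. $|E(G)| \ge 2|V(G)|$, contradicting $|E(G)| \le 2|V(G)|-1$. Therefore at least one of $B1,\dots,B5$ must occur.

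The main obstacle I anticipate is tuning the discharging rule so that the thresholds match the configurations exactly. The charge $\deg(v)-4$ and the transfer amount $\tfrac12$ are chosen precisely so that a degree-$k$ vertex can rescue exactly $2(k-4)$ neighbours, which is why the degree bounds in $B3$ ($k=5$, up to $3$ low neighbours), $B4$ ($k=6$, up to $5$), and $B5$ ($k\ge 7$, essentially all) read as they do; getting these off by one would break the argument. A secondary subtlety is the boundary case $k=4$ in $B2$: a degree-$3$ vertex receives charge only from neighbours of degree $\ge 5$, not degree $4$, so $B2$ correctly requires two neighbours of degree $\le 4$ (the vertex is unrescuable precisely when it has at most one neighbour of degree $\ge 5$). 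One should also double-check that a degree-$3$ vertex needing rescue indeed has the "two neighbours of degree $\ge 5$" it needs and not merely "two neighbours of degree $\ge 4$"; this is exactly the content of excluding $B2$, so the bookkeeping closes.
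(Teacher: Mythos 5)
Your proposal is correct and is essentially the paper's own proof: the same initial charge $\deg(v)-4$, the same rule of sending $\tfrac12$ from each vertex of degree at least $5$ to each degree-$3$ neighbour, and the same case analysis showing that excluding $B1$--$B5$ forces every final charge to be nonnegative, contradicting $\sum_v(\deg(v)-4)\le -2$. One harmless slip: for $k\ge 7$, ``more than $2(k-4)$ low-degree neighbours'' means at least $2k-7$ (i.e.\ all $k$ neighbours when $k=7$ and an impossibility when $k\ge 8$), not ``$\ge k-1$''; the argument still closes because $B5$ requires all $k$ neighbours to have degree at most $3$, and its exclusion leaves charge $\ge (k-7)/2\ge 0$.
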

\begin{proof}
We use the discharging method to prove the lemma. Let $G=(V,E)$, $\delta \ge 2$, $\vert V \vert =n$ and $\vert E \vert =m \le 2n-1$. We define  a mapping $\phi:V \longmapsto \mathbbm{R}$ using the rule $\phi(v)= deg(v)-4$ for each $v \in V$. The value $\phi(v)$ is called the charge on the vertex $v$. Since $m \le 2n-1$, it is easy to see that $\sum_{v \in V}\phi(v) \le -2$. Now we redistribute the charges on the vertices using the following rule. (This procedure is usually known as \emph{discharging}: Note that the total charge has to remain same after the discharging.)
\begin{itemize}
\item If vertex $v$ has degree at least 5, then it gives a charge of $\frac{1}{2}$ to each of its 3-degree neighbours.
\end{itemize}

After $discharging$, each vertex $v$ has a new charge $\phi'(v)$. Now since the total charge is conserved, we have $\sum_{v \in V}\phi(v) = \sum_{v \in V}\phi'(v) \le -2$. Now suppose the graph $G$ has  none of the configurations $B1,\dots,B5$. Then we will show that for each vertex $v$ of $G$, $\phi'(v) \ge 0$ and therefore $\sum_{v \in V}\phi'(v) \ge 0$, a contradiction. Since $G$ does not have configuration $B1$, we have $\delta \ge 3$. Now we calculate the charge on each vertex $v$ of $G$ as follows:

\begin{itemize}
\item If $deg(v)=3$: Since $G$ does not have configuration $B2$, at least two of the neighbours have degree at least 5. Thus $v$ receives a charge of $\frac{1}{2}$ each from at least two of its neighbours. Thus $\phi'(v) \ge deg(v)-4+2 \cdot \frac{1}{2} = 0$.

\item If $deg(v)=4$: A four degree vertex does not give or receive any charge. Thus $\phi'(v)=\phi(v)=deg(v)-4=0$.

\item If $deg(v)=5$: Since $G$ does not have configuration B3, at most two of the neighbours have degree 3. Thus $v$ gives a charge of $\frac{1}{2}$ each to at most two of its neighbours. Thus $\phi'(v) \ge deg(v)-4-2 \cdot \frac{1}{2} = 0$.

\item If $deg(v)=6$: Since $G$ does not have configuration B4, at most four of the neighbours have degree 3. Thus $v$ gives a charge of $\frac{1}{2}$ each to at most four of its neighbours. Thus $\phi'(v) \ge deg(v)-4-4 \cdot \frac{1}{2} = 0$.

\item If $deg(v)\ge 7$: Since $G$ does not have configuration B5, at most $deg(v)-1$ of the neighbours have degree 3. Thus $v$ gives a charge of $\frac{1}{2}$ each to at most $deg(v)-1$ of its neighbours. Thus $\phi'(v) \ge deg(v)-4-(deg(v)-1) \cdot \frac{1}{2} = \frac{1}{2}(deg(v)-7) \ge 0$.
\end{itemize}

Thus we have established that $\phi'(v) \ge 0$, $\forall v \in V$ and therefore $\sum_{v \in V}\phi'(v) \ge 0$, a contradiction.
\end{proof}

~~~~~

We prove the theorem by way of contradiction. Let $G$ be a minimum counter example (with respect to the number of edges) for the theorem statement among the graphs satisfying $Property\ A$. Clearly $G$ is $2$-$connected$ since if there are cut vertices in $G$, the acyclic edge coloring of the blocks $G_1,G_2,\ldots, G_k$ of $G$ can easily be extended to $G$ (Note that each block satisfies the $Property\ A$ since they are subgraphs of $G$). Thus we have, $\delta(G)\ge 2$. Also from $Lemma$ \ref{lem:lem3}, we know that $a'(G) \le \Delta +3$, when $\Delta \le 4$. Therefore we can assume that $\Delta \ge 5$. Thus we have,

\begin{asm}
\label{asm:asm1}
For the minimum counter example $G$, $\delta(G) \ge 2$ and $\Delta(G) \ge 5$.
\end{asm}

\noindent By $Lemma$ \ref{lem:lem4}, graph $G$ has a vertex $v$, such that it is the pivot of one of the configurations $B1,\dots,B5$. We present the proof in two parts based on the configuration that $v$ belongs to. The first part deals with the case when $G$ has a vertex $v$ that belongs to configuration $B2$, $B3$, $B4$ or $B5$ and the second part deals with the case when $G$ does not have a vertex $v$ that belongs to configuration $B2$, $B3$, $B4$ or $B5$.

\subsection{There exists a vertex $v$ that belongs to configuration $B2$, $B3$, $B4$ or $B5$}

Let $v$ be a vertex such that it is the pivot of one of the configurations $B2,\ldots,B5$ and let $u$ be the co-pivot. Since $G$ is a minimum counter example, the graph $G-\{vu\}$ is acyclically edge colorable using $\Delta +3$ colors. Let $c'$ be a valid coloring of $G-\{vu\}$ and hence a partial coloring of $G$. We now try to extend $c'$ to a valid coloring of $G$. With respect to the partial coloring $c'$ let $F'_{v}(c')=\{c'(v,x) \vert x \in N'(v)\}$ and $F''_{v}(c')=\{c'(v,x) \vert x \in N''(v)\}$ i.e., $F''_{v}=F_v-F'_v$.

\begin{clm}
\label{clm:clm1}
With respect to any valid coloring $c'$ of $G-\{uv\}$, $\vert F_{u} \cap F_{v} \vert \ge 1$
\end{clm}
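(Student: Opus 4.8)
The plan is to argue by contradiction from the minimality of $G$: suppose there is a valid coloring $c'$ of $G-\{uv\}$ with $F_u \cap F_v = \emptyset$, and show that $c'$ can then be extended to a valid coloring of $G$ with $\Delta+3$ colors, contradicting the choice of $G$.

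The first step is a counting estimate on $|F_u \cup F_v|$. Since the edge $uv$ has been deleted, in the partial coloring $c'$ the vertex $v$ has exactly $deg_G(v)-1 = k-1$ colored incident edges and $u$ has $deg_G(u)-1$ colored incident edges, so $|F_v| \le k-1$ and $|F_u| \le deg_G(u)-1$. The key point is that $u$ is the \emph{co-pivot}, hence a smallest-degree vertex in $N(v)$: $deg_G(u) \le 4$ in configuration $B2$ and $deg_G(u)\le 3$ in $B3,B4,B5$. A short check over the four configurations — using $k \le \Delta$, Assumption~\ref{asm:asm1} ($\Delta \ge 5$), and the fact that $k=6$ in $B4$ forces $\Delta \ge 6$ — shows that in each case $deg_G(u)+deg_G(v) \le \Delta+4$. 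Hence, if $F_u \cap F_v = \emptyset$, then $|F_u \cup F_v| = |F_u| + |F_v| \le (deg_G(u)-1)+(k-1) \le \Delta+2$.

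The second step is to invoke Lemma~\ref{lem:lem1}. Since $|F_u \cup F_v| \le \Delta+2$, the palette $\{1,2,\ldots,\Delta+3\}$ contains some color $\beta \notin F_u \cup F_v$, which is therefore a candidate color for $uv$. Because $uv$ is uncolored in $c'$, we have $S_{uv}(c') = F_v$ and $S_{vu}(c') = F_u$, so $S_{uv}\cap S_{vu} = F_u \cap F_v = \emptyset$; Lemma~\ref{lem:lem1} then guarantees that $\beta$ is valid for $uv$. Assigning $\beta$ to $uv$ produces a valid acyclic edge coloring of $G$ using at most $\Delta+3$ colors, the desired contradiction. Therefore $|F_u \cap F_v| \ge 1$ for every valid coloring $c'$ of $G-\{uv\}$.

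The only mildly delicate point is the inequality $deg_G(u)+deg_G(v)\le\Delta+4$, which is exactly where the ``co-pivot is a minimum-degree neighbour'' convention and the precise degree bounds built into configurations $B2$–$B5$ (together with $\Delta\ge5$) are needed; everything else is a direct application of Lemma~\ref{lem:lem1} and the minimality of $G$.
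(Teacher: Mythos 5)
Your proposal is correct and follows essentially the same route as the paper: assume $F_u\cap F_v=\emptyset$, bound $\vert F_u\cup F_v\vert$ using the degree constraints of configurations $B2$--$B5$ (the paper gets the slightly sharper bound $\Delta+1$, you get $\Delta+2$, and either leaves a candidate color in the palette of $\Delta+3$ colors), and then apply Lemma~\ref{lem:lem1} with $S_{uv}\cap S_{vu}=F_u\cap F_v=\emptyset$ to conclude the candidate is valid, contradicting minimality of $G$.
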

\begin{proof}
Suppose not. Then $S_{vu} \cap S_{uv} = \emptyset$ and by $Lemma$ \ref{lem:lem1}, all the candidate colors are valid for the edge $vu$. It is easy to verify that irrespective of which configuration $v$ belongs to, $\vert F_{u} \cup F_{v} \vert \le \Delta-1 +2 = \Delta +1$. Therefore there are at least two candidate colors for the edge $vu$ which are also valid, a contradiction to the assumption that $G$ is a counter example.
\end{proof}

\begin{clm}
\label{clm:clm2}
$\forall x \in N(v)$, we have $deg(x) \ge 3$.
\end{clm}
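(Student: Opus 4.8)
The key reduction is that, because the co-pivot $u$ is by definition a vertex of minimum degree in $N(v)$, it suffices to prove $deg(u)\ge 3$; this then forces $deg(x)\ge deg(u)\ge 3$ for every $x\in N(v)$. Since Assumption~\ref{asm:asm1} already gives $\delta(G)\ge 2$, I only need to contradict $deg(u)=2$. So assume $deg(u)=2$ with $N_{G}(u)=\{v,w\}$. In the partial coloring $c'$ of $G$, the vertex $u$ has the single colored edge $uw$, so $F_u=\{\alpha\}$ with $\alpha=c'(u,w)$, and by Claim~\ref{clm:clm1} we must have $\alpha\in F_v$; write $\alpha=c'(v,z)$ for the unique $z\in N(v)\setminus\{u\}$.

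Since $F_u\subseteq F_v$, the candidate colors for the uncolored edge $uv$ are exactly $C\setminus F_v$, of which there are $\Delta+4-deg(v)$, and $S_{uv}\cap S_{vu}=\{\alpha\}$. By Fact~\ref{fact:fact2}, a candidate $\beta$ is invalid for $uv$ only if there is an $(\alpha,\beta,uv)$ critical path, and any such path must leave $v$ through the unique $\alpha$-colored edge $vz$ and then continue through a $\beta$-colored edge at $z$; distinct invalid $\beta$'s require distinct such edges at $z$, so at most $deg(z)-1$ candidates are invalid. Hence if $z\in N'(v)$, where $deg(z)\le 4$ in configuration $B2$ and $deg(z)\le 3$ in $B3$, $B4$, $B5$, a short count carried out separately for each configuration (using $\Delta\ge 5$ and $deg(v)\le\Delta$) shows $\Delta+4-deg(v)>deg(z)-1$, so a valid color for $uv$ survives and extending $c'$ to $G$ contradicts the minimality of $G$.

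The remaining case is $z\in N''(v)$, which cannot occur in $B5$ since there $N''(v)=\emptyset$. Now $deg(z)$ may be as large as $\Delta$ and the above count is worthless, so I would first recolor the edge $uw$: in $G-\{uv\}$ the edge $uw$ is pendant at $u$, so no bichromatic cycle runs through it, and recoloring $uw$ with any color absent from the other edges at $w$ again yields a valid coloring of $G-\{uv\}$. There are at least $\Delta+4-deg(w)$ such colors, and hence at least $\Delta+4-deg(w)-|N''(v)|\ge 1$ of them also avoid the set $\{c'(v,z'):z'\in N''(v)\}$ (a separate check for $B2$, $B3$, $B4$). Recoloring $uw$ with such a color $\alpha'$, Claim~\ref{clm:clm1} again forces $\alpha'\in F_v$, and by the choice $\alpha'=c'(v,z')$ with $z'\in N'(v)\setminus\{u\}$, so we are back in the easy case and finish as above.

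I expect the genuine obstacle to be precisely the case $z\in N''(v)$: the neighbour of $v$ carrying the color $c'(u,w)$ can have arbitrarily large degree, so the number of critical paths cannot be bounded directly and one must perturb the coloring first; the recoloring of $uw$ is clean only because $deg(u)=2$ makes $uw$ pendant in $G-\{uv\}$. Everything else is bookkeeping that must be verified configuration by configuration for $B2$, $B3$, $B4$, $B5$, confirming that in each case the supply of available candidate or recoloring colors strictly exceeds the number of obstructions.
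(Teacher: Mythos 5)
Your proposal is correct and follows essentially the same strategy as the paper: reduce to $deg(u)=2$, use Claim~\ref{clm:clm1} to locate the color $c'(u,w)$ on some edge $vz$, recolor the (pendant-at-$u$) edge $uw$ when $z$ has large degree so that the shared color lands on a low-degree neighbour of $v$, and then count candidate colors against the at most $deg(z)-1$ critical paths through $vz$. The paper packages the same idea via the set $D_v$ of colors on edges to high-degree neighbours and a single final count of four candidates against $\vert S_{vv_1}\vert \le 3$, but the mechanism is identical.
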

\begin{proof}
Suppose not. Then by $Assumption$ \ref{asm:asm1}, it is clear that the degree of the co-pivot, $deg(u) = 2$. Let $N(u)=\{v,v'\}$. It is easy to verify from the description of configurations $B2-B5$ and the fact that $deg(u)=2$ that there can be at most two vertices in $N(v)$ whose degrees are greater than 3. By Claim \ref{clm:clm1}, we know that $c'(u,v') \in F_v$. Let $D_v =D_{v}(c')=\{c'(v,x) \vert deg_{G}(x) \le 3\}$. Clearly have $\vert D_v \vert \le 2$.

If $c'(u,v') \in F_v-D_v$, then let $c=c'$. Else if $c'(u,v') \in D_v$, then recolor edge $uv'$ using a color from $C-(S_{uv'} \cup D_v)$ to get a coloring $c$ (Note that $\vert C-(S_{uv'} \cup D_v) \vert \ge \Delta+3-(\Delta-1+2) = 2$ and since $u'$ is a pendant vertex in $G-\{uu'\}$ the recoloring is valid). Now if $c(u,v') \notin F_v$, then it a contradiction to Claim \ref{clm:clm1}. Thus $c(u,v') \in F_v-D_v$.

With respect to coloring $c$, let $c(u,v')=c(v,v_1)$. Now there are at least four candidate colors for the edge $uv$ since $\vert F_u \cup F_{v} \vert \le \Delta -1$. If none of them are valid then they all have to be actively present in $S_{vv_1}$, implying that $\vert S_{vv_1} \vert \ge 4$, a contradiction since $\vert S_{vv_1} \vert \le 3$. Thus there exists a color valid for the edge $uv$, a contradiction to the assumption that $G$ is a counter example.
\end{proof}

\begin{clm}
\label{clm:clm3}
$deg(v) > 3$. Therefore $v$ does not belong to $Configuration\ B2$.
\end{clm}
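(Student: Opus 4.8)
The plan is to argue by contradiction. Suppose $v$ is the pivot of Configuration $B2$, so $deg(v)=3$, $N(v)=\{u,v_1,a\}$ with co-pivot $u$, $N'(v)=\{u,v_1\}$, $N''(v)=\{a\}$, and $3\le deg(u),deg(v_1)\le 4$ by the hypothesis of $B2$ together with Claim~\ref{clm:clm2}. Since $G$ is a minimum counter example, $G-\{uv\}$ has a valid coloring $c'$ using $\Delta+3$ colors; write $\alpha_0=c'(v,a)$ and $\beta_0=c'(v,v_1)$, so $F_v=\{\alpha_0,\beta_0\}$, $|F_u|=deg(u)-1\le 3$, and $F_u\cap F_v\neq\emptyset$ by Claim~\ref{clm:clm1}.

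The first step is a reduction: it suffices to produce \emph{any} valid coloring $c$ of $G-\{uv\}$ with $c(v,a)\notin F_u(c)$. For such a $c$, either $c(v,v_1)\notin F_u(c)$ as well, in which case $F_u\cap F_v=\emptyset$ and by Lemma~\ref{lem:lem1} every candidate color is valid for $uv$ and we may color it, contradicting minimality; or $c(v,v_1)\in F_u(c)$, so that $F_u\cap F_v=\{c(v,v_1)\}$. In the latter case, by Fact~\ref{fact:fact2} a candidate color for $uv$ is invalid only if it lies on an $(\alpha,\beta,uv)$ critical path with $\alpha\in F_u\cap F_v$, i.e.\ $\alpha=c(v,v_1)$; such a path reaches $v$ along the edge $vv_1$, and the edge of the path incident to $v_1$ shows its color lies in $F_{v_1}\setminus\{c(v,v_1)\}$, so at most $deg(v_1)-1\le 3$ candidates are invalid. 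Since $|F_u\cup F_v|\le|F_u|+1\le 4$, there are at least $\Delta+3-4=\Delta-1\ge 4$ candidate colors, hence a valid one, again contradicting minimality. Taking $c=c'$ also settles the case $\alpha_0\notin F_u$, so henceforth $\alpha_0\in F_u$.

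It remains to manufacture such a $c$ from $c'$ by changing only the colors of $va$, $vv_1$, or of the at most three edges of $u$ in $G-\{uv\}$. The moves I would use are: (i) recoloring $va$ with a color $\alpha_1\notin F_u$, which leaves $F_u$ unchanged; (ii) the color exchange of $va$ and $vv_1$ (Fact~\ref{fact:fact3}, Lemma~\ref{lem:lem2}), which makes $c(v,a)=\beta_0$ and so works when $\beta_0\notin F_u$ and the exchange is proper and valid; and (iii) a color exchange among the edges of $u$, which relocates the $\alpha_0$-colored edge of $u$ without changing $F_u$ and, by Lemma~\ref{lem:lem2}, destroys the critical paths through $u$ obstructing (i). For move (i), the colors forbidden for $\alpha_1$ are those in $F_a\cup F_u\cup\{\beta_0\}$ (properness, and $\alpha_1\notin F_u$), together with, when $\beta_0\in F_a$, the at most $deg(v_1)-1\le 3$ colors of $F_{v_1}\setminus\{\beta_0\}$ whose choice would close a $(\beta_0,\alpha_1,va)$ bichromatic cycle; using the overlap $\alpha_0\in F_a\cap F_u$ and the bounds $|F_u|,|F_{v_1}|\le 4$ one checks that in the generic situation this forbidden set does not cover all $\Delta+3$ colors, and one could also simultaneously recolor $vv_1$ out of $F_u$. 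A few extremal configurations remain — roughly, $deg(a)=\Delta$ with $F_a$, $F_u$, $F_{v_1}$ pairwise nearly disjoint and the convenient exchanges improper — which I would finish by combining a color exchange at $u$ (via Lemma~\ref{lem:lem2}) with one of the moves above.

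The hard part will be exactly these tight cases: when $a$ attains the maximum degree $\Delta$ and the palettes around $u$, $v$ and $a$ together almost exhaust the available colors, the naive recoloring of $va$ has no legal target and the obvious color exchanges are improper, so one must carry out a two-stage modification — a color exchange at $u$ (legitimate since $u$ has at most three edges in $G-\{uv\}$), chosen so as to kill the obstructing critical paths, followed by a recoloring of $va$ — and then verify that the composite step stays valid, i.e.\ creates no new bichromatic cycle, while genuinely achieving $c(v,a)\notin F_u(c)$.
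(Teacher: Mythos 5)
Your opening reduction is sound and is in fact the same engine the paper runs on: if you can reach a valid coloring $c$ of $G-\{uv\}$ in which $F_u\cap F_v$ is either empty or equal to $\{c(v,v_1)\}$, then every $(\alpha,\beta,uv)$ critical path must end at $v$ through the edge $vv_1$, so at most $\vert S_{vv_1}\vert\le 3$ of the at least $\Delta-1\ge 4$ candidates can be blocked, and a valid color for $uv$ exists. The problem is that you never actually produce such a $c$; you only sketch three moves and explicitly defer the ``tight cases''. Worse, your principal move --- recoloring $va$ to a color outside $F_u$ --- is exactly the move that cannot be pushed through by counting, because $a$ is the one neighbour of $v$ in Configuration $B2$ with no degree bound: the colors forbidden for the new color on $va$ include $S_{av}$ (up to $\Delta-1$ colors), $c(v,v_1)$, the up to $3$ colors of $F_u\setminus\{c(v,a)\}$, and the up to $3$ colors of $S_{vv_1}$ that would close a bichromatic cycle through $vv_1$; this totals up to $\Delta+6>\Delta+3$, so the ``generic situation'' you invoke does not cover the case $deg(a)=\Delta$, and the promised rescue by a color exchange at $u$ is never specified or verified. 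As it stands the proof is incomplete precisely where you say the hard part is.

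The paper avoids this trap by never recoloring $va$: every edge it modifies is incident to a vertex of degree at most $4$. When $\vert F_u\cap F_v\vert=2$ it first locates a candidate $\alpha\notin S_{va}$ (possible since there are at least $\Delta$ candidates and $\vert S_{va}\vert\le\Delta-1$), then recolors $vv_1$ with a color outside $S_{vv_1}\cup F_u\cup F_v$ (a set of size at most $6<\Delta+3$), which removes the $v_1$-side obstruction and makes $\alpha$ valid even though $c(v,a)$ stays in $F_u$. When $\vert F_u\cap F_v\vert=1$ and the common color does not already sit on $vv_1$, it shows the only surviving case is $deg(u)=4$ with $S_{va}=S_{ux}=C-\{1,2,3,4\}$ pinned down exactly, and then recolors the edge $ux$ at $u$ with $c'(v,v_1)$, which is validated by Lemma~\ref{lem:lem1} because $S_{ux}\cap S_{xu}=\emptyset$. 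If you want to salvage your outline, you should replace the recoloring of $va$ with recolorings of $vv_1$ and of edges at $u$, and carry out the case analysis rather than asserting it.
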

\begin{proof}
Suppose $v$ belongs to $Configuration\ B2$. Let $N(v)=\{u,v_1,a\}$ such that $deg(u) \le 4$ and $deg(v_1) \le 4$. We also know from $Claim$ \ref{clm:clm2} that $deg(u) \ge 3$. Let $N(u)= \{x,y,v\}$, if $deg(u)=3$ and let $N(u)=\{x,y,z,v\}$, if $deg(u)=4$. Now the following cases occur:

\begin{itemize}
\item $\vert F_{u} \cap F_{v} \vert= 2$. \newline
Let $F_{u} \cap F_{v} = \{1,2\}$. Also let  $c(u,x)=c(v,a)=1$ and $c(u,y)=c(v,v_1)=2$. Since $\vert F_v \cup F_{u} \vert \le 3$, there are at least $\Delta$ candidate colors for the edge $vu$. If none of them are valid then all those colors are actively present either in $S_{vv_1}$ or $S_{va}$. Recalling that $\vert S_{va} \vert \le \Delta-1$ we can infer that there is at least one color $\alpha \in C-(F_v \cup F_{u})$ that does not belong to $S_{va}$. Note that $\vert S_{vv_1} \cup F_v \cup F_{u} \vert \le 6$ since $\vert S_{vv_1} \vert \le 3$ and $\vert F_v \cup F_{u} \vert \le 3$. Since $\Delta \ge 5$, we have $C- (S_{vv_1} \cup F_v \cup F_{u}) \neq \emptyset$. Recolor the edge $vv_1$ with the a color $\beta$ from $C- (S_{vv_1} \cup F_v \cup F_{u})$ to get a coloring $c$. The coloring $c$ is valid because if a bichromatic cycle gets created due to recoloring then it has to be a $(\beta,1)$ bichromatic cycle since $c(v,a)=1$, implying that there existed a $(1,\beta,vv_1)$ critical path with respect to coloring $c'$. Recall that color $\beta$ was not valid for the edge $vu$. Since $\beta \notin S_{vv_1}$, it implies that color $\beta$ was actively present in $S_{va}$. This implies that there existed a $(1,\beta,vu)$ critical path with respect to coloring $c'$. Therefore by $Fact$ \ref{fact:fact1}, there cannot exists a $(1,\beta,vv_1)$ critical path with respect to $c'$, a contradiction. Thus the coloring $c$ is valid. Now in $c$ we have $F_v \cap F_{u} =\{1\}$ and $\alpha \notin S_{va}$. Thus color $\alpha$ is valid for the edge $vu$, a contradiction to the assumption that $G$ is a counter example.

\item $\vert F_{u} \cap F_{v} \vert= 1$. \newline
Let $F_{u} \cap F_{v} = \{1\}$. Now if $c'(v,v_1) \in F_{u} \cap F_{v}$, then let $c''=c'$. Otherwise let $c(u,x)=c(v,a)=1$ and $c'(v,v_1)=4$. If $deg(u) \le 3$, then $\vert F_{v} \cup F_{u} \vert = 3$. Now there are at least $\Delta$ candidate colors for the edge $vu$. If none of them are valid then all the candidate colors are actively present in $S_{va}$, a contradiction since $\vert S_{va} \vert \le \Delta -1$. Thus there exists a valid color for the edge $vu$. Thus $deg(u) = 4$ and $\vert F_{v} \cup F_{u} \vert = 4$. Let $c(u,y)=2$ and $c(u,z)=3$. There are at least $\Delta-1$ candidate colors for the edge $vu$. If none of them are valid then all the candidate colors are actively present in $S_{va}$ and $S_{ux}$, implying that $S_{va}=S_{ux}= C- \{1,2,3,4\}$. Now recolor edge $ux$ using color $4$ to get a coloring $c''$. It is valid by $Lemma$ \ref{lem:lem1} since $S_{ux} \cap S_{xu}= \emptyset$ (Note that $S_{xu}(c')=\{2,3\}$).

In both cases we have $\{c''(v,v_1)\} = F_{u} \cap F_{v}$. If none of the colors are valid for the edge $vu$, then all the candidate colors are actively present in $S_{vv'}$, implying that $S_{vv_1}= C- \{1,2,3,4\}$. Since $\Delta \ge 5$, we have $\vert C-\{1,2,3,4\} \vert \ge 8-4= 4$. But $\vert S_{vv_1} \vert \le 3$, a contradiction. Thus there exists a color valid for the edge $vu$, a contradiction to the assumption that $G$ is a counter example.
\end{itemize}

\end{proof}


~~~~~
In view of $Claim$ \ref{clm:clm3} we have $deg(v) > 3$. Therefore $v$ belongs to configurations $B3$, $B4$ or $B5$. Now in view of $Claim$ \ref{clm:clm2}, we have the following observation:

\begin{obs}
\label{obs:obs1}
$deg(u) = 3$. Let $N(u) =\{v,w,z\}$.
\end{obs}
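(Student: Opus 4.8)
The plan is to combine $Claim$ \ref{clm:clm3} with the structural constraints built into the configurations $B3$, $B4$, $B5$ and with $Claim$ \ref{clm:clm2}. First, by $Claim$ \ref{clm:clm3} the pivot $v$ satisfies $deg(v) > 3$, so $v$ must be the pivot of one of the configurations $B3$, $B4$ or $B5$ (configuration $B1$ is ruled out since $\delta(G) \ge 2$ does not make $v$ the pivot of $B1$ here, and $B2$ is excluded by $Claim$ \ref{clm:clm3}). In each of the three remaining configurations the co-pivot $u$ — which by definition is a vertex of \emph{smallest} degree in $N(v)$ — has degree at most $3$: in $B3$ the vertices $u, v_1, v_2$ all have degree at most $3$; in $B4$ the vertices $u, v_1, v_2, v_3, v_4$ all have degree at most $3$; and in $B5$ every vertex of $N(v)$ has degree at most $3$. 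Hence in all cases $deg(u) \le 3$.

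On the other hand, $Claim$ \ref{clm:clm2} says every vertex of $N(v)$ has degree at least $3$, and $u \in N(v)$, so $deg(u) \ge 3$. Combining the two inequalities forces $deg(u) = 3$, and then writing $N(u) = \{v, w, z\}$ simply fixes names for the two neighbours of $u$ other than $v$, which are used throughout the subsequent case analysis. I do not expect any obstacle: the observation is a direct bookkeeping consequence of $Claims$ \ref{clm:clm2} and \ref{clm:clm3} together with the definition of the configurations, and it is singled out only because the equality $deg(u) = 3$ and the labels $w, z$ recur so often in what follows.
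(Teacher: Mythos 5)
Your proposal is correct and is exactly the argument the paper intends: the paper states the observation immediately after noting that Claim \ref{clm:clm3} forces $v$ into configuration $B3$, $B4$ or $B5$ (where the co-pivot has degree at most $3$) and that Claim \ref{clm:clm2} gives $deg(u)\ge 3$. The only cosmetic difference is that in this section $B1$ is excluded by the standing case hypothesis rather than by the reason you parenthesize, but that does not affect the argument.
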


In view of $Claim$ \ref{clm:clm1}, we have the following two cases:

\subsubsection{case 1: $\vert F_{v} \cap F_{u} \vert= 2$}

Note that in this case $F_u \subseteq F_v$. Let $F_u = F_u \cap F_v=\{1,2\}$. Let $c'(u,z)=1$ and $c'(u,w)=2$.

~~~~~

\begin{clm}
\label{clm:clm4}
$F_{u} \nsubseteq  F'_v$. Therefore $F''_v \cap F_{u} \neq \emptyset$.
\end{clm}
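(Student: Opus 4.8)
The plan is to argue by contradiction: suppose $F_u \subseteq F'_v$, i.e.\ both colors $1 = c'(u,z)$ and $2 = c'(u,w)$ appear on edges from $v$ to vertices of $N'(v)$. Recall that in Case~1 we have $F_u = F_u \cap F_v = \{1,2\}$, and that every vertex of $N'(v)$ other than $u$ has degree at most $3$. Since $1,2 \in F'_v$, write $c'(v,v_i) = 1$ and $c'(v,v_j) = 2$ for two distinct vertices $v_i, v_j \in N'(v) \setminus \{u\}$ (these exist because $u$ itself contributes the edge $uv$ which is uncolored, so the colors $1,2$ on edges at $v$ must come from other neighbours in $N'(v)$). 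The degrees of $v_i$ and $v_j$ are at most $3$, so $|S_{vv_i}| \le 2$ and $|S_{vv_j}| \le 2$.

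The main step is then a counting argument on candidate colors for the edge $uv$. First I would bound $|F_v \cup F_u| = |F_v| \le \deg(v) - 1 = k - 1$ (the edge $uv$ is uncolored), which leaves at least $(\Delta + 3) - (k-1) = \Delta - k + 4$ candidate colors for $uv$. If none of these is valid, then by Fact~\ref{fact:fact2} each such candidate color is actively present in $S_{vx}$ for some $x \in N(v)$ — but since $F_u \subseteq F_v$, the only colors that can host a critical path through $v$ for the edge $uv$ are $1$ and $2$ (these are the colors on edges at $v$ that also lie in $F_u$, so a $(\alpha,\beta,uv)$ critical path must have $\alpha \in \{1,2\}$). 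Hence every non-valid candidate color is actively present in $S_{vv_i} \cup S_{vv_j}$ (together with possibly $S_{uz} \cup S_{uw}$, handled by a color-exchange / recoloring at $u$ as in the earlier claims). This forces at least $\Delta - k + 4$ candidate colors to lie in a set of size at most $|S_{vv_i}| + |S_{vv_j}| \le 4$ (or, after accounting for $S_{uz}, S_{uw}$, a comparably small bound). I would then plug in the specific values of $k$ for configurations $B3$ ($k=5$), $B4$ ($k=6$), $B5$ ($k \ge 7$); in each case $\Delta - k + 4 > 4$ whenever $\Delta$ is large enough, and for the finitely many remaining small values of $\Delta$ (with $\Delta \ge 5$) the structural constraints on $N'(v)$ — specifically that $|N'(v)|$ is large and all its members have degree $\le 3$, so many colors of $F_v$ lie in $D_v$ — give the extra slack needed to reach a contradiction.

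The delicate point, and the one I expect to require the most care, is handling the sets $S_{uz}$ and $S_{uw}$: a priori a candidate color $\beta$ for $uv$ could fail to be valid because of a $(1,\beta,uv)$ or $(2,\beta,uv)$ critical path that is "witnessed" on the $u$-side rather than on $v_i$ or $v_j$. The standard remedy, mirroring the technique used in Claims \ref{clm:clm2} and \ref{clm:clm3}, is to first recolor the edges $uz$ and $uw$ (or perform a valid color exchange at $u$ using Lemma~\ref{lem:lem2}) so that the colors at $u$ other than $c'(uv)$ are pushed outside the relevant sets, thereby ensuring that any obstruction to a candidate color must be realized through $v_i$ or $v_j$. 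Verifying that such a recoloring is available (enough free colors: $|C - (S_{uz} \cup S_{uw} \cup \{1,2\})| \ge \Delta + 3 - (2 + 2) = \Delta - 1 \ge 4$) and that it does not itself create a bichromatic cycle — which it cannot, since $u$ has degree $3$ and the exchange is local — is routine but must be stated carefully. Once the obstruction is localized to the at most $4$ colors in $S_{vv_i} \cup S_{vv_j}$, the contradiction with the count $\Delta - k + 4$ of candidate colors closes the claim, establishing $F''_v \cap F_u \neq \emptyset$.
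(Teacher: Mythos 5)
Your setup and the first half of the counting are essentially the paper's: assuming $F_u \subseteq F'_v$, every non-valid candidate $\beta$ for $vu$ must lie on a $(1,\beta,vu)$ or $(2,\beta,vu)$ critical path and hence belongs to $S_{vv_i}\cup S_{vv_j}$, a set of size at most $4$, while there are at least $(\Delta+3)-\vert F_u\cup F_v\vert \ge 4$ candidates. (Incidentally, your ``delicate point'' about obstructions witnessed on the $u$-side is a non-issue: by Fact~\ref{fact:fact2} the obstruction is an $(\alpha,\beta,vu)$ critical path, which by definition starts at $v$ and ends at $u$ via $\alpha$-colored edges, so it necessarily passes through the edge of $F_v$ colored $\alpha$; the localization to $S_{vv_i}\cup S_{vv_j}$ is automatic and no prior recoloring or color exchange at $u$ is needed.)

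The genuine gap is in how you close the argument when the count is tight. ``At least $4$ candidates inside a set of size at most $4$'' is not a contradiction, and the tight case is not confined to ``finitely many small values of $\Delta$'': it occurs exactly when $\deg(v)=\Delta$, so that $\vert F_u\cup F_v\vert=\vert F_v\vert=\Delta-1$ and there are exactly four candidates; this can happen for every $\Delta$. Your proposed escape --- that the low-degree vertices of $N'(v)$ put many colors of $F_v$ into $D_v$ and thereby ``give extra slack'' --- does not produce any additional candidate colors, since the number of candidates depends only on $\vert F_u\cup F_v\vert$. What is actually needed (and what the paper does) is one more move: in the tight case the four candidates are exactly the disjoint union of $S_{vv_i}$ and $S_{vv_j}$, so one recolors $vv_i$ with a color from $C-(S_{vv_i}\cup F_u\cup F_v)$, which is valid by Lemma~\ref{lem:lem1} because $S_{vv_i}\cap S_{v_iv}=\emptyset$. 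This removes color $1$ from $F_v$, so the only remaining obstruction for $vu$ runs through $S_{vv_j}$, and any $\alpha\in S_{vv_i}$ is then a valid color for $vu$ since $\alpha\notin S_{vv_j}$. Without this (or an equivalent) final recoloring step, your argument does not reach a contradiction.
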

\begin{proof}
Suppose not. Then let $c'(v,v_1)=c'(u,z)=1$ and $c'(v,v_2)=c'(u,w)=2$ (See the statement of $Lemma$ \ref{lem:lem4} for the naming convention of the neighbours of $v$). Since $\vert F_{u} \cup F_{v} \vert \le \Delta -1$, there are at least four candidate colors for the edge $vu$. If none of the candidate colors are valid for the edge $vu$, then we should have $S_{vv_1} \subset C-(F_{u} \cup F_v)$ and $S_{vv_2} \subset C-(F_{u} \cup F_v)$ since $\vert S_{vv_1} \vert = 2$ and $\vert S_{vv_2} \vert = 2$. Also $S_{vv_1} \cap S_{vv_2} = \emptyset$. Note that $C- (S_{vv_1} \cup F_v \cup F_{u}) \neq \emptyset$ since $\vert F_{u} \cup F_{v} \vert \le \Delta -1$ and $\vert S_{vv_1} \vert = 2$. Now assign a color from  $C- (S_{vv_1}\cup F_{u} \cup F_v)$ to the edge $vv_1$ to get a coloring $c$. Recall that $S_{vv_1} \subset C-(F_{u} \cup F_{v})$ and therefore $S_{vv_1} \cap S_{v_1v} = \emptyset$. Thus by $Lemma$ \ref{lem:lem1}, the coloring $c$ is valid. With respect to the coloring $c$, $F_{u} \cap F_{v} = \{2\}$ and therefore if a candidate color is not valid for the edge $vu$, it has to be actively present in $S_{vv_2}$. Let $\alpha \in S_{vv_1}$. Clearly $\alpha \in C-(F_{u} \cup F_{v})$ is a candidate color for the edge $vu$. Now since $\alpha \notin S_{vv_2}$ (recall that $S_{vv_1} \cap S_{vv_2} = \emptyset$), color $\alpha$ is valid for the edge $vu$, a contradiction to the assumption that $G$ is a counter example.
\end{proof}

~~~~~~

In view of $Claim$ \ref{clm:clm4}, $F''_v \cap F_{u} \neq \emptyset$ and therefore $F''_v \neq \emptyset$. It follows that vertex $v$ does not belong to configuration $B5$. Recalling $Claim$ \ref{clm:clm3}, we infer that the vertex $v$  belongs to either configuration $B3$ or $B4$. We take care of these two configurations separately below:

~~~~

\noindent \textbf{subcase 1.1: $v$ belongs to configuration $B3$.} \newline
Since $deg(v) = 5$, we have $\vert F_v \vert = 4$. Let $F_v =\{1,2,3,4\}$. Recall that by Claim \ref{clm:clm4}, we have $F''_v \cap F_{u} \neq \emptyset$. Without loss of generality let $c'(u,z)=c'(v,a)=1$ and  $c'(u,w)=2$. Now there are $\Delta-1$ candidate colors for the edge $vu$. If none of them are valid then all these candidate colors are actively present in at least one of $S_{uz}$ and $S_{uw}$. Let $Y = C-\{1,2,3,4\}$. We make the following claim:

\begin{clm}
\label{clm:clm5}
With respect to any valid coloring $c'$ of $G-\{uv\}$, $Y =S_{uz}$ and $Y =S_{uw}$.
\end{clm}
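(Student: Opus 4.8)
I want to prove that, in the situation of subcase 1.1 — where $\deg(v)=5$, $\deg(u)=3$ with $N(u)=\{v,w,z\}$, $F_u=F_u\cap F_v=\{1,2\}$, $c'(u,z)=c'(v,a)=1$, $c'(u,w)=2$, $F_v=\{1,2,3,4\}$, and $Y=C-\{1,2,3,4\}$ — every valid coloring $c'$ of $G-\{uv\}$ forces $S_{uz}=S_{uw}=Y$. The starting observation is the one already made: since $|F_u\cup F_v|\le\Delta-1$, there are at least $\Delta-1\ge 4$ candidate colors for the edge $vu$, and since $G$ is a minimum counterexample none of them can be valid. By Fact~\ref{fact:fact2}, each such candidate color $\gamma\in C-(F_u\cup F_v)$ must be actively present in $S_{vu}\cap S_{uv}$, i.e. there is a $(\alpha,\gamma,vu)$ critical path for some $\alpha\in S_{vu}\cap S_{uv}$. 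The only colors available on edges at $u$ other than $c'(u,v)$ are $c'(u,z)=1$ and $c'(u,w)=2$, so $S_{uv}=\{1,2\}$ and hence $\alpha\in\{1,2\}$: every candidate color $\gamma$ is actively present in $S_{uz}$ (as a $(1,\gamma,vu)$ critical path) or in $S_{uw}$ (as a $(2,\gamma,vu)$ critical path). This already gives $C-(F_u\cup F_v)\subseteq S_{uz}\cup S_{uw}$.

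**Pinning down $S_{uz}$ and $S_{uw}$.** Both $S_{uz}$ and $S_{uw}$ have size $\deg(z)-1$ and $\deg(w)-1$ respectively, which by the configuration hypotheses ($\deg(z),\deg(w)\le 3$, and $\ge 3$ by Claim~\ref{clm:clm2}) equals exactly $2$. So $|S_{uz}|=|S_{uw}|=2$, while $|C-(F_u\cup F_v)|=|Y|=\Delta-1\ge 4$ — wait, that is too large to be covered by two sets of size $2$. So the right move is to first recolor to shrink the relevant neighborhood: I expect the argument to recolor edge $va$ (currently colored $1$) or to note that the candidate colors that matter are those in $Y$, and then to argue that if either $S_{uz}\ne Y$ or $S_{uw}\ne Y$ we can free up a valid color. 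Concretely: $S_{uz},S_{uw}\subseteq C-\{c'(u,v)\}$ and each color $\gamma$ actively present in $S_{uz}$ lies at the far end of a $(1,\gamma)$ path through $u$, hence $\gamma\in S_{uz}$ literally means $\gamma$ is the color of edge $uz$ or... no — here $\gamma$ being actively present in $S_{uz}$ means $\gamma\in S_{uz}$ AND there is a critical path; so in particular $\gamma\in S_{uz}$. Since $S_{uz}$ and $S_{uw}$ each contain at most the color $1$-or-$2$-complement... I think the cleanest path: show $1\notin S_{uz}$ forces $S_{uz}\subseteq Y\cup\{2\}$ of size $2$, and then counting over candidate colors (there are $\Delta-1$ of them, but only those in $S_{uz}\cup S_{uw}$ can be non-valid) yields $|S_{uz}\cup S_{uw}|\ge\Delta-1$; with $|S_{uz}|,|S_{uw}|\le 2$ this forces $\Delta-1\le 4$, i.e. $\Delta\le 5$, and then $\Delta=5$, $|Y|=3$ wait $|Y|=\Delta-1=4$. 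Let me recount: $F_v=\{1,2,3,4\}$ has $4$ colors, total colors $\Delta+3$, so $|Y|=\Delta+3-4=\Delta-1\ge 4$. And candidate colors for $vu$ are $C-(F_u\cup F_v)=C-F_v=Y$, so all $\Delta-1$ colors of $Y$ are candidate and none valid, forcing $Y\subseteq S_{uz}\cup S_{uw}$ with $|S_{uz}|,|S_{uw}|\le 2$; hence $\Delta-1\le 4$ so $\Delta=5$ and $|Y|=4$, which forces $|S_{uz}|=|S_{uw}|=2$ and $S_{uz}\cap S_{uw}=\emptyset$ and $S_{uz}\cup S_{uw}=Y$. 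That is close to the claim but not quite $S_{uz}=Y=S_{uw}$.

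**The remaining gap and how I would close it.** So the pure counting gives a partition $Y=S_{uz}\sqcup S_{uw}$ when $\Delta=5$, not $S_{uz}=S_{uw}=Y$. To get the stronger statement I would use a recoloring/exchange argument: take a color $\gamma\in S_{uz}\setminus S_{uw}$ (nonempty by the partition, assuming $S_{uz}\ne\emptyset$), argue that $\gamma$ is a candidate for $vu$ that is actively present only via the $(1,\gamma,vu)$ critical path through $uz$, then perform a color exchange at $u$ or recolor $uz$ (legitimate since $\deg(u)=3$ is small and $z$ has degree $\le 3$) to destroy that critical path — invoking Lemma~\ref{lem:lem2} — thereby making $\gamma$ valid for $vu$ unless $\gamma$ is also actively present via $S_{uw}$, contradiction. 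Iterating, or rather arguing that for the scheme to survive every $\gamma\in Y$ must be reachable through BOTH $z$ and $w$, yields $Y\subseteq S_{uz}$ and $Y\subseteq S_{uw}$; combined with $|S_{uz}|,|S_{uw}|\le\Delta-1$... hmm, here $|S_{uz}|\le 2$ and $|Y|=\Delta-1\ge4$ contradicts $Y\subseteq S_{uz}$, so actually $\Delta$ cannot be large and the statement $S_{uz}=S_{uw}=Y$ can only hold if $|Y|\le 2$, i.e. never under $\Delta\ge5$ — which would mean the Claim is being used precisely to derive the final contradiction ending subcase~1.1.

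I therefore expect the actual proof of Claim~\ref{clm:clm5} to run as follows, and I would write it this way: assume for contradiction that (say) $S_{uz}\ne Y$, so some $\gamma\in Y$ lies outside $S_{uz}$; since $\gamma$ is a candidate color for $vu$ and not valid, $\gamma$ must be actively present in $S_{uw}$, giving a $(2,\gamma,vu)$ critical path $P$. This path passes through $u$ and through the edge $uw$ colored $2$; apply a color exchange at $u$ between $uw$ and $uz$ (colors $2$ and $1$) — check properness via Fact~\ref{fact:fact3}, which needs $1\notin S_{uw}$ and $2\notin S_{uz}$, both of which I would establish from the structure — and by Lemma~\ref{lem:lem2} the $(2,\gamma,vu)$ critical path is destroyed, while $F_u\cap F_v$ is unchanged in size. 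Now re-examine candidate colors for $vu$ in the new coloring: $\gamma$ is no longer actively present in $S_{uw}$ and was never in $S_{uz}$, so $\gamma$ becomes valid for $vu$, contradicting minimality of $G$. Hence $S_{uz}=Y$, and by the symmetric argument $S_{uw}=Y$. The main obstacle I foresee is verifying the properness side-conditions for the color exchange (ensuring $1\notin S_{uw}$ and $2\notin S_{uz}$) and carefully checking that no \emph{new} obstruction to coloring $vu$ is introduced by the exchange — this is exactly the kind of bookkeeping that Lemma~\ref{lem:lem2} is designed to handle, but it must be applied with the roles of $(\lambda,\xi,ab)$ matched to $(2,\gamma,vu)$ and the exchange edges to $uw,uz$, and one must separately confirm the other candidate colors do not suddenly all become blocked.
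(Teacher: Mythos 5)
Your proposal goes astray at a concrete point: you assert that $\deg(z),\deg(w)\le 3$ ``by the configuration hypotheses,'' and hence $|S_{uz}|=|S_{uw}|=2$. But $w$ and $z$ are neighbours of $u$, not of $v$; configuration $B3$ bounds the degrees of $u,v_1,v_2\in N(v)$ and says nothing about $N(u)-\{v\}$, and Claim~\ref{clm:clm2} likewise only concerns $N(v)$. So $|S_{uz}|$ and $|S_{uw}|$ can each be as large as $\Delta-1=|Y|$, and the statement $S_{uz}=S_{uw}=Y$ is perfectly consistent. This misreading derails the middle of your argument (the forced partition $Y=S_{uz}\sqcup S_{uw}$ with $\Delta=5$) and leads you to the incorrect conclusion that the claim ``can only hold if $|Y|\le 2$, i.e.\ never under $\Delta\ge 5$'' and is merely a vehicle for the final contradiction. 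It is not: the paper proves the claim as stated and then uses $S_{uz}=S_{uw}=Y$ to recolor both $uz$ and $uw$ with colors from $F'_v$ (valid by Lemma~\ref{lem:lem1} since $F'_v\cap Y=\emptyset$), reducing to $F_u\subseteq F'_v$ and contradicting Claim~\ref{clm:clm4}.

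Your closing sketch (assume $\gamma\in Y-S_{uz}$, note $\gamma$ is actively present in $S_{uw}$, then exchange the colors of $uz$ and $uw$ and invoke Lemma~\ref{lem:lem2}) also has unrepaired gaps. The properness conditions of Fact~\ref{fact:fact3} require $1\notin S_{uw}$ and $2\notin S_{uz}$, neither of which you establish and neither of which need hold a priori. More importantly, even if the exchange were proper, Lemma~\ref{lem:lem2} only destroys the old $(2,\gamma,vu)$ critical path; after the exchange the edge $uw$ is colored $1$, and since $\gamma\in S_{uw}$ a new $(1,\gamma,vu)$ critical path through $uw$ and $va$ could arise and block $\gamma$ again. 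The paper avoids both problems by recoloring one edge at $u$ with a \emph{fresh} color $\alpha\in Y$ (validity follows from Fact~\ref{fact:fact1} against the existing $(\cdot,\alpha,vu)$ critical path), which shrinks $F_u\cap F_v$ to a single color so that only one of $S_{uz},S_{uw}$ can obstruct any candidate. It then argues in two stages: subclaim~\ref{clm:clm5}.1 shows at least one of $S_{uz},S_{uw}$ equals $Y$; subclaim~\ref{clm:clm5}.2 shows that if exactly one does, say $S_{uu'}=Y$, then all of $Y$ is actively present in $S_{uu'}$, forcing $c'(u,u')\in F''_v$ (because the corresponding $S_{vv'}$ must have size $\Delta-1$, ruling out $v_1,v_2$); finally, recoloring $uu'$ with a color of $F'_v$ produces a valid coloring violating subclaim~\ref{clm:clm5}.2, so both sets must equal $Y$. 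None of this second stage appears in your proposal, and your ``symmetric argument'' for the second equality does not substitute for it.
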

\begin{proof}
We use contradiction to prove the claim. Firstly we make the following subclaim:

~~~~

\noindent \textbf{subclaim {\ref{clm:clm5}}.1:} \emph{With respect to any valid coloring $c'$ of $G-\{uv\}$, one of $S_{uz}$ or $S_{uw}$ is $Y$.}
\begin{proof}
Suppose not. Then $Y \neq S_{uz}$ and $Y \neq S_{uw}$. Note that $\vert Y \vert = \Delta -1$ while $\vert S_{uz} \vert \le \Delta -1$ and $\vert S_{uw} \vert \le \Delta -1$. Therefore there exist colors $\alpha, \beta \in Y$ such that $\alpha \notin S_{uz}$ and $\beta \notin S_{uw}$. Note that $\alpha \neq \beta$ since otherwise color $\alpha=\beta$ will be valid for the edge $vu$ as there cannot exist a $(1,\alpha,vu)$ or $(2,\alpha,vu)$ critical path with respect to $c'$. It follows that $\alpha$ is actively present in $S_{uw}$ and $\beta$ is actively present in $S_{uz}$. Hence there exist $(2,\alpha,vu)$ and $(1,\beta,vu)$ critical paths. Now recolor edge $uz$ using color $\alpha$ to get a coloring $c''$. The recoloring is valid since if there is a bichromatic cycle then it has to be a $(\alpha,2)$ bichromatic cycle, implying that there existed a $(2,\alpha,uz)$ critical path in $c'$, a contradiction in view of Fact $\ref{fact:fact1}$ as there already existed a $(2,\alpha,vu)$ critical path. With respect to coloring $c''$, $F_{v} \cap F_{u}=\{2\}$ and therefore if a candidate color is not valid for the edge $vu$, it has to be actively present in $S_{uw}$. Now color $\beta \notin S_{uw}$ and hence color $\beta$ is valid for the edge $vu$, a contradiction to the assumption that $G$ is a counter example.
\end{proof}

With respect to any valid coloring $c'$ of $G-\{uv\}$, in view of $subclaim$ \ref{clm:clm5}.1, let $u' \in \{w,z\}$ be such that $S_{uu'}=Y$. Let $\{u''\} =\{w,z\}-\{u'\}$. Now for contradiction assume that $S_{uu''} \neq Y$. Then clearly there exists a color $\alpha \in Y$ such that $\alpha \notin S_{uu''}$.

~~~~~

\noindent \textbf{subclaim \ref{clm:clm5}.2:} \emph{With respect to any valid coloring $c'$ of $G-\{uv\}$, if exactly one of $S_{uw}$ and $S_{uz}$ is $Y$, say $S_{uu'}=Y$, then all the colors of $Y$ are actively present in $S_{uu'}$ and $c'(u,u') \in F''_v$.}
\begin{proof}
Recolor the edge $uu''$ with the color $\alpha$ to get a coloring $c''$. Since $\alpha \notin S_{uu''}$ and $\alpha$ is not valid for the edge $vu$, color $\alpha$ is actively present in $S_{uu'}$ i.e., with respect to coloring $c'$, there exists a $(\gamma,\alpha,vu)$ critical path, where $\gamma =c'(u,u')$. Thus by $Fact$ \ref{fact:afct1}, there cannot exist a $(\gamma,\alpha,uu'')$ critical path and hence the coloring $c''$ is valid for the edge $uu''$. With respect to coloring $c''$, $F_{v} \cap F_{u}=\{2\}$. Now all the $\Delta-2$ colors from $Y-\{\alpha\}$ are candidates for the edge $vu$. If any one of them is valid we are done. Thus none of them are valid and hence they all have to be actively present in $S_{uu'}$. Recalling that the color  $\alpha$ was actively present in $S_{uu'}$ we infer that all the colors of $Y$ are in fact  actively present in $S_{uu'}$.

Now these colors will also be actively present in $S_{vv'}$, where $v' \in N(v)$ is such that $c'(v,v')=c'(u,u')$. This implies that $\vert S_{vv'} \vert = \vert Y \vert = \Delta -1$. Therefore $v'$ cannot be $v_1$ or $v_2$ since $\vert S_{vv_1} \vert = 2$ and $\vert S_{vv_2} \vert = 2$ while $\Delta-1 \ge 4$. Thus $v' \in N''(v)$ implying that $c'(u,u') \in F''_v$.
\end{proof}

Recalling that for configuration $B3$, $\vert F''_v \vert = 2$ and since $1 \in F''_v$, at least one of $3,4$ belongs to $F'_v$. Without loss of generality let $3 \in F'_v$. Now recolor edge $uu'$ using color $3$ to get a coloring $d$ from $c'$. The coloring $d$ is valid by $Lemma$ \ref{lem:lem1} since $\{d(u,u'')\} \cap S_{uu'} = \{2\} \cap Y = \emptyset$. With respect to the coloring $d$ we have $S_{uu'}=Y$ and $S_{uu''} \neq Y$. Thus by $subclaim$ \ref{clm:clm5}.2, $d(u,u') \in F''_v$, a contradiction since $d(u,u')=3 \notin F''_v$. Thus we have $Y =S_{uz}$ and $Y =S_{uw}$.
\end{proof}

Since $Y =S_{uz}$ and $Y =S_{uw}$, we can recolor edge $uz$ and $uw$ using color from $F'_v$ (Recall that with respect to configuration $B3$, $\vert F'_v \vert = 2$) to get a new valid coloring $c$. The coloring $c$ is valid by $Lemma$ \ref{lem:lem1} since $F'_v \cap S_{uz} = F'_v \cap Y = \emptyset$ and $F'_v \cap S_{uw} = F'_v \cap Y = \emptyset$. This reduces the situation to $F_{u} \subseteq  F'_v$, a contradiction to $Claim$ \ref{clm:clm4}.

~~~~~~~~

\noindent \textbf{subcase 1.2: $v$ belongs to configuration $B4$.} \newline
~~~~~~

We have $deg(v)=6$ and  $F''_v=\{c'(v,a)\}$. Therefore in view of Claim \ref{clm:clm4}, $c'(v,a)$ has to belong to $F_{u}$. Let $F_v =\{1,2,3,4,5\}$. Without loss of generality let $c'(u,w)=c'(v,v_1)=2$ and $c'(u,z)=c'(v,a)=1$. Now there are $\Delta-2$ candidate colors for the edge $vu$. If none of them are valid then all these candidate colors are actively present in at least one of $S_{uz}$ and $S_{uw}$. Let $X = C-\{1,2,3,4,5\}$.

\begin{clm}
\label{clm:clm6}
$X \subseteq S_{uz}$.
\end{clm}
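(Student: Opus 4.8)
The plan is to run the standard argument: since $G$ is a minimum counterexample, no candidate color is valid for the edge $vu$, so every candidate color must be ``blocked'' by a critical path, and I will show that if some candidate avoids $S_{uz}$ then enough candidate colors are forced into the tiny set $S_{vv_1}$ to contradict $\deg(v_1)\le 3$.

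First I would record the local picture. In this subcase $\deg(u)=3$ and we are in case~1, so $F_u=F_u\cap F_v=\{1,2\}$, while $F_v=\{1,2,3,4,5\}$, with $c'(u,z)=c'(v,a)=1$ and $c'(u,w)=c'(v,v_1)=2$. Hence the candidate colors for $vu$ with respect to $c'$ are exactly $X=C\setminus F_v$, and $|X|=\Delta-2\ge 3$ by Assumption~\ref{asm:asm1}. None of these is valid, so by Fact~\ref{fact:fact2}, and since $S_{vu}\cap S_{uv}=F_u\cap F_v=\{1,2\}$, each $\beta\in X$ admits a $(1,\beta,vu)$ or a $(2,\beta,vu)$ critical path. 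A $(1,\beta,vu)$ critical path reaches $u$ through the unique edge of $u$ colored $1$, namely $uz$, forcing $\beta\in F_z\setminus\{1\}=S_{uz}$; a $(2,\beta,vu)$ critical path leaves $v$ through the unique edge of $v$ colored $2$, namely $vv_1$, forcing $\beta\in F_{v_1}\setminus\{2\}=S_{vv_1}$.

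Now suppose for contradiction that $X\not\subseteq S_{uz}$ and fix $\alpha\in X\setminus S_{uz}$. By the previous paragraph the obstruction for $\alpha$ cannot be a $(1,\alpha,vu)$ critical path, so there is a $(2,\alpha,vu)$ critical path, and hence $\alpha\in S_{vv_1}$. Next I would recolor the edge $uz$ with the color $\alpha$: this is proper because $\alpha\notin S_{uz}$ and $\alpha\ne 2$, and it is acyclic because any newly created bichromatic cycle would have to pass through $uz$ and therefore be a $(2,\alpha)$-cycle, which would yield a $(2,\alpha,uz)$ critical path in $c'$; but the $(2,\alpha,vu)$ critical path is already a maximal $(2,\alpha)$-bichromatic path through $u$, so by Fact~\ref{fact:fact1} there is no other such path through $u$, and a $(2,\alpha,uz)$ critical path would be a different one (its endpoints are $u,z$, not $u,v$). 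Call the resulting valid coloring $c_1$. In $c_1$ we have $F_u\cap F_v=\{2\}$, the candidate set for $vu$ is $X\setminus\{\alpha\}$ (of size $\Delta-3$), and again none is valid; so by Fact~\ref{fact:fact2} each $\beta\in X\setminus\{\alpha\}$ has a $(2,\beta,vu)$ critical path, which leaves $v$ through $vv_1$ (unchanged by the recoloring), forcing $\beta\in S_{vv_1}$. Combining, $X=\{\alpha\}\cup(X\setminus\{\alpha\})\subseteq S_{vv_1}$, so $|S_{vv_1}|\ge\Delta-2\ge 3$; but $\deg(v_1)\le 3$ in configuration $B4$ (and $\deg(v_1)=3$ by Claim~\ref{clm:clm2}), so $|S_{vv_1}|=\deg(v_1)-1\le 2$, a contradiction. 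Therefore $X\subseteq S_{uz}$.

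The main point requiring care is the acyclicity of the single recoloring $uz\mapsto\alpha$: the only candidate bichromatic cycle is the $(2,\alpha)$-cycle, and killing it is exactly where Fact~\ref{fact:fact1} together with the already-established $(2,\alpha,vu)$ critical path is used. The rest is bookkeeping — checking that the one recoloring changes neither $S_{vv_1}$ nor the relevant critical-path endpoints, and counting candidates correctly so that $|X|=\Delta-2$ beats $|S_{vv_1}|\le 2$ for all $\Delta\ge 5$.
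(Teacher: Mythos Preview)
Your proof is correct and follows essentially the same approach as the paper: assume some $\alpha\in X\setminus S_{uz}$, deduce a $(2,\alpha,vu)$ critical path, recolor $uz$ with $\alpha$ (valid by Fact~\ref{fact:fact1}), and then force all of $X$ into the $2$-element set $S_{vv_1}$. The only cosmetic difference is that the paper routes the counting through $S_{uw}$ first (``actively present in $S_{uw}$ and hence in $S_{vv_1}$'') whereas you go to $S_{vv_1}$ directly, which is slightly cleaner.
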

\begin{proof}
Suppose not. Then let $\alpha$ be a color such that $\alpha \in X-S_{uz}$. This implies that $\alpha$ is actively present in $S_{uw}$. Hence there exists a $(2,\alpha,vu)$ critical path since $c'(u,w)=2$. Now recolor edge $uz$ using color $\alpha$ to get a coloring $c''$. The recoloring is valid since if there is a bichromatic cycle then it has to be a $(\alpha,2)$ bichromatic cycle, implying that there existed a $(2,\alpha,uz)$ critical path in $c'$, a contradiction in view of Fact $\ref{fact:fact1}$ as there already existed a $(2,\alpha,vu)$ critical path. Now with respect to coloring $c''$, $F_v \cap F_u = \{2\}$ and therefore if none of the colors in $X-\{\alpha\}$ is valid for the edge $vu$, they all should be actively present in $S_{uw}$. Recalling that color $\alpha$ was actively present in $S_{uw}$ we have all the colors of $X$ actively present in $S_{uw}$ and hence in $S_{vv_1}$ implying that $ \vert S_{vv_1} \vert \ge \vert X \vert = \Delta -2 \ge 3$, a contradiction since $\vert S_{vv_1} \vert = 2$. Thus there exists a color valid for the edge $vu$, a contradiction to the assumption that $G$ is a counter example.
\end{proof}

\begin{clm}
\label{clm:clm7}
$X \subseteq S_{uw}$.
\end{clm}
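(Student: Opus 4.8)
The statement is the $w$-analogue of Claim~\ref{clm:clm6}, but its proof cannot be a mirror image: the colour $2=c'(u,w)=c'(v,v_1)$ is matched on the $v$-side to $v_1\in N'(v)$, which has degree at most $3$ (so $\vert S_{vv_1}\vert\le 2$), whereas $1=c'(u,z)=c'(v,a)$ is matched to $a\in N''(v)$, whose degree may be as large as $\Delta$. The plan is therefore to run the argument of Claim~\ref{clm:clm6} as far as it goes and then to spend one more recolouring relocating the obstruction from the edge $va$ onto some degree-$\le 3$ neighbour of $v$, where the counting argument of Claim~\ref{clm:clm6} applies.

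Suppose $X\not\subseteq S_{uw}$ and fix $\alpha\in X\setminus S_{uw}$. Since $\alpha\notin F_u\cup F_v$ it is a candidate for $vu$; it is not valid (as $G$ is a counterexample) and, being absent from $S_{uw}$, it is actively present in $S_{uz}$, so there is a $(1,\alpha,vu)$ critical path $P$ running $v,a,\ldots,z,u$ (recall $c'(u,z)=c'(v,a)=1$). Recolour $uw$ with $\alpha$ to obtain $c''$. This is proper ($\alpha\notin F_u$ because $\alpha\in X$, and $\alpha\notin F_w$ because $\alpha\notin S_{uw}$ and $\alpha\neq c'(u,w)$) and valid: a new bichromatic cycle would pass through $uw$ and $uz$ and hence be $(1,\alpha)$-coloured, forcing a $(1,\alpha,wu)$ critical path in $c'$, but this together with $P$ gives two distinct maximal $(1,\alpha)$-bichromatic paths through $u$ (with distinct second endpoints $w$ and $v$, both genuine endpoints since neither vertex carries an $\alpha$-edge), contradicting Fact~\ref{fact:fact1}. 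In $c''$ we have $F_u=\{1,\alpha\}$, hence $F_u\cap F_v=\{1\}$; Claim~\ref{clm:clm6} still yields $X\subseteq S_{uz}$, and if no candidate colour were valid for $vu$ then every colour of $X\setminus\{\alpha\}$, and $\alpha$ itself via $P$, would be actively present in $S_{va}$, giving $X\subseteq S_{va}$ and $\deg(z),\deg(a)\ge\Delta-1$.

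To finish, recolour the edge $uz$. Since $X\cup\{1\}\subseteq F_z$, the only colours not forced into $F_z$ lie in $F'_v=\{2,3,4,5\}$; choose $\mu\in\{2,3,4,5\}\setminus F_w$ and recolour $uz$ with $\mu$. The recolouring is proper, and valid because a new bichromatic cycle would require an $(\alpha,\mu,uz)$ critical path through $w$, which is impossible as $\mu\notin F_w$. Now $F_u\cap F_v=\{\mu\}$, the remaining candidate colours for $vu$ are $X\setminus\{\alpha\}$, and if none is valid then each is actively present in $S_{vv_j}$, where $v_j\in N'(v)$ is the unique neighbour of $v$ with $c'(v,v_j)=\mu$. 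As $\deg(v_j)\le 3$ this gives $\Delta-3\le\vert S_{vv_j}\vert\le 2$, a contradiction when $\Delta\ge 6$; for $\Delta=5$ one additionally recovers the colour $\alpha$ (recolour $uw$ off $\alpha$, which can be done because $X\not\subseteq F_w$), so that in fact $X\subseteq S_{vv_j}$ and $\vert S_{vv_j}\vert\ge\Delta-2=3$, again a contradiction.

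The part I expect to be the main obstacle is the last recolouring of $uz$: one must guarantee the existence of a usable $\mu\in\{2,3,4,5\}\setminus F_w$ together with the validity of the recolouring, and dispose separately of the residual configuration in which $F_z\cup F_w$ already exhausts $\{2,3,4,5\}$, which pins the degrees so tightly ($\deg(z)=\Delta$, $2\in F_z$, $\{3,4,5\}\subseteq F_w$) that $\mu$ can no longer be chosen freely. In that residual case one argues instead that the critical path $P$ can be destroyed by a further local recolouring or by a colour exchange at $u$ (Lemma~\ref{lem:lem2}), after which $\alpha$ becomes valid for $vu$, because the \emph{only} obstruction to $\alpha$ was $P$: a $(2,\alpha,vu)$ critical path cannot exist, since it would force $\alpha\in S_{uw}$. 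Throughout, the validity of the chain of recolourings is controlled by Fact~\ref{fact:fact1} and, where colour exchanges are used, by Lemma~\ref{lem:lem2}.
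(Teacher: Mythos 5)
Your opening moves coincide with the paper's: pick $\alpha\in X-S_{uw}$, recolour $uw$ with $\alpha$ (valid by Fact~\ref{fact:fact1} against the $(1,\alpha,vu)$ critical path), and conclude that every colour of $X$ is actively present in $S_{uz}$ and hence in $S_{va}$. Your finishing move --- recolour $uz$ with some $\mu\in\{2,3,4,5\}$ avoiding $S_{uw}$ and then count the $\Delta-3\ge 3$ remaining candidates for $vu$ against $\vert S_{vv_j}\vert\le 2$ --- is also essentially the paper's, but the paper performs it only \emph{after} both Claim~\ref{clm:clm6} and Claim~\ref{clm:clm7} are established, because it is precisely $X\subseteq S_{uw}$ that bounds $\vert S_{uw}-X\vert\le 1$ and guarantees $\{3,4,5\}-(S_{uz}\cup S_{uw})\neq\emptyset$. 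While you are still \emph{proving} Claim~\ref{clm:clm7} you have no control over $S_{uw}\cap\{3,4,5\}$, so the case you defer as ``residual'' is not a corner case; it is the substance of the claim.

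That residual case is a genuine gap, and the repairs you gesture at do not go through. Since $2\notin S_{uw}$ by definition and at most one colour of $\{2,3,4,5\}$ lies in $S_{uz}$ (as $X\subseteq S_{uz}$ and $\vert S_{uz}\vert\le\Delta-1$), the residual configuration is exactly $2\in S_{uz}$ and $\{3,4,5\}\subseteq S_{uw}$. Then: (i) the colour exchange at $u$ between $uz$ and $uw$ is improper by Fact~\ref{fact:fact3} precisely because $2\in S_{uz}$, so Lemma~\ref{lem:lem2} is unavailable; (ii) the edge $va$ admits no candidate colour at all, since $S_{va}\cup S_{av}\supseteq X\cup\{2,3,4,5\}$ already covers $C-\{1\}$; and (iii) every remaining candidate for $uz$ lies in $\{3,4,5\}\subseteq S_{uw}$, so a $(2,\mu)$ bichromatic cycle through $uw$ cannot be excluded. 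Your parenthetical claim that the only obstruction to $\alpha$ is the path $P$ is correct, but no concrete recolouring destroying $P$ is exhibited. The paper's proof spends most of its length on exactly this situation: it uncolours $vv_2$ (where $c''(v,v_2)=3$ with $3\notin S_{vv_1}$), moves colour $3$ onto $va$ to break the obstruction at $a$, and then runs a three-way case analysis to find a valid colour in $C-\{2,3,4,5\}$ for the now-uncoloured $vv_2$ before returning to $vu$. Some argument of that kind is what your sketch is missing. (Two smaller points: your $\mu$ must also avoid $S_{uz}$, not just $F_w$, for the recolouring of $uz$ to be proper; and the $\Delta=5$ branch is vacuous here since $\deg(v)=6$ in configuration $B4$.)
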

\begin{proof}
Suppose not. Then let $X \nsubseteq S_{uw}$ and let $\alpha$ be a color such that $\alpha \in X-S_{uw}$. Recolor the edge $uw$ using the color $\alpha$. It is easy to see (by a similar argument used in the proof of Claim \ref{clm:clm6}) that $c''$ is valid and all the colors of $X$ are actively present in $S_{uz}$ and hence in $S_{va}$.

Since $\vert X \vert = \Delta -2$ and $\vert S_{va} \vert \le \Delta -1$, we have $\vert S_{va} - X \vert \le 1$. If $S_{va} \neq X$, then the singleton set $S_{va}-X$ has to be a subset of $\{2,3,4,5\}$ since $1 \notin S_{va}$. Without loss of generality let $S_{va}-X=\{2\}$ (Reader may note that $\{2,3,4,5\}=F'_v$ and these four colors play symmetric roles in $c''$ and therefore we need to argue with respect to only one of them). Recall that $c''(v,v_1)=c'(v,v_1)=2$ and $\vert S_{vv_1}\vert =2$. Of the colors $3$, $4$ and $5$ let $3 \notin S_{vv_1}$. Also let $c''(v,v_2)=3$. Now delete the color on the edge $vv_2$ and recolor the edge $va$ using  color $3$ to get a coloring $d$. We claim that the coloring $d$ is valid: If $S_{va}=X$, then clearly it is valid by $Lemma$ \ref{lem:lem1} since $S_{va} \cap S_{av} = \emptyset$. Otherwise we have $S_{va}-X=\{2\}$ and if there is a bichromatic cycle with respect to the coloring $d$, it has to be a $(2,3)$ bichromatic cycle. Since $d(v,v_1)=2$, it means that $3 \in S_{vv_1}$, a contradiction to our assumption. Thus the coloring $d$ is valid.

Now with respect to coloring $d$, we have $d(u,z)=1$, $d(u,w)=\alpha$, $d(v,a)=3$, $d(v,v_1)=2$, $d(v,v_3)=4$ and $d(v,v_4)=5$. Edges $vu$ and $vv_2$ are uncolored. Now let $X' = C-\{2,3,4,5\}$. Note that $\vert X' \vert \ge 5$ since $\Delta \ge 6$. We show below that there exists a color in $X'$ that is valid for the edge $vv_2$:

\begin{itemize}
\item $S_{vv_2} \subset X'$. Now any color in $X'-S_{vv_2}$ is valid for the edge $vv_2$ by $Lemma$ \ref{lem:lem1}.

\item $\vert S_{vv_2} \cap X' \vert =1$. In this case exactly one color, say $\theta \in \{2,4,5\}$ is present in $S_{vv_2}$ since $3 \notin S_{vv_2}$ (This is because $c'(v,v_2)=3$). Now there are at least four candidate colors for the edge $vv_2$ since $\vert F_v \cup F_u \vert \le 4+2-1 = 5$ and there are at least $\Delta+3 \ge deg(v)+3=6+3=9$ colors in $C$. If none of the candidate colors are valid then a $(\theta,\gamma)$ bichromatic cycle should form for each $\gamma \in X'-S_{vv_2}$. Since $\theta \in \{2,4,5\}$, we have $\theta = d(v,v_j)$ for $j=1$, $3$ or $4$. It means that each of the $(\theta,\gamma)$ bichromatic cycle should contain the edge $vv_j$ and thus $X'-S_{vv_2} \subseteq S_{vv_j}$. But $\vert X'-S_{vv_2} \vert \ge 5-2+1 \ge 4$ and  $\vert S_{vv_j} \vert= 2 $, a contradiction. Thus at least one color will be valid for the edge $vv_2$.

\item $S_{vv_2} \cap X' = \emptyset$. Now all the colors in $X'$ are candidates for the edge $vv_2$. If none of them are valid then all these candidate colors have to form bichromatic cycles with at least one of the colors in $S_{vv_2} \cap F_v$. Now since $c''(v,v_2)=3$, color $3 \notin S_{vv_2}(d)$ and therefore $3$ is not involved in any of these bichromatic cycles. Also since $\vert S_{vv_2} \vert = 2 $, exactly two of the colors from $\{2,4,5\}$ and hence exactly two of the edges from $\{vv_1, vv_3, vv_4\}$ are involved in these bichromatic cycles. But we know that $\vert S_{vv_1} \vert =\vert S_{vv_3} \vert = \vert S_{vv_4} \vert= 2 $. It follows that at most four bichromatic cycles can be formed. But $\vert X' \vert \ge 5$ and thus at least one color will be valid for the edge $vv_2$.
\end{itemize}

\noindent Let $\beta \in X'$ be a valid color for $vv_2$. Color the edge $vv_2$ using $\beta$ to get a new coloring $d'$. Now:

\begin{itemize}
\item If $\beta \in C-\{1,2,3,4,5,\alpha\}$, then $F_v \cap F_u = \emptyset$ with respect to $d'$, a contradiction to $Claim$ \ref{clm:clm1}.

\item If $\beta \in \{1,\alpha\}$, then there are at least three candidate colors for the edge $vu$ since $\Delta \ge 6$. Moreover we have $F_v \cap F_u =\{\beta\}$. If none of these three candidate colors are valid for the edge $vu$, then all of them have to be actively present in $S_{vv_2}$, implying that $\vert S_{vv_2} \vert \ge 3$, a contradiction since $\vert S_{vv_2} \vert =2$. Therefore at least one of the three candidate colors is valid for the edge $vu$.
\end{itemize}

Thus we have a valid color for edge $vu$, a contradiction to the assumption that $G$ is a counter example.
\end{proof}

In view of $Claim$ \ref{clm:clm6}, $Claim$ \ref{clm:clm7} and from $\vert S_{uz} \vert$ , $\vert S_{uw} \vert \le \Delta-1$ and $\vert X \vert = \Delta-2$, it is easy to see that $\vert (S_{uz} \cup S_{uw}) - X \vert \le 2$. Thus recalling that $3,4,5 \notin X$, we infer that $\{3,4,5\} - (S_{uz} \cup S_{uw}) \neq \emptyset$. Now recolor the edge $uz$ using a color $\mu \in  \{3,4,5\} - (S_{uz} \cup S_{uw})$. Clearly $\mu$ is a candidate for the edge $uz$ since $d'(u,w)=2$ and $\mu \notin S_{uz}$. Moreover $\mu$ is valid for $uz$ since if otherwise a $(2,\mu)$ bichromatic cycle has to be formed containing $uw$, implying that $\mu \in S_{uw}$, a contradiction. This reduces the situation to $F_{u} \subseteq  F'_v$, a contradiction to $Claim$ \ref{clm:clm4}.

~~~~~~

\subsubsection{case 2: $\vert F_{v} \cap F_{u} \vert= 1$}

Recall that by $Claim$ \ref{clm:clm3} and $Claim$ \ref{clm:clm2}, $v$ belongs to configurations $B3$, $B4$ or $B5$ and $deg(u)=3$. As before $N(u) =\{v,w,z\}$. Also let $F_{v} \cap F_{u}= \{1\}$.

~~~~~

\begin{clm}
\label{clm:clm8}
With respect to any valid coloring of $G- \{vu\}$, $F_{u} \cap  F'_v = \emptyset$. This implies that $F_{v} \cap F_{u} \subseteq F''_v$.
\end{clm}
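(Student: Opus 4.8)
The plan is to argue by contradiction, mirroring the structure of Claim~\ref{clm:clm4} from case~1 but adapted to the situation $\vert F_v \cap F_u \vert = 1$. Suppose some color of $F_u$ lies in $F'_v$. Since $\vert F_u \cap F_v \vert = 1$ and $F_u \cap F'_v \neq \emptyset$, we may name things so that $F_u \cap F_v = \{1\}$, and assume $1 \in F'_v$, i.e.\ $c'(v,v_i) = c'(u,z) = 1$ for some $v_i \in N'(v)$; recall $\vert S_{vv_i}\vert = 2$ because $\deg(v_i) \le 3$. The remaining two colors of $F_u \setminus F_v$ sit on the edges $uw$ (and the missing third slot of $u$, if relevant), disjoint from $F_v$. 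Since $v$ belongs to $B3$, $B4$, or $B5$ we have $\vert F_v \vert \ge 4$, and combined with $\deg(u) = 3$ (Observation~\ref{obs:obs1}) we get $\vert F_u \cup F_v\vert \le \Delta - 1$, so there are at least four candidate colors for the edge $vu$.

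The key step is then the standard ``recolor a short edge at $v$'' move. Since none of the $\ge 4$ candidate colors for $vu$ is valid, each is actively present in $S_{vv_i}$ (or, if $\deg u = 3$ forces another $1$-slot — but there is only one, namely $v_i$ in $F'_v$ — this is the only possibility once we've arranged $c'(u,z)=1=c'(v,v_i)$ and the other colors of $F_u$ are outside $F_v$). But $\vert S_{vv_i}\vert \le 2 < 4$, which is already a contradiction provided we have genuinely reduced to the case where $v_i$ is the unique common-colored edge. If instead the arrangement forces us through an intermediate recoloring — recolor $vv_i$ with a color $\beta \in C - (S_{vv_i} \cup F_u \cup F_v)$, which is nonempty since $\vert F_u \cup F_v\vert \le \Delta - 1$ and $\Delta \ge 5$ — then by Lemma~\ref{lem:lem1} the new coloring $c$ is valid (as $S_{vv_i} \subseteq C - (F_u \cup F_v)$ forces $S_{vv_i} \cap S_{v_iv} = \emptyset$), and now $F_u \cap F_v = \emptyset$, contradicting Claim~\ref{clm:clm1}. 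Either way we reach a contradiction, so $F_u \cap F'_v = \emptyset$; since $F_u \cap F_v = \{1\}$ is nonempty and disjoint from $F'_v$, it must lie in $F''_v$, giving $F_v \cap F_u \subseteq F''_v$.

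I expect the main obstacle to be the careful bookkeeping of which edge at $v$ carries the color $1$: one must rule out the degenerate possibility that $1$ appears on an edge $vv_j$ of $N''(v)$ simultaneously (it cannot, since $\vert F_u \cap F_v\vert = 1$ means $1$ appears exactly once among the edges at $v$), and one must confirm that after any recoloring of the unique common edge the invariant ``$F_v \cap F_u$ shrinks'' genuinely holds so that Claim~\ref{clm:clm1} is contradicted. The counting inequalities ($\vert F_u \cup F_v\vert \le \Delta - 1$ using $\deg(u) = 3$; at least four candidates for $vu$; $\vert S_{vv_i}\vert \le 2$) are all routine given Observation~\ref{obs:obs1} and the fact that $v \in B3 \cup B4 \cup B5$ has $\deg(v) \ge 5$.
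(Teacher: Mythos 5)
Your proposal is correct and follows essentially the same route as the paper: assume the shared color $1$ sits on an edge $vv_i$ with $v_i \in N'(v)$, count the candidate colors for $vu$, and observe that they cannot all be actively present in $S_{vv_i}$ since $\vert S_{vv_i}\vert = 2$. Note only that your bound $\vert F_u \cup F_v\vert \le \Delta-1$ should be $\le \Delta$ (since $\vert F_v\vert$ can be $\Delta-1$ and $\vert F_u\vert = 2$ with one shared color), yielding at least three candidates rather than four, which still exceeds $\vert S_{vv_i}\vert = 2$ and gives the contradiction; the extra fallback branch via Claim~\ref{clm:clm1} is unnecessary.
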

\begin{proof}
Suppose not. Then without loss of generality let $c'(v,v_1)=c'(u,z)=1$. Recalling $deg(u)=3$, $\vert F_{u} \vert \le 2$  and thus $\vert F_{u} \cup F_v \vert \le (\Delta-1)+2-1 = \Delta$. It follows that there are at least three candidate colors for the edge $vu$. If none of the candidate colors are valid for the edge $vu$, then all these candidate colors have to be actively present in $S_{vv_1}$, implying that $\vert S_{vv_1} \vert \ge 3$, a contradiction since $\vert S_{vv_1} \vert =2$. It follows that at least one of the three candidate colors is valid for the edge $vu$, a contradiction to the assumption that $G$ is a counter example.
\end{proof}

\noindent In view of $Claim$ \ref{clm:clm8}, $F''(v) \neq \emptyset$ and therefore the vertex $v$ cannot belong to configuration $B5$. We infer that $v$ has to belong to either configuration $B3$ or $B4$. We take care of these two subcases separately below:

~~~~~

\noindent \textbf{subcase 2.1: $v$ belongs to configuration $B3$.} \newline

Since $deg(v) = 5$, we have $\vert F_v \vert = 4$. Let $F_v \cup F_u =\{1,2,3,4,5\}$. By Claim \ref{clm:clm8}, we have $F_v \cap F_{u} = \{1\} \subseteq F''_v = \{c'(v,a),c'(v,b)\}$. Without loss of generality let $c'(u,z)=c'(v,a)=1$. Also let $c'(u,w)=2$, $c'(v,b)=3$, $c'(v,v_1)=4$ and $c'(v,v_2)=5$. Since $\vert F_v \cup F_u \vert =5$, there are $\Delta-2$ candidate colors for the edge $vu$. If none of them are valid then there exists a $(1,\alpha,vu)$ critical path for each $\alpha \in C-(F_v \cup F_{u})=C-\{1,2,3,4,5\}$. Thus  we have the following observation:

\begin{obs}
\label{obs:obs2}
With respect to the coloring $c'$, each color in $C-\{1,2,3,4,5\}$ is actively present in $S_{uz}$ as well as $S_{va}$.
\end{obs}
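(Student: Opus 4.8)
The plan is to read the observation straight off Fact~\ref{fact:fact2}, using only that $G$ is a minimum counter example. First I would fix the count of candidate colours: in subcase~2.1 we have $F_v\cup F_u=\{1,2,3,4,5\}$ and $|C|=\Delta+3$, so the $\Delta-2\ge 3$ colours of $C-\{1,2,3,4,5\}$ are exactly the candidate colours for the uncoloured edge $vu$ (no edge incident on $v$ or $u$ carries such a colour). If even one of them were valid we could colour $vu$ and contradict minimality, so every colour of $C-\{1,2,3,4,5\}$ is an invalid candidate for $vu$.

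Next I would apply Fact~\ref{fact:fact2} to the uncoloured edge $vu$. Here $S_{vu}\cap S_{uv}=F_u\cap F_v$, which in case~2 is the singleton $\{1\}$; hence the only colour that can play the role of ``$\alpha$'' in Fact~\ref{fact:fact2} is $1$. Therefore, for each $\beta\in C-\{1,2,3,4,5\}$, the invalidity of $\beta$ for $vu$ is equivalent to the existence of a $(1,\beta,vu)$ critical path with respect to $c'$.

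Finally I would unwind the definitions. A $(1,\beta,vu)$ critical path leaves $v$ along an edge coloured $1$ and enters $u$ along an edge coloured $1$; since $c'(v,a)=1$ is the unique edge coloured $1$ at $v$ and $c'(u,z)=1$ is the unique edge coloured $1$ at $u$, such a path must begin $v,a,\dots$ and end $\dots,z,u$. Its edge incident on $a$ (and its edge incident on $z$) is coloured $\beta$, so $\beta\in F_a-\{1\}=S_{va}$ and $\beta\in F_z-\{1\}=S_{uz}$; combined with the existence of the $(1,\beta,vu)$ critical path, this is exactly the statement that $\beta$ is actively present in $S_{va}$ and in $S_{uz}$ with respect to the edge $vu$. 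Since $\beta$ was arbitrary in $C-\{1,2,3,4,5\}$, the observation follows. There is essentially no obstacle here: the only point needing care is noticing that $1$ is the unique admissible choice for ``$\alpha$'' in Fact~\ref{fact:fact2} and then tracking where a $1$-coloured edge can be incident at $v$ and at $u$ so as to locate the endpoints of the critical path.
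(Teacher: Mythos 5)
Your proposal is correct and matches the paper's (largely implicit) justification: the paper likewise notes that all $\Delta-2$ candidate colors for $vu$ must be invalid by minimality, invokes the fact that $F_v\cap F_u=\{1\}$ forces a $(1,\alpha,vu)$ critical path for each such $\alpha$, and reads off that each $\alpha$ is actively present in $S_{uz}$ and $S_{va}$ since the $1$-colored edges at $v$ and $u$ are $va$ and $uz$. Your extra care in pinning down $\alpha=1$ via Fact~\ref{fact:fact2} and tracing the path's endpoints is exactly the intended reasoning.
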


\begin{clm}
\label{clm:clm9}
$S_{uz} = C-\{1,3,4,5\}$ and $1,4,5 \in S_{uw}$.
\end{clm}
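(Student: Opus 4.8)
We are in subcase 2.1: $v$ is the pivot of configuration $B3$, so $\deg(v)=5$, $\deg(u)=\deg(v_1)=\deg(v_2)=3$, and $N(v)=\{u,v_1,v_2,a,b\}$ with $N'(v)=\{u,v_1,v_2\}$, $N''(v)=\{a,b\}$. We have a partial coloring $c'$ of $G$ with $F_v\cup F_u=\{1,2,3,4,5\}$, $c'(u,z)=c'(v,a)=1$, $c'(u,w)=2$, $c'(v,b)=3$, $c'(v,v_1)=4$, $c'(v,v_2)=5$, and Observation~\ref{obs:obs2} tells us every color in $C-\{1,2,3,4,5\}$ is actively present both in $S_{uz}$ (via a $(1,\alpha,vu)$ critical path) and in $S_{va}$.

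The plan is to prove the two halves of the claim separately, each by contradiction, showing that a violation yields a valid color for the uncolored edge $vu$. For the first half, $S_{uz}=C-\{1,3,4,5\}$, the inclusion $C-\{1,3,4,5\}\supseteq C-\{1,2,3,4,5\}\subseteq S_{uz}$ already follows from Observation~\ref{obs:obs2}; since $S_{uz}\subseteq C-\{c'(u,z)\}=C-\{1\}$ and $\deg(u)=3$ forces $|S_{uz}|\le 2$, we get $|S_{uz}|=\Delta-2$ only if $\Delta-2\le 2$, so this needs more care --- in fact $|C-\{1,2,3,4,5\}|=\Delta-2$ and $|S_{uz}|\le\Delta-1$, so the real content is that $2\notin S_{uz}$ and $S_{uz}$ omits exactly one more color, namely that $S_{uz}$ misses $\{1,3,4,5\}$ but not $2$. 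Here I would argue: $2\notin S_{uz}$ because $c'(u,w)=2$ is adjacent at $u$; and if some $\beta\in\{3,4,5\}$ were in $S_{uz}$ then... one must recolor $uz$ or $uw$ to free $vu$. The cleaner route: all $\Delta-2$ colors of $C-\{1,2,3,4,5\}$ lie in $S_{uz}$, which has size $\le\Delta-1$, so $S_{uz}$ contains at most one color from $\{2,3,4,5\}$; since $2\notin S_{uz}$, at most one of $3,4,5$ is in $S_{uz}$. To rule that out, suppose $3\in S_{uz}$ (the cases $4,5$ are symmetric); then $S_{uz}=C-\{1,4,5\}$. Then I would recolor $uw$ with a color $\gamma\in\{4,5\}\setminus S_{uw}$ (available since $|S_{uw}|\le\Delta-1$ and ... need to check), reducing $F_v\cap F_u$ and using Observation~\ref{obs:obs2}-type reasoning to extend; alternatively recolor $uz$ itself with a color making $F_v\cap F_u=\emptyset$, contradicting Claim~\ref{clm:clm1}. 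So $S_{uz}=C-\{1,3,4,5\}$, i.e.\ $S_{uz}=(C-\{1,2,3,4,5\})\cup\{2\}$ --- wait, $2\notin S_{uz}$, so actually $S_{uz}=C-\{1,2,3,4,5\}$ has size $\Delta-2$, consistent with $|S_{uz}|\le 2$ only when $\Delta=4$; since $\Delta\ge 5$ this is a contradiction unless I have miscounted, so in fact $\Delta-2\le 2$ fails and the right reading is that $\deg(u)=3$ gives $|S_{uz}|\le 2$, forcing $\Delta-2\le 2$, i.e.\ the Observation~\ref{obs:obs2} scenario ("none valid") cannot persist and some color is already valid --- which would finish the whole subcase. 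I would double-check the bookkeeping on $|S_{uz}|$ against $\Delta$ carefully before committing to the argument's shape.

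**Second half: $1,4,5\in S_{uw}$.** Since $c'(v,a)=c'(u,z)=1$, $c'(v,v_1)=4$, $c'(v,v_2)=5$, and all colors of $C-\{1,2,3,4,5\}$ are (by the first half / Observation~\ref{obs:obs2}) concentrated in $S_{uz}$ rather than $S_{uw}$, the critical paths through $u$ for the colors $1,4,5$ must be realized via $uw$. Concretely, for each $\theta\in\{1,4,5\}$ I would show that $\theta\notin S_{uw}$ lets us recolor $uw$ with $\theta$ (checking properness via Fact~\ref{fact:fact3}: need $\theta\notin S_{uw}$ and $c'(u,w)=2\notin S_{u\,?}$...) and then, by Lemma~\ref{lem:lem2}, break the relevant critical paths, after which a recount of candidate colors for $vu$ against $|S_{uz}|\le\Delta-1$ and $|S_{vv_j}|=2$ yields a valid color --- contradiction. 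The key leverage throughout is that $v_1,v_2$ (and $u$) are degree-$3$ vertices, so $|S_{vv_1}|=|S_{vv_2}|=2$ and $|S_{uz}|,|S_{uw}|\le 2$, which is far too small to "absorb" the $\Delta-2\ge 3$ candidate colors unless the colors are arranged exactly as the claim asserts.

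**Main obstacle.** The delicate part is the recoloring bookkeeping: each time we recolor $uz$ or $uw$ we must (i) verify the new coloring is proper (Fact~\ref{fact:fact3}), (ii) verify no new bichromatic cycle is created, typically by invoking Fact~\ref{fact:fact1} against an already-existing critical path on the same color pair, and (iii) re-derive which colors remain actively present in the (small) sets $S_{uz},S_{uw},S_{vv_1},S_{vv_2}$ so that a final counting argument forces a valid color on $vu$. Getting the color pair of each potential bichromatic cycle exactly right --- and ensuring the "witness" critical path needed for the Fact~\ref{fact:fact1} contradiction genuinely exists under the new coloring --- is where the real work lies; the degree-$3$ size bounds do the rest essentially mechanically.
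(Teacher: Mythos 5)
Your proposal founders on a misreading of the paper's notation. By definition $S_{uz}=F_z-\{c'(u,z)\}$ is the set of colors seen at the \emph{far} endpoint $z$, so $\vert S_{uz}\vert = deg(z)-1\le\Delta-1$; it is $S_{zu}=F_u-\{c'(u,z)\}=\{2\}$ that has size at most $2$ because $deg(u)=3$. You repeatedly invoke ``$deg(u)=3$ forces $\vert S_{uz}\vert\le 2$'' (and later ``$\vert S_{uz}\vert,\vert S_{uw}\vert\le 2$''), which would make Observation~\ref{obs:obs2} self-contradictory and, as you yourself notice, would collapse the entire subcase; you flag the inconsistency but never resolve it, so the first half of the claim is not actually proved. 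The same misreading produces the false assertion that $2\notin S_{uz}$ ``because $c'(u,w)=2$ is adjacent at $u$'': adjacency at $u$ only keeps $2$ out of $S_{zu}$, and the claim you are trying to prove asserts precisely that $2\in S_{uz}$. Note also that $\vert S_{uw}\vert\le 2$ is flatly incompatible with the conclusion $1,4,5\in S_{uw}$, which should have been a signal that the size bound was attached to the wrong vertex.

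The missing content is the paper's actual mechanism. Since $C-\{1,2,3,4,5\}\subseteq S_{uz}$ and $\vert S_{uz}\vert\le\Delta-1$, at most one color of $\{2,3,4,5\}$ lies in $S_{uz}$. If one of $4,5$ were in $S_{uz}$, the other would be a valid color for $uz$ by Lemma~\ref{lem:lem1} (as $S_{uz}\cap S_{zu}=S_{uz}\cap\{2\}=\emptyset$ then), giving $F_u\cap F'_v\neq\emptyset$ and contradicting Claim~\ref{clm:clm8}; hence $4,5\notin S_{uz}$. Then recoloring $uz$ with $4$ (resp.\ $5$) must \emph{fail} for the same reason, forcing a $(2,4)$ (resp.\ $(2,5)$) bichromatic cycle through $uw$, which is exactly what yields $2\in S_{uz}$ and $4,5\in S_{uw}$, and the size bound then pins $S_{uz}=C-\{1,3,4,5\}$. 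Finally, $1\in S_{uw}$ because otherwise one assigns $1$ to $uw$ and $4$ to $uz$ simultaneously (valid by Lemma~\ref{lem:lem1}), obtaining $\vert F_u\cap F_v\vert=2$ and reducing to case~1. Your sketch for the second half --- recoloring $uw$ directly with $\theta\in\{1,4,5\}$ --- is left unexecuted, and for $\theta=1$ it cannot even start, since $1=c'(u,z)$ is not a candidate for $uw$ without the simultaneous swap on $uz$.
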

\begin{proof}
Since $C-\{1,2,3,4,5\} \subseteq S_{uz}$ and $\vert S_{uz} - (C-\{1,2,3,4,5\}) \vert \le 1$ we infer that at most one of $4$, $5$ can be present in $S_{uz}$. Suppose one of $4$, $5$ $\in S_{uz}$. Without loss of generality let $4 \in S_{uz}$. Now recolor edge $uz$ using color $5$. It is valid by $Lemma$ \ref{lem:lem1} since $S_{uz} \cap S_{zu} = S_{uz} \cap \{2\} = \emptyset$. Thus we have reduced the situation to $F_u \cap F'_v \neq \emptyset$, a contradiction to $Claim$ \ref{clm:clm8}. Thus we have $4,5 \notin S_{uz}$. Recolor edge $uz$ using color $4$ or $5$. If any one of them is valid then we will have $F_{u} \cap  F'_v \neq \emptyset$ with respect to this new coloring, a contradiction to $Claim$ \ref{clm:clm8}. It follows that none of them are valid. That is, bichromatic cycles get formed due to the recoloring. Clearly the bichromatic cycles have to be $(2,4)$ and $(2,5)$ bichromatic cycles since $c'(u,w)=2$. Thus $2 \in S_{uz}$ and $4,5 \in S_{uw}$. Recalling that $C-\{1,2,3,4,5\} \subseteq S_{uz}$ and $\vert S_{uz} \vert \le \Delta-1$ we can infer that $S_{uz} = C-\{1,3,4,5\}$.

Now if $1 \notin S_{uw}$, then assign color $1$ to edge $uw$ and the color $4$ to edge $uz$. Clearly this recoloring is valid by $Lemma$ \ref{lem:lem1} since $S_{zu} \cap S_{uz} = \{1\} \cap C-\{1,3,4,5\} = \emptyset$. With respect to the new coloring, $F_u \cap F_v=\{1,4\}$ which reduces the situation to $case\ 1$. Thus we infer that $1 \in S_{uw}$. Therefore we have $1,4,5 \in S_{uw}$.
\end{proof}

\begin{clm}
\label{clm:clm10}
$\vert (C-\{1,2,3,4,5\})- S_{uw} \vert \ge 2$.
\end{clm}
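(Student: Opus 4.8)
The goal is to show that at most $\Delta-3$ colors of $C-\{1,2,3,4,5\}$ can lie in $S_{uw}$, i.e.\ $\vert(C-\{1,2,3,4,5\})-S_{uw}\vert\ge 2$. Since $\vert C-\{1,2,3,4,5\}\vert=\Delta-2$, this amounts to proving that at least two colors of this set are \emph{missing} from $S_{uw}$. The natural strategy is contradiction: assume $\vert(C-\{1,2,3,4,5\})-S_{uw}\vert\le 1$, so all but possibly one color of $C-\{1,2,3,4,5\}$ is in $S_{uw}$. Combined with Claim~\ref{clm:clm9} (which gives $1,4,5\in S_{uw}$) and $\vert S_{uw}\vert\le\Delta-1$, this forces $S_{uw}$ to be almost entirely determined: it must contain $\{1,4,5\}$, at least $\Delta-3$ colors of $C-\{1,2,3,4,5\}$, and $\vert S_{uw}\vert\le\Delta-1$ leaves essentially no room for anything else, in particular it pins down whether $2,3\in S_{uw}$.

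First I would use the bound $\vert S_{uw}\vert\le\Delta-1$ together with $\{1,4,5\}\subseteq S_{uw}$ and the assumption to conclude that $S_{uw}\cap\{2,3\}=\emptyset$ and that exactly $\Delta-3$ colors of $C-\{1,2,3,4,5\}$ are in $S_{uw}$ (so exactly one such color, call it $\gamma$, is outside $S_{uw}$). Next I would exploit that $c'(u,w)=2$: since $\gamma\notin S_{uw}$ and $\gamma\notin F_u\cup F_v$, the color $\gamma$ is a candidate for $uz$ as well (as $c'(u,z)=1$ and $\gamma\ne 1$, provided $\gamma\notin S_{uz}$, which holds since $S_{uz}=C-\{1,3,4,5\}$ by Claim~\ref{clm:clm9}, so $\gamma\notin S_{uz}$ iff $\gamma\in\{4,5\}$—wait, $\gamma\in C-\{1,2,3,4,5\}$ so $\gamma\in S_{uz}$ always). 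So $\gamma\in S_{uz}$; then recoloring $uw$ with $\gamma$ would be the move to try, after checking properness via Fact~\ref{fact:fact3} ($2\notin S_{uz}$? no, $2\in S_{uz}$ by Claim~\ref{clm:clm9}, so that exchange-style recolor fails)—hence I'd instead recolor $uw$ directly with $\gamma$, which is a candidate since $\gamma\notin S_{uw}$ and $\gamma\neq 1=c'(u,z)$; properness needs only that no edge at $w$ is colored $\gamma$, which is exactly $\gamma\notin S_{uw}$.

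After recoloring $uw$ with $\gamma$, I expect the coloring to be valid (any new bichromatic cycle would be $(2,\gamma)$ through $uw$, giving a $(2,\gamma,uw)$ critical path; but $\gamma$ was actively present in $S_{uz}$ via a $(1,\gamma,vu)$ path from Observation~\ref{obs:obs2}, and one argues this precludes the new cycle, or more directly: $\gamma\notin F_v$ so no cycle through $v$, and the $(2,\gamma)$ path structure can be controlled). With the new coloring $c''$, $S_{uw}$ changes—now $2\notin F_u$ replaced by $\gamma$, so $F_u=\{1,\gamma\}$, still $\vert F_v\cap F_u\vert=1$, but now color $2\notin F_u$ and $2\notin F_v$... actually $2\in F_v$? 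No: in this subcase $F_v\cup F_u=\{1,2,3,4,5\}$ with $F_u=\{1,2\}$, so $2\in F_v$ would require $2\in F_v$, but $F_v=\{1,3,4,5,c'(v,b)\}$—here $c'(v,b)=3$, so $F_v=\{1,3,4,5\}\cup\{?\}$; actually $deg(v)=5$ gives $\vert F_v\vert=4$, so $F_v=\{1,3,4,5\}$ and $2\notin F_v$. Then after recoloring, $F_u\cap F_v=\{1\}$ still but now $2$ is a fresh candidate for $vu$ with $2\notin S_{uz},S_{uw}$ appropriately checked; more simply, $2\notin F_u\cup F_v$ now becomes... no $1\in$ both. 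The cleanest finish: show the recolored coloring reduces to \textbf{case 1} or directly yields a valid color for $vu$, contradicting that $G$ is a counterexample.

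\textbf{Main obstacle.} The delicate part is verifying the \emph{validity} (acyclicity) of the recoloring of $uw$ with the missing color $\gamma$: one must rule out a newly created $(2,\gamma)$-bichromatic cycle, which requires invoking Fact~\ref{fact:fact1} against the $(1,\gamma,vu)$ critical path guaranteed by Observation~\ref{obs:obs2}, and carefully tracking that this path passes through $u$ but not via the edge $uw$. A secondary subtlety is the bookkeeping: the argument hinges on the exact equalities $\vert S_{uw}\vert\le\Delta-1$, $\{1,4,5\}\subseteq S_{uw}$, and the assumed $\vert(C-\{1,2,3,4,5\})-S_{uw}\vert\le1$ being jointly tight enough to force $2,3\notin S_{uw}$, which in turn is what makes $\gamma$ simultaneously a candidate for $uw$ and useful for freeing up a color at $vu$. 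I would organize the proof so that the pigeonhole count comes first and is stated cleanly, then the single recoloring move, then the short contradiction.
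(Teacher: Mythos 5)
Your opening pigeonhole setup is the right idea, but you fail to notice that it already finishes the proof, and the long recoloring detour you then embark on is both unnecessary and left incomplete. Concretely: if $\vert(C-\{1,2,3,4,5\})-S_{uw}\vert\le 1$, then $S_{uw}$ contains at least $\Delta-3$ colors of $C-\{1,2,3,4,5\}$ (as you note), and by Claim~\ref{clm:clm9} it also contains the three colors $1,4,5$, which are disjoint from that set. Hence $\vert S_{uw}\vert\ge(\Delta-3)+3=\Delta$, contradicting $\vert S_{uw}\vert\le\Delta-1$. That is the entire proof; the paper phrases it directly (no contradiction needed): $\vert C-S_{uw}\vert\ge(\Delta+3)-(\Delta-1)=4$ colors are missing from $S_{uw}$, at most two of which ($2$ and $3$) can come from $\{1,2,3,4,5\}$ since $1,4,5\in S_{uw}$, so at least two missing colors lie in $C-\{1,2,3,4,5\}$.

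The genuine gap is that, instead of closing this count, you assert the assumption merely ``pins down whether $2,3\in S_{uw}$'' and then build an elaborate recoloring of $uw$ with the one missing color $\gamma$. That branch is never completed: the validity of the recoloring is waved at (``one argues this precludes the new cycle\ldots the $(2,\gamma)$ path structure can be controlled''), and the final contradiction is left as ``show the recolored coloring reduces to case 1 or directly yields a valid color.'' Those are precisely the delicate steps that would need proof, and they are not needed here at all --- Claim~\ref{clm:clm10} is a pure counting consequence of Claim~\ref{clm:clm9} and $\vert S_{uw}\vert\le\Delta-1$. The recoloring machinery you sketch is the kind of argument the paper deploys \emph{later} (Claims~\ref{clm:clm11} and~\ref{clm:clm12}), where it genuinely is required.
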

\begin{proof}
Since $\vert S_{uw} \vert \le \Delta-1$ there are at least four colors missing from $S_{uw}$. Thus even if colors $2$ and $3$ are missing from $S_{uw}$ there should be at least two colors in $C-\{1,2,3,4,5\}$ that are absent in $S_{uw}$ since $1,4,5 \in S_{uw}$ by $Claim$ \ref{clm:clm9}.
\end{proof}

\noindent Now discard the color on the edge $uw$ to obtain a partial coloring $d$ of $G$ from $c'$.

\begin{clm}
\label{clm:clm11}
With respect to coloring $d$, $\forall \alpha \in C-\{1,3,4,5\}$, there exists a $(1,\alpha,vu)$ critical path.
\end{clm}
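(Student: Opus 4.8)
The plan is to split $C-\{1,3,4,5\}$ as $\{2\}\cup(C-\{1,2,3,4,5\})$ and treat the two parts separately. For $\alpha\in C-\{1,2,3,4,5\}$ the statement is inherited from $c'$: by $Observation$ \ref{obs:obs2} the colour $\alpha$ is actively present in $S_{uz}$ with respect to $c'$, i.e. there is a $(1,\alpha,vu)$ critical path with respect to $c'$; and since passing from $c'$ to $d$ only uncolours the edge $uw$, which is coloured $2\notin\{1,\alpha\}$, neither the set of edges coloured $1$ nor the set of edges coloured $\alpha$ changes. Hence the $(1,\alpha)$-bichromatic subgraph is the same under $c'$ and $d$, and the same path is a $(1,\alpha,vu)$ critical path with respect to $d$.

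The content of the claim is the case $\alpha=2$. With respect to $d$ we have $F_u=\{1\}$ and $F_v=\{1,3,4,5\}$, so $S_{vu}\cap S_{uv}=\{1\}$ and $2$ is a candidate colour for $vu$; by $Fact$ \ref{fact:fact2} it therefore suffices to show that $2$ is \emph{not} valid for $vu$ with respect to $d$. Suppose it were. Colour $vu$ with $2$, obtaining a valid partial colouring $\hat c$ of $G$ whose only uncoloured edge is $uw$; it would then suffice to colour $uw$ as well, contradicting the minimality of $G$. By $Claim$ \ref{clm:clm9} we have $1,4,5\in S_{uw}(c')\subseteq F_w$, and since $c'(u,w)=2$ also $2\in F_w$; as $F_u(\hat c)=\{1,2\}$, the candidate colours for $uw$ are exactly $C-F_w$, a set of size at least $4$. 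A $(2,\gamma,uw)$ critical path would leave $u$ along $uv$ and enter $v$, which is impossible when $\gamma\in C-\{1,2,3,4,5\}$ since $F_v=\{1,2,3,4,5\}$; hence each of the (at least three) candidate colours for $uw$ in $C-\{1,2,3,4,5\}$ is blocked, if at all, only by a $(1,\gamma,uw)$ critical path, every such path beginning with the edge $uz$. One would then recolour $uz$ — whose only admissible colours are $\{3,4,5\}$, since $S_{uz}=C-\{1,3,4,5\}$ by $Claim$ \ref{clm:clm9} — so as to destroy all of these critical paths at once and make $uw$ colourable, using that $\deg(z)=\Delta$ and $F_z=C-\{3,4,5\}$ (again $Claim$ \ref{clm:clm9}) to control the only possible obstruction to such a recolouring, namely a $(2,\mu,uz)$ critical path, as well as $Fact$ \ref{fact:fact1} and, if necessary, a colour exchange at $u$ ($Lemma$ \ref{lem:lem2}).

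The step I expect to be the main obstacle is precisely this last one: having committed colour $2$ to $vu$, one must colour the single remaining edge $uw$. $Lemma$ \ref{lem:lem1} disposes of the easy sub-case $\{1,2\}\cap F_w=\emptyset$, but otherwise one is forced to break the $(1,\gamma,uw)$ critical paths through a recolouring of $uz$ that has only three colours to choose from, and the delicate part is to argue — from the rigid structure forced by $Claim$ \ref{clm:clm9}, together with $Fact$ \ref{fact:fact1} and possibly $Lemma$ \ref{lem:lem2} — that $3$, $4$ and $5$ cannot all be simultaneously obstructed and that a successful recolouring creates no new bichromatic cycle. Everything else in the proof of $Claim$ \ref{clm:clm11} is routine bookkeeping.
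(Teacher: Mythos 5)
Your handling of the colors $\alpha \in C-\{1,2,3,4,5\}$ is correct and is exactly what the paper does: those critical paths are inherited from $c'$ via $Observation$ \ref{obs:obs2} because uncoloring $uw$ (colored $2$) does not touch the $(1,\alpha)$-colored subgraph. Your reduction of the remaining case to ``$2$ is not valid for $vu$ with respect to $d$'' is also correct. The problem is the contradiction you need when you suppose $2$ is valid: after assigning $2$ to $vu$ you must exhibit a valid color for the one remaining edge $uw$, and this is precisely the step you leave unproved. The plan you sketch for it --- recolor $uz$ with one of $3,4,5$ so as to destroy all the $(1,\gamma,uw)$ critical paths, then control the resulting obstructions via $Fact$ \ref{fact:fact1}, $Lemma$ \ref{lem:lem2} and a color exchange --- is not carried out, and you yourself flag the feasibility of that recoloring as ``the delicate part.'' As written, the proof of the claim's essential content (the color $2$) is therefore incomplete.

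What you are missing is that no recoloring of $uz$ is needed at all: the tools already on the table close the argument in one line. By $Claim$ \ref{clm:clm10} there exists $\theta \in (C-\{1,2,3,4,5\})-S_{uw}$, so $\theta$ is a candidate for $uw$ (its neighbors see only $1,2$ at $u$ and the colors of $S_{uw}$ at $w$). If assigning $\theta$ to $uw$ created a bichromatic cycle it would be a $(1,\theta)$ or $(2,\theta)$ cycle; a $(2,\theta)$ cycle is impossible since the $2$-colored edge at $u$ is $uv$ and $\theta \notin F_v$, and a $(1,\theta)$ cycle would require a $(1,\theta,uw)$ critical path, which is ruled out by $Fact$ \ref{fact:fact1} because $Observation$ \ref{obs:obs2} already gives a $(1,\theta,vu)$ critical path (unchanged in passing to $d$ and then to your $\hat c$). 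So $\theta$ is valid for $uw$, $G$ is colored, contradiction. This is exactly the paper's argument, except that the paper performs the two steps in the opposite (and slightly cleaner) order: it first recolors $uw$ with $\theta$ to get a full valid coloring $d'$ of $G-\{vu\}$, then observes that $2$ cannot be valid for $vu$ in $d'$ (else $G$ is not a counterexample), extracts the $(1,2,vu)$ critical path from $Fact$ \ref{fact:fact2}, and transfers it back to $d$ since $\theta \notin \{1,2\}$. Either order works; what your write-up lacks is the $Fact$ \ref{fact:fact1} observation that makes $\theta$ valid for $uw$ for free.
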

\begin{proof}
With respect to the coloring $c'$, there existed $(1,\alpha,vu)$ critical path for all $\alpha \in C-(F_v \cup F_{u})=C-\{1,2,3,4,5\}$ by $Observation$ \ref{obs:obs2}. These critical paths remain unaltered when we get $d$ from $c'$. Thus these critical paths are present in $d$ also. Thus it is enough to prove that there exists $(1,2,vu)$ critical path with respect to the coloring $d$. Let $\theta \in (C-\{1,2,3,4,5\})- S_{uw}$. Note that $\theta$ exists by $Claim$ \ref{clm:clm10}. Now color $\theta$ is a candidate for the edge $uw$ since $\theta \notin S_{uw}$ and $d(u,z)=1$. Recolor the edge $uw$ using color $\theta$ to get a coloring $d'$. The coloring $d'$ is valid since otherwise a $(1,\theta)$ bichromatic cycle has to be created due to the recoloring. This means that there existed a $(1,\theta,uw)$ critical path with respect to coloring $c'$, a contradiction by $Fact$ \ref{fact:fact1} as there already existed a $(1,\theta,vu)$ critical path with respect to the coloring $c'$ by $Observation$ \ref{obs:obs2}. Thus the coloring $d'$ is valid.

Now color $2$ is a candidate for the edge $vu$. If it is valid we get a valid coloring for $G$. Thus it is not valid. This means that there exists a $(1,2,vu)$ critical path with respect to the coloring $d'$ since $F_v\cap F_u=\{1\}$ with respect to the coloring $d'$. Now it is easy to see that this  $(1,2,vu)$ critical path will also exist with respect to coloring $d$. Thus with respect to the coloring $d$, $\forall \alpha \in C-\{1,3,4,5\}$, there exists a $(1,\alpha,vu)$ critical path.
\end{proof}

\begin{obs}
\label{obs:obs3}
Let $Q= (C-\{1,3,4,5\})-S_{uw}$. From Claim \ref{clm:clm10}, we know that $\vert (C-\{1,2,3,4,5\})-S_{uw} \vert \ge 2$. Since $c'(u,w)=2$ we have $2 \notin S_{uw}$. From this we can infer that $2 \in Q$. Thus $\vert Q \vert \ge 3$.
\end{obs}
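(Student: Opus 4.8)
The plan is to prove Observation \ref{obs:obs3} by a direct set-theoretic comparison of $Q$ with the set that already appeared in Claim \ref{clm:clm10}. First I would note that $C-\{1,3,4,5\}$ differs from $C-\{1,2,3,4,5\}$ only by the single color $2$, so that $Q=(C-\{1,3,4,5\})-S_{uw}$ can be written as the union $\bigl((C-\{1,2,3,4,5\})-S_{uw}\bigr)\cup\bigl(\{2\}-S_{uw}\bigr)$.

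Next I would invoke the definition $S_{uw}(c')=F_w-\{c'(u,w)\}$ together with the standing hypothesis $c'(u,w)=2$ to conclude that $2\notin S_{uw}$; hence $\{2\}-S_{uw}=\{2\}$. Since clearly $2\notin C-\{1,2,3,4,5\}$, the two sets in the union displayed above are disjoint, and therefore $\vert Q\vert=\vert (C-\{1,2,3,4,5\})-S_{uw}\vert+1$.

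Finally, applying Claim \ref{clm:clm10}, which asserts $\vert (C-\{1,2,3,4,5\})-S_{uw}\vert\ge 2$, gives $\vert Q\vert\ge 3$, as desired. There is no genuine obstacle in this argument: it is pure bookkeeping over which colors have been excluded, which is presumably why the authors record it as an observation rather than as a claim with its own numbered proof. The only point to be careful about is the appeal to the definition of $S_{uw}$ to guarantee $2\notin S_{uw}$; everything else is immediate.
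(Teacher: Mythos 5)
Your proposal is correct and follows essentially the same route as the paper, which embeds its justification directly in the statement of the observation: invoke Claim \ref{clm:clm10} for the bound of $2$, use $c'(u,w)=2$ to get $2\notin S_{uw}$ and hence $2\in Q$, and add one to the count. Your explicit disjoint-union decomposition of $Q$ is just a slightly more formal write-up of the identical bookkeeping.
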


\begin{clm}
\label{clm:clm12}
There exists a color $\gamma \in Q$ such that $\gamma$ is valid for the edge $vv_1$ or $vv_2$.
\end{clm}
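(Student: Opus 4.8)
The plan is to split into two cases according to whether some color of $Q$ avoids both $S_{vv_1}$ and $S_{vv_2}$. Every recoloring below is performed on the partial coloring $d$, which is acyclic since it is obtained from the acyclic coloring $c'$ by deleting one edge-color. I would first isolate a fact to be reused: by $Claim$ \ref{clm:clm11}, every $\gamma\in C-\{1,3,4,5\}$ --- in particular every $\gamma\in Q$ --- admits a $(1,\gamma,vu)$ critical path in $d$; since the only color-$1$ edge at $v$ is $va$ and the only color-$1$ edge at $u$ is $uz$, this path is by $Fact$ \ref{fact:fact1} the unique maximal $(1,\gamma)$-bichromatic path through $v$, with $v$ as one endpoint and $u$ as the other. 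Also note that $F_v=\{1,3,4,5\}$, so any $\gamma\in Q$ is automatically a candidate for $vv_1$ and for $vv_2$ as far as the end $v$ is concerned.

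Next I would pin down how a recoloring can fail. Fix $\gamma\in Q$ with $\gamma\notin S_{vv_1}$ (so $\gamma$ is a candidate for $vv_1$) and recolor $vv_1$ by $\gamma$; the coloring stays proper, so it can fail only through a $(\delta,\gamma)$-bichromatic cycle through $vv_1$, which forces $\delta\in\{1,3,5\}$ (a color present at $v$ on an edge other than $vv_1$) and $\delta\in S_{vv_1}$, and produces a $(\delta,\gamma,vv_1)$ critical path leaving $v$ along the color-$\delta$ edge. Here $\delta=1$ is impossible: that path would be a maximal $(1,\gamma)$-path through $v$ ending at $v_1$, whereas the one described above ends at $u\neq v_1$. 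And $\delta=5$ makes the path leave $v$ along $vv_2$ with its second edge colored $\gamma$, forcing $\gamma\in S_{vv_2}$. Thus $vv_1$ can be blocked only by a $(3,\gamma,vv_1)$ critical path, which leaves $v$ along the edge $vb$ --- unless $\gamma\in S_{vv_2}$. The symmetric analysis for $vv_2$ (colored $5$) shows it can fail only by $\delta\in S_{vv_2}\cap\{1,3,4\}$, with $\delta=1$ impossible and $\delta=4$ forcing $\gamma\in S_{vv_1}$; so $vv_2$ can be blocked only by a $(3,\gamma,vv_2)$ critical path --- unless $\gamma\in S_{vv_1}$.

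With this the two cases are quick. If some $\gamma\in Q$ lies outside $S_{vv_1}\cup S_{vv_2}$, it is a candidate for both edges, and by the previous paragraph each of $vv_1,vv_2$ can be blocked only by a $(3,\gamma,\cdot)$ critical path; but both such paths would leave $v$ along the single edge $vb$, so by $Fact$ \ref{fact:fact1} and $v_1\neq v_2$ at most one of them can exist, hence $\gamma$ is valid for one of the two edges. Otherwise $Q\subseteq S_{vv_1}\cup S_{vv_2}$; since $\vert Q\vert\ge 3$ by $Observation$ \ref{obs:obs3} while $\vert S_{vv_1}\vert,\vert S_{vv_2}\vert\le 2$, pigeonhole gives (say) $\vert Q\cap S_{vv_1}\vert\ge 2$, so $S_{vv_1}\subseteq Q\subseteq C-\{1,3,4,5\}$, whence $S_{vv_1}\cap\{1,3,5\}=\emptyset$ and, by the obstruction analysis, nothing can block recoloring $vv_1$; since $\vert Q\setminus S_{vv_1}\vert\ge 1$, any color in $Q\setminus S_{vv_1}$ is a valid candidate for $vv_1$.

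I expect the delicate point to be the obstruction analysis of the second paragraph: one must track carefully which colors $\delta$ can close a bichromatic cycle through the recolored edge --- this depends on its current color ($4$ for $vv_1$, $5$ for $vv_2$) and on whether $\gamma$ sits in the opposite set $S_{vv_i}$ --- and must invoke the $(1,\gamma,vu)$ critical path of $Claim$ \ref{clm:clm11} together with $Fact$ \ref{fact:fact1} to eliminate the case $\delta=1$. Once that is done, the case split and a single use of pigeonhole finish the argument.
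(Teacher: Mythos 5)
Your proof is correct and takes essentially the same route as the paper's: the easy case (where some $S_{vv_i}$ ends up inside $Q$) is dispatched because the candidate color meets no obstruction (the paper invokes Lemma~\ref{lem:lem1}), and the main case is settled exactly as in the paper by ruling out $(1,\gamma)$ cycles via Claim~\ref{clm:clm11} together with Fact~\ref{fact:fact1}, ruling out $(5,\gamma)$ (resp.\ $(4,\gamma)$) cycles via $\gamma \notin S_{vv_2}$ (resp.\ $\gamma \notin S_{vv_1}$), and noting that a $(3,\gamma,\cdot)$ critical path can block at most one of $vv_1,vv_2$ by Fact~\ref{fact:fact1}. The only cosmetic difference is that you split on whether $Q \subseteq S_{vv_1}\cup S_{vv_2}$ while the paper splits on whether some $S_{vv_i}\subset Q$; both decompositions are exhaustive and lead to the same two arguments.
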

\begin{proof}

Recall that $\vert S_{vv_1} \vert =2$, $\vert S_{vv_2} \vert =2$ and by $Observation$ \ref{obs:obs2}, $\vert Q \vert \ge 3$.

\begin{itemize}
\item If $S_{vv_1} \subset Q$ or $S_{vv_2} \subset Q$. Without loss of generality let $S_{vv_1} \subset Q$. Let $\gamma$ be a color in $Q-S_{vv_1}$. Recolor edge $vv_1$ using color $\gamma$ to get a coloring $d'$. The coloring $d'$ is valid by $Lemma$ \ref{lem:lem1} as $S_{vv_1} \cap S_{v_1v}=\emptyset$ since $Q \cap F_v = \emptyset$.

\item If $S_{vv_1} \nsubseteq Q$ and $S_{vv_2} \nsubseteq Q$. In this case, at most one color in $Q$ can be in $S_{vv_1}$ and the same holds true for $S_{vv_2}$. Thus all the colors of $Q$ except for one are candidates for edge $vv_1$ and all the colors of $Q$ except for one are candidates for edge $vv_2$. Since $\vert Q \vert \ge 3$, we can infer that there exists a color $\gamma \in Q$ which is a candidate for both $vv_1$ and $vv_2$.

~~~~~~

\noindent \textbf{subclaim} \emph{ Color $\gamma$ is valid either for the edge $vv_1$ or for the edge $vv_2$.}
\begin{proof}
Recolor $vv_1$ using color $\gamma$. If $\gamma$ is valid, we are done. If it is not valid, then there has to be a $(\gamma,\theta)$ bichromatic cycle getting formed, where $\theta \in F_v-\{d(v,v_1)\}=F_v-\{4\}=\{1,3,5\}$. But this cannot be a $(\gamma,5)$ bichromatic cycle since $\gamma \notin S_{vv_2}$ (recall that $d(v,v_2)=c'(v,v_2)=5$). Also this cannot be a $(\gamma,1)$ bichromatic cycle since otherwise it implies that there exists a $(1,\gamma,vv_1)$ critical path with respect to the coloring $d$, a contradiction in view of $Fact$ \ref{fact:fact1} as there already exists a $(1,\gamma,vu)$ critical path by $Claim$ \ref{clm:clm11}. Thus it has to be a $(3,\gamma)$ bichromatic cycle, implying that there existed a $(3,\gamma,vv_1)$ critical path with respect to the coloring $d$.

If $\gamma$ is not valid for the edge $vv_1$ we recolor edge $vv_2$ instead, using color $\gamma$ to get a coloring $d'$ form $d$. We claim that the coloring $d'$ is valid. This is because there cannot be a $(\gamma,4)$ bichromatic cycle since $\gamma \notin S_{vv_1}$ (recall that $d(v,v_1)=c'(v,v_1)=4$). Also there cannot be a $(\gamma,1)$ bichromatic cycle since otherwise it implies that there exists a $(1,\gamma,vv_2)$ critical path with respect to the coloring $d$, a contradiction in view of $Fact$ \ref{fact:fact1} as there already exists a $(1,\gamma,vu)$ critical path by $Claim$ \ref{clm:clm11}. Finally there cannot be a $(3,\gamma)$ bichromatic cycle because this implies that there existed a $(3,\gamma,vv_2)$ critical path with respect to the coloring $d$, a contradiction by $Fact$ \ref{fact:fact1} since there already existed a $(3,\gamma,vv_1)$ critical path with respect to the coloring $d$. Thus the coloring $d'$ is valid.
\end{proof}

\end{itemize}

\end{proof}

In view of Claim \ref{clm:clm12}, without loss of generality let $\gamma \in Q$ be valid for the edge $vv_1$. Now we recolor the edge $vv_1$ using color $\gamma$ to get a coloring $d'$.

We claim that none of the colors in $S_{uw}$ were altered in this recoloring. This is because if they are altered then $vv_1$ has to be an edge incident on $w$ and thus one of the end points of $vv_1$ has to be $w$. Since $v$ cannot be $w$, either $v_1$ should be $w$. But we know that $deg(v_1)=3$. Recall that $1,4,5 \in S_{uw}$ and thus $deg(w) \ge 4$. Thus $v_1$ cannot be $w$. Thus none of the colors of $S_{uw}$ are modified while getting $d'$ from $d$. We infer that $\gamma \notin S_{uw}$ since $Q \cap S_{uw} = \emptyset$. Therefore $\gamma$ is a candidate for the edge $uw$ since $d'(u,z)=1$. Now color the edge $uw$ using the color $\gamma$ to get a coloring $d''$. If the coloring $d''$ is valid, then we have $F_{u} \cap F_v = \{1,\gamma\}$. This reduces the situation to $case\ 1$.

On the other hand if the coloring $d''$ is not valid then there has to be a bichromatic cycle formed due to the recoloring of edge $uw$. Since $d''(u,z)=1$, it has to be a $(1,\gamma)$ bichromatic cycle. Recall that there existed a $(1,\gamma,vu)$ critical path with respect to the coloring $d$. Note that to get $d''$ from $d$ we have only recolored two edges namely $vv_1$ and $uw$, both with color $\gamma$. Clearly these recolorings cannot break the $(1,\gamma,vu)$ critical path that existed in $d$, but only can extend it. Thus we can infer that in $d''$ the $(1,\gamma)$ bichromatic cycle passes through $v$ and hence through the edges $va$ and $vv_1$. Now recolor edge $va$ using color $4$ to get a coloring $c$. Recall that $S_{va}= C-\{1,3,4,5\}$ by Claim \ref{clm:clm11} and $S_{av}= F_v-\{c''(v,a)\}=\{3,5,\gamma\}$. Therefore color $4$ is indeed a candidate for edge $va$. Note that by recoloring $va$ using color $4$, we have broken the $(1,\gamma)$ bichromatic cycle that existed in $d''$. Now we claim that the coloring $c$ is valid. Note that $S_{va} \cap S_{av}= S_{va} \cap \{3,5,\gamma\} = \{ \gamma\}$. If a bichromatic cycle gets formed due to this recoloring then it has to be $(4,\gamma)$ bichromatic cycle, implying that $4 \in S_{vv_1}$. But $S_{vv_1}(c)=S_{vv_1}(d'')=S_{vv_1}(d)$ and $4 \notin S_{vv_1}(d)$ since $d(v,v_1)=4$. Thus $4 \notin S_{vv_1}(c)$, a contradiction. Thus the coloring $c$ is valid. With respect to the coloring $c$, we have $F_v \cap F_u = \{\gamma\} \subset F'_v$, a contradiction to $Claim$ \ref{clm:clm8}.

~~~~~~~

\noindent \textbf{subcase 2.2: $v$ belongs to configuration $B4$.} \newline

We have $deg(v)=6$ and therefore $\vert F_v \vert = 5$. Moreover $\vert F''_v \vert = 1$ and $\vert F'_v \vert = 4$. By $Claim$ \ref{clm:clm8}, $F_v \cap F_u =\{1\} \subseteq F''_v$. Without loss of generality let $c'(u,z)=c'(v,a)=1$. Also let $c(u,w)=2$, $F'_v=\{3,4,5,6\}$ and $Z= \{3,4,5,6\}$. There are $\Delta-3$ candidate colors for the edge $vu$. If none of them are valid then there exist $(1,\alpha,vu)$ critical path for each $\alpha \in C-(F_v \cup F_{u})=C-\{1,2,3,4,5,6\}$. Thus we have the following observation:

\begin{obs}
\label{obs:obs4}
With respect to the coloring $c'$, each color in $C-\{1,2,3,4,5,6\}$ is actively present in $S_{uz}$ as well as $S_{va}$.
\end{obs}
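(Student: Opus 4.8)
The plan is to obtain $Observation$ \ref{obs:obs4} as an immediate consequence of the minimality of $G$ together with $Fact$ \ref{fact:fact2}, in the same spirit as $Observation$ \ref{obs:obs2} was obtained in subcase 2.1. First I would record the color count at the uncolored edge $vu$. Since $deg(v)=6$ and $deg(u)=3$, in the coloring $c'$ of $G-\{vu\}$ we have $|F_v|=5$ and $|F_u|=2$; by $Claim$ \ref{clm:clm8} and the chosen labeling, $F_v\cap F_u=\{1\}$, so $F_v\cup F_u=\{1,2,3,4,5,6\}$ and hence there are exactly $|C|-|F_v\cup F_u|=(\Delta+3)-6=\Delta-3$ candidate colors for $vu$, all lying in $C-\{1,2,3,4,5,6\}$. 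Since $v$ has degree $6$ we have $\Delta\ge 6$, so this set is nonempty.

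Next I would invoke the counterexample hypothesis: $c'$ cannot be extended to a valid coloring of $G$, so no candidate color is valid for $vu$. Applying $Fact$ \ref{fact:fact2} to the edge $vu$, a candidate color $\alpha$ fails to be valid only because there is some $\beta\in S_{vu}\cap S_{uv}$ admitting a $(\beta,\alpha,vu)$ critical path; but $S_{vu}\cap S_{uv}=F_u\cap F_v=\{1\}$, so $\beta=1$ necessarily. Thus for every $\alpha\in C-\{1,2,3,4,5,6\}$ there is a $(1,\alpha,vu)$ critical path with respect to $c'$.

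Finally I would translate each such critical path into the two ``actively present'' assertions. A $(1,\alpha,vu)$ critical path leaves $v$ along an edge colored $1$ and enters $u$ along an edge colored $1$; properness forces these to be the edges $va$ and $uz$ respectively (the unique $1$-colored edges at $v$ and at $u$). Consequently the path edge at $a$ is colored $\alpha$, so $\alpha\in F_a$, and since $\alpha\neq 1=c'(v,a)$ we get $\alpha\in S_{va}$, with the critical path itself witnessing that $\alpha$ is actively present in $S_{va}$; symmetrically, the path edge at $z$ is colored $\alpha$, giving $\alpha\in S_{uz}$ and that $\alpha$ is actively present in $S_{uz}$.

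This is essentially a bookkeeping step rather than a genuine difficulty; the only things worth double-checking are that $S_{vu}\cap S_{uv}$ really equals $\{1\}$ (so $Fact$ \ref{fact:fact2} pins the second color of every failing critical path to $1$) and that the $1$-colored edges at $v$ and $u$ are uniquely $va$ and $uz$, which is exactly what lets us conclude active presence in those specific sets rather than merely somewhere. The observation then feeds into the subsequent analysis of subcase 2.2 by guaranteeing $C-\{1,2,3,4,5,6\}\subseteq S_{va}$ and $C-\{1,2,3,4,5,6\}\subseteq S_{uz}$.
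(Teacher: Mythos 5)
Your proposal is correct and takes essentially the same route as the paper: the paper also counts the $\Delta-3$ candidate colors in $C-\{1,2,3,4,5,6\}$, notes that none can be valid by minimality of $G$, and concludes via Fact \ref{fact:fact2} that each such $\alpha$ lies on a $(1,\alpha,vu)$ critical path, which necessarily uses the unique $1$-colored edges $va$ and $uz$. Your explicit bookkeeping of why $S_{vu}\cap S_{uv}=\{1\}$ and why the path endpoints are pinned to $va$ and $uz$ is just a more detailed writeup of what the paper leaves implicit.
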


\begin{clm}
\label{clm:clm13}
$S_{uz} \supseteq C-\{1,3,4,5,6\}$ and $1 \in S_{uw}$. Also at least three of the colors from $Z$ are present in $S_{uw}$.
\end{clm}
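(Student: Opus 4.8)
The plan is to follow the template of Claim~\ref{clm:clm9}, the analogous statement for configuration $B3$, with the counting rescaled to $\deg(v)=6$. Abbreviate $W=C-\{1,2,3,4,5,6\}$. Since $\Delta\ge6$ we have $|W|=\Delta-3\ge3$, and by Observation~\ref{obs:obs4}, $W\subseteq S_{uz}$ (and $W\subseteq S_{va}$). Because $|S_{uz}|\le\Delta-1$ and $1=c'(u,z)\notin S_{uz}$, it follows that $|S_{uz}\cap\{2,3,4,5,6\}|\le2$. I will also use repeatedly that, since $\deg(u)=3$ and only the edges $uz,uw$ carry colours at $u$, we have $S_{zu}=\{2\}$ and $S_{wu}=\{1\}$; these are the sets Lemma~\ref{lem:lem1} calls for whenever $uz$ or $uw$ is recoloured.

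First I would prove $2\in S_{uz}$, which together with $W\subseteq S_{uz}$ gives $S_{uz}\supseteq C-\{1,3,4,5,6\}$. If $2\notin S_{uz}$, then at most two colours of $Z=\{3,4,5,6\}$ lie in $S_{uz}$, so some $\mu\in Z\setminus S_{uz}$ exists; recolour $uz$ with $\mu$. It is a candidate ($\mu\neq c'(u,w)=2$, $\mu\notin S_{uz}$) and, since $S_{uz}\cap S_{zu}=S_{uz}\cap\{2\}=\emptyset$, it is valid by Lemma~\ref{lem:lem1}. But now $\mu\in F_u\cap F'_v$, contradicting Claim~\ref{clm:clm8}. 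Hence $2\in S_{uz}$, and then $|S_{uz}\cap Z|\le1$, i.e. at least three colours of $Z$ lie outside $S_{uz}$.

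Next, for each $\mu\in Z\setminus S_{uz}$ I would recolour $uz$ with $\mu$ once more: it is a candidate, and were the result valid we would again get $\mu\in F_u\cap F'_v$, contradicting Claim~\ref{clm:clm8}. So the recolouring creates a bichromatic cycle; by Fact~\ref{fact:fact2}, since $S_{zu}=\{2\}$, this cycle comes from a $(2,\mu,uz)$ critical path, which leaves $u$ along the edge $uw$ and therefore forces an edge coloured $\mu$ at $w$, i.e. $\mu\in S_{uw}$. As $|Z\setminus S_{uz}|\ge3$, this gives $|S_{uw}\cap Z|\ge3$. Finally, to get $1\in S_{uw}$: suppose $1\notin S_{uw}$ and pick $\mu\in Z\setminus S_{uz}$ (non-empty by $|S_{uz}\cap Z|\le1$). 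Delete the colours of $uw$ and $uz$, recolour $uw$ with $1$ (valid since $1\notin S_{uw}$ and $uz$ is momentarily uncoloured), then recolour $uz$ with $\mu$ (valid by Lemma~\ref{lem:lem1}, as now $S_{zu}=\{1\}$ whereas $1\notin S_{uz}$). The resulting valid coloring has $F_u\cap F_v=\{1,\mu\}$ of size two, which reduces the situation to \emph{case~1}; hence $1\in S_{uw}$.

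The only place demanding care is each recolour-and-verify step: one must confirm that the new colour is genuinely a candidate and that the single bichromatic cycle it could produce is the expected one through $uw$ --- precisely what the computed values $S_{zu}=\{2\}$, $S_{wu}=\{1\}$, $1\notin S_{uz}$, and the standing assumptions guarantee. I do not anticipate a conceptual obstacle, only the need to apply the tight counting (with $|S_{uz}|\le\Delta-1$ allowing just two ``extra'' colours in $S_{uz}$, one of them forced to be $2$) exactly where it is needed.
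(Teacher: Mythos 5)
Your proposal is correct and follows essentially the same route as the paper's proof: the same counting on $\vert S_{uz}\vert\le\Delta-1$ to force $2\in S_{uz}$ (via the contradiction with Claim~\ref{clm:clm8} otherwise), the same deduction that the failed recolorings of $uz$ by colors in $Z-S_{uz}$ yield $(2,\mu)$ critical paths through $uw$ and hence $\vert Z\cap S_{uw}\vert\ge3$, and the same two-edge recoloring ($uw\mapsto1$, $uz\mapsto\mu$) reducing to case~1 to conclude $1\in S_{uw}$. No gaps.
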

\begin{proof}
As we have seen above $C-\{1,2,3,4,5,6\} \subseteq S_{uz}$. Suppose $2 \notin S_{uz}$. Note that every color in $C-(S_{uz} \cup S_{zu})$ is a candidate for $uz$. Now $S_{zu}=\{c'(u,w)\}=\{2\}$. Moreover $\vert S_{uz} \vert \le \Delta-1$ and thus $S_{uz}$ can have at most two more colors other than those in $C-\{1,2,3,4,5,6\}$. From this we can infer that at least two of the colors in $Z$ are candidates for the edge $uz$. They are also valid by $Lemma$ \ref{lem:lem1} since $S_{uz} \cap S_{zu}= S_{uz} \cap \{2\} = \emptyset$. Thus we can reduce the situation to $F_u \cap F'_v \neq \emptyset$, by assigning one of the valid colors from $Z$ to $uz$, thereby getting a contradiction to $Claim$ \ref{clm:clm8}. Thus we infer that $2 \in S_{uz}$. Therefore we get $S_{uz} \supseteq C-\{1,3,4,5,6\}$. Since $\vert S_{uz} \vert \le \Delta-1$ and $\vert C-\{1,3,4,5,6\} \vert = \Delta-2$ we can infer that $\vert Z \cap S_{uz} \vert \le 1$.

If any one of the colors in $Z - S_{uz}$ is valid for the edge $uz$, then it will reduce the situation to $F_u \cap F'_v \neq \emptyset$, a contradiction to $Claim$ \ref{clm:clm8}. Thus none of these colors are valid for the edge $uz$. Therefore there should be bichromatic cycles getting formed when we try to recolor edge $uz$ using any of these colors. These bichromatic cycles have to be $(2,\mu)$ bichromatic cycles for each color $\mu \in Z - S_{uz}$ since $c'(u,w)=2$. Thus we can infer that at least three of the colors from $Z$ are present in $S_{uw}$ since $\vert Z - S_{uz} \vert \ge 4 -1 = 3$.

Now if $1 \notin S_{uw}$, then assign color $1$ to edge $uw$ and a color $\mu \in Z - S_{uz}$ to edge $uz$. Clearly this recoloring is valid by $Lemma$ \ref{lem:lem1} since $S_{zu} \cap S_{uz} = \{1\} \cap S_{uz} = \emptyset$ ($1 \notin S_{uz}$ since $c'(u,z)=1$). With respect to the new coloring, $F_u \cap F_v=\{1,\mu\}$ which reduces the situation to $case\ 1$. Thus we infer that $1 \in S_{uw}$.
\end{proof}

\begin{clm}
\label{clm:clm14}
$\vert (C-\{1,2,3,4,5,6\})- S_{uw} \vert \ge 2$.
\end{clm}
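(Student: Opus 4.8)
The plan is to prove $Claim$ \ref{clm:clm14} by a short counting argument, exactly parallel to $Claim$ \ref{clm:clm10} from subcase 2.1, feeding off the structural information already obtained in $Claim$ \ref{clm:clm13}. The two quantitative inputs I would record first are the palette size and the degree bound on $w$: since $\vert C \vert = \Delta + 3$ we have $\vert C - \{1,2,3,4,5,6\} \vert = \Delta - 3$, and since $w \in N(u)$ has $\deg(w) \le \Delta$ we have $\vert S_{uw} \vert = \vert F_w \vert - 1 \le \Delta - 1$.

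Next I would count how many of the six distinguished colors $\{1,2,3,4,5,6\}$ are forced to lie in $S_{uw}$. By $Claim$ \ref{clm:clm13} we know $1 \in S_{uw}$ and that at least three of the four colors of $Z = \{3,4,5,6\}$ belong to $S_{uw}$; on the other hand $2 = c'(u,w) \notin S_{uw}$ by the definition of $S_{uw}$. Hence $\vert S_{uw} \cap \{1,2,3,4,5,6\} \vert \ge 1 + 3 = 4$. Combining with $\vert S_{uw} \vert \le \Delta - 1$, at most $(\Delta - 1) - 4 = \Delta - 5$ of the colors in $S_{uw}$ can lie outside $\{1,2,3,4,5,6\}$, so $\vert (C - \{1,2,3,4,5,6\}) \cap S_{uw} \vert \le \Delta - 5$, and therefore $\vert (C - \{1,2,3,4,5,6\}) - S_{uw} \vert \ge (\Delta - 3) - (\Delta - 5) = 2$, which is the claim. (Note $\Delta \ge 6$ here since $\deg(v) = 6$, so all the subtractions make sense.)

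I do not expect any genuine obstacle: the whole content is the bookkeeping of which named colors $Claim$ \ref{clm:clm13} pins inside $S_{uw}$, after which the conclusion is forced by the cardinality bound on $S_{uw}$. The only two points requiring care are not to count color $2$ as a member of $S_{uw}$ (it is excluded, being the color of $uw$ itself), and to use the guaranteed lower bound of \emph{three} colors from $Z$ rather than assuming all four lie in $S_{uw}$, since $Claim$ \ref{clm:clm13} only asserts the former.
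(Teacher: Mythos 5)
Your proposal is correct and is essentially the same counting argument as the paper's: both use Claim \ref{clm:clm13} to place at least four of the six named colors ($1$ plus three colors of $Z$) inside $S_{uw}$ and then combine this with $\vert S_{uw}\vert \le \Delta-1$ and $\vert C\vert = \Delta+3$ to force at least two colors of $C-\{1,2,3,4,5,6\}$ outside $S_{uw}$. The only difference is cosmetic bookkeeping (you bound $\vert (C-\{1,\dots,6\})\cap S_{uw}\vert$ directly, while the paper bounds $\vert (C-S_{uw})\cap\{1,\dots,6\}\vert$), so there is nothing further to add.
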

\begin{proof}
Since $\vert S_{uw} \vert \le \Delta-1$, we have $\vert C-S_{uw} \vert \ge 4$. Now since $\vert Z \cap S_{uw} \vert \ge 3$ and $1 \in S_{uw}$, $\vert \{1,2,3,4,5,6\} \cap S_{uw} \vert \ge 4$. It follows that $\vert (C - S_{uw}) \cap \{1,2,3,4,5,6\} \vert \le 2$ and the Claim follows.
\end{proof}

\noindent Now discard the color on the edge $uw$ to obtain a partial coloring $d$ of $G$ from $c'$.

\begin{clm}
\label{clm:clm15}
With respect to coloring $d$, $\forall \alpha \in C-\{1,3,4,5,6\}$, there exists a $(1,\alpha,vu)$ critical path and thus $C-\{1,3,4,5,6\} \subseteq S_{va}$.
\end{clm}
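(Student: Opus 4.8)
The plan is to prove Claim~\ref{clm:clm15} as the $B4$-analogue of Claim~\ref{clm:clm11}, using essentially the same template. First recall that by Observation~\ref{obs:obs4}, with respect to $c'$ there is a $(1,\alpha,vu)$ critical path for every $\alpha \in C-\{1,2,3,4,5,6\}$. Each such path leaves $v$ along the edge $va$ (colored $1$) and enters $u$ along $uz$ (colored $1$), so it uses only edges colored $1$ or $\alpha$; in particular it does not use $uw$ (colored $2$). Hence all of these paths survive unchanged when we pass from $c'$ to $d$ by discarding the color of $uw$. Since $C-\{1,3,4,5,6\}=\{2\}\cup\bigl(C-\{1,2,3,4,5,6\}\bigr)$, it therefore remains only to exhibit a $(1,2,vu)$ critical path with respect to $d$.

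To produce that path, I would first pick a color $\theta \in (C-\{1,2,3,4,5,6\})-S_{uw}$; such a $\theta$ exists by Claim~\ref{clm:clm14}. In the coloring $d$ the color $\theta$ is a candidate for the edge $uw$, since $\theta \notin S_{uw}$ and $d(u,z)=1\neq\theta$. Recolor $uw$ with $\theta$ to get a coloring $d'$. This recoloring is valid: otherwise a $(1,\theta)$ bichromatic cycle would be created, necessarily through the edge $uw$ (now colored $\theta$) and the edge $uz$ (colored $1$), which would force a $(1,\theta,uw)$ critical path with respect to $d$ — impossible by Fact~\ref{fact:fact1}, because the $(1,\theta,vu)$ critical path inherited from Observation~\ref{obs:obs4} already passes through $u$ via the edge $uz$.

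Now with respect to $d'$ we have $F_v\cap F_u=\{1\}$ and $2\notin F_v\cup F_u$ (note $F_v=\{1,3,4,5,6\}$ and $F_u=\{1,\theta\}$ with $\theta\neq 2$), so $2$ is a candidate for $vu$. If it were valid we could color $vu$ and contradict the minimality of $G$; hence by Fact~\ref{fact:fact2} there is a $(1,2,vu)$ critical path with respect to $d'$. This path uses only colors $1$ and $2$, whereas $uw$ carries color $\theta\notin\{1,2\}$ in $d'$, so the same path is present with respect to $d$. Combining with the paths inherited from Observation~\ref{obs:obs4}, we get, for every $\alpha\in C-\{1,3,4,5,6\}$, a $(1,\alpha,vu)$ critical path with respect to $d$; since such a path leaves $v$ along $va$ (colored $1$) and then traverses an edge at $a$ colored $\alpha$, we conclude $\alpha\in S_{va}$, i.e.\ $C-\{1,3,4,5,6\}\subseteq S_{va}$.

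I do not expect a genuine obstacle here: the only point requiring care is the validity of the intermediate recoloring $d'$, and that is a direct application of Fact~\ref{fact:fact1} (there can be at most one maximal $(1,\theta)$-bichromatic path through $u$). Verifying that discarding the color of $uw$ and later recoloring it with $\theta$ does not disturb any of the relevant $(1,\alpha)$-bichromatic subgraphs is routine, since $\theta$ and $2$ are distinct from every $\alpha$ and from $1$. The argument is structurally identical to that of Claim~\ref{clm:clm11}, with $\{1,2,3,4,5,6\}$ playing the role that $\{1,2,3,4,5\}$ plays there.
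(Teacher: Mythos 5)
Your proposal is correct and follows essentially the same route as the paper: inherit the $(1,\alpha,vu)$ critical paths for $\alpha \in C-\{1,2,3,4,5,6\}$ from Observation~\ref{obs:obs4}, then obtain the missing $(1,2,vu)$ path by temporarily recoloring $uw$ with a color $\theta \in (C-\{1,2,3,4,5,6\})-S_{uw}$ (whose existence is Claim~\ref{clm:clm14}), arguing validity via Fact~\ref{fact:fact1}, and observing that the resulting $(1,2,vu)$ path avoids $uw$ and hence persists in $d$. No gaps; this matches the paper's argument step for step.
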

\begin{proof}
With respect to the coloring $c'$, there existed a $(1,\alpha,vu)$ critical path for each $\alpha \in C-(F_v \cup F_{u})=C-\{1,2,3,4,5,6\}$ by $Observation$ \ref{obs:obs4}. These critical paths remain unaltered when we get $d$ from $c'$. Thus these critical paths are present in $d$ also. Thus it is enough to prove that there exists a $(1,2,vu)$ critical path with respect to the coloring $d$. Let $\theta \in (C-\{1,2,3,4,5,6\})- S_{uw}$. Note that $\theta$ exists by $Claim$ \ref{clm:clm14}. Now color $\theta$ is a candidate for the edge $uw$ since $\theta \notin S_{uw}$ and $d(u,z)=1$. Recolor the edge $uw$ using color $\theta$ to get a coloring $d'$. The coloring $d'$ is valid since otherwise a $(1,\theta)$ bichromatic cycle has to be created due to the recoloring. This means that there existed a $(1,\theta,uw)$ critical path with respect to coloring $c'$, a contradiction by $Fact$ \ref{fact:fact1} as there already existed a $(1,\theta,vu)$ critical path with respect to the coloring $c'$ by $Observation$ \ref{obs:obs4}. Thus the coloring $d'$ is valid.

Now color $2$ is a candidate for the edge $vu$. If it is valid we get a valid coloring for $G$. Thus it is not valid. This means that there exists a $(1,2,vu)$ critical path with respect to the coloring $d'$ since $F_v\cap F_u=\{1\}$ with respect to the coloring $d'$. Now it is easy to see that this  $(1,2,vu)$ critical path will also exist with respect to coloring $d$. Thus with respect to the coloring $d$, $\forall \alpha \in C-\{1,3,4,5,6\}$, there exists a $(1,\alpha,vu)$ critical path.
\end{proof}

\begin{obs}
\label{obs:obs5}
Let $Q= (C-\{1,3,4,5,6\})-S_{uw}$. From Claim \ref{clm:clm14}, we know that $\vert (C-\{1,2,3,4,5,6\})-S_{uw} \vert \ge 2$. Since $c'(u,w)=2$ we have $2 \notin S_{uw}$. From this we can infer that $2 \in Q$. Thus $\vert Q \vert \ge 3$.
\end{obs}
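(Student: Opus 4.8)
The statement of Observation \ref{obs:obs5} is a bookkeeping step that converts the bound of Claim \ref{clm:clm14} into the form needed for the remainder of subcase~2.2, so the plan is simply to unwind the set definitions. Recall that here $C$ has $\Delta+3$ colors, $F_v=\{1,3,4,5,6\}$, $c'(u,w)=2$, and $S_{uw}$ is taken with respect to $c'$. I would start from the decomposition
\[
C-\{1,3,4,5,6\} \;=\; \bigl(C-\{1,2,3,4,5,6\}\bigr)\,\cup\,\{2\},
\]
which is a \emph{disjoint} union because $2\in C$, $2\notin\{1,3,4,5,6\}$, and $2$ is the unique element of $C-\{1,3,4,5,6\}$ lying in $\{1,2,3,4,5,6\}$.

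Intersecting both sides with the complement of $S_{uw}$ then gives
\[
Q \;=\; \bigl((C-\{1,2,3,4,5,6\})-S_{uw}\bigr)\,\cup\,\bigl(\{2\}-S_{uw}\bigr),
\]
still a disjoint union. The second ingredient is that $2\notin S_{uw}$: by definition $S_{uw}=F_w-\{c'(u,w)\}=F_w-\{2\}$, so the color $2$ has been removed. Hence $\{2\}-S_{uw}=\{2\}$, and since Claim \ref{clm:clm14} gives $\vert(C-\{1,2,3,4,5,6\})-S_{uw}\vert\ge 2$, we get $\vert Q\vert\ge 2+1=3$. (One can also read this off Claim \ref{clm:clm13}: since $1\in S_{uw}$ and at least three colors of $\{3,4,5,6\}$ lie in $S_{uw}$, the set $S_{uw}$ meets $\{1,3,4,5,6\}$ in $\ge 4$ colors, hence contains at most $(\Delta-1)-4=\Delta-5$ of the $\Delta-2$ colors of $C-\{1,3,4,5,6\}$, leaving $\vert Q\vert\ge 3$.)

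Observation \ref{obs:obs5} itself presents no real obstacle; the only point to watch is that $2$ is counted as genuinely new relative to the set in Claim \ref{clm:clm14}, which it is, being the color that sat on the now-discarded edge $uw$. The reason $\vert Q\vert\ge 3$ is worth isolating is that $Q$ is exactly the supply of colors available for the next, harder step, the configuration-$B4$ analogue of Claim \ref{clm:clm12}: every $\gamma\in Q$ carries a $(1,\gamma,vu)$ critical path (Claim \ref{clm:clm15}) and is absent from $S_{uw}$, and one wants to pick such a $\gamma$ that is valid for one of the degree-$3$ neighbours $vv_1,\dots,vv_4$ — this is where $\vert Q\vert\ge 3$ is balanced against $\vert S_{vv_i}\vert=2$ and $Q\cap F_v=\emptyset$, and since there are now four such neighbours it will likely need the same two-step ``break-then-recolor'' argument used in the proof of Claim \ref{clm:clm12} — then recolor that edge with $\gamma$, then recolor $uw$ with $\gamma$, and finish as in subcase~2.1: either the resulting coloring is valid and $F_u\cap F_v=\{1,\gamma\}$ reduces to case~1, or the forced $(1,\gamma)$ bichromatic cycle must run through $v$ via the edges $va$ and $vv_i$ and is destroyed by recoloring $va$ with a color of $F'_v$. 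That downstream argument, not Observation \ref{obs:obs5}, is where the genuine care is needed.
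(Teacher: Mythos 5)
Your proposal is correct and matches the paper's own (inline) justification: the paper likewise splits off the color $2$ from $C-\{1,2,3,4,5,6\}$, notes $2\notin S_{uw}$ because $S_{uw}=F_w-\{c'(u,w)\}$, and adds it to the two colors guaranteed by Claim \ref{clm:clm14}. Your alternative count via Claim \ref{clm:clm13} is also valid but is essentially a re-derivation of Claim \ref{clm:clm14} rather than a genuinely different route.
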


Recall that $\vert S_{vv_i} \vert =2$, for $i \in \{1,2,3,4\}$ and by $Observation$ \ref{obs:obs5}, $\vert Q \vert \ge 3$. We know that $S_{va} \supseteq C-\{1,3,4,5,6\}$ by $Claim$ \ref{clm:clm15}. Since $\vert C-\{1,3,4,5,6\} \vert = \Delta-2$ and $\vert S_{va} \vert \le \Delta-1$ we have $\vert Z \cap S_{va} \vert = \vert \{3,4,5,6 \} \cap S_{va} \vert \le 1$. We make the following assumption:

\begin{asm}
\label{asm:asm2}
If $Z \cap S_{va} \neq \emptyset$, let $\{\alpha\} = Z \cap S_{va}$ and let $d(v,v_t)=\alpha$, where $t \in \{1,2,3,4\}$. Let $\beta \in (Z-\{\alpha\})-S_{vv_t}$. If $Z \cap S_{va} = \emptyset$, then let $\beta$ be any color in $Z$.
\end{asm}

We now plan to recolor one of the edges in $\{vv_1,vv_2,vv_3,vv_4\}$ using a specially selected color  $\gamma \in Q$. After this we will also use the same color $\gamma$ to recolor edge $uw$, with the intention of reducing the situation to $case\ 1$. Below we give the recoloring procedure for the rest of the proof starting from the current coloring $d$ in $3$ steps. The final coloring $c$ of $G- \{vu\}$ that we obtain at the end of $Step3$  will give the required contradiction.

~~~~~~

\noindent \textbf{Step1: With respect to the coloring $d$,}
\begin{itemize}
\item[(i)] \textbf{If one of the edges $vv_i$, for $i \in \{1,2,3,4\}$ is such that  $S_{vv_i} \subset Q$, then recolor that edge with any color $\gamma \in Q-S_{vv_i}$. We call the edge that we chose to recolor as $(v,v_{t'})$.}

\item[(ii)] \textbf{If $\forall i \in \{1,2,3,4\}$, $S_{vv_i} \nsubseteq Q$, then we select an edge $vv_{t'}$, where $t' \in \{1,2,3,4\}$ such that $d(v,v_{t'})=\beta$ (See $Assumption$ \ref{asm:asm2}). Now recolor the edge $vv_{t'}$ with a suitably selected (see the proof of $Claim$ \ref{clm:clm16}) color in $Q-S_{vv_{t'}}$.}
\end{itemize}
\textbf{The resulting coloring after performing $Step1$ is named $d'$.}

~~~~~~

\begin{clm}
\label{clm:clm16}
There exists a color $\gamma \in Q$ such that the coloring $d'$ obtained after $Step1$ is valid.
\end{clm}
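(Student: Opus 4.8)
The statement has a built-in case split, namely the two alternatives of $Step1$, and I would follow it. Alternative (i) is essentially free and falls out of $Lemma$~\ref{lem:lem1}; alternative (ii) is where the content lies, and the plan there is to isolate the (few) colors of $Q-S_{vv_{t'}}$ that would create a bichromatic cycle and show they cannot exhaust $Q-S_{vv_{t'}}$.

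\textbf{Alternative (i).} Suppose some $vv_i$, $i\in\{1,2,3,4\}$, has $S_{vv_i}\subset Q$; this is the edge $vv_{t'}$. Since $|Q|\ge 3$ by $Observation$~\ref{obs:obs5} and $|S_{vv_{t'}}|=2$, the set $Q-S_{vv_{t'}}$ is nonempty, so pick any $\gamma$ in it. Because $Q\subseteq C-\{1,3,4,5,6\}=C-F_v$ we have $Q\cap F_v=\emptyset$; hence $\gamma$ is a candidate for $vv_{t'}$ (no edge at $v$ or at $v_{t'}$ carries $\gamma$), and moreover $S_{vv_{t'}}\cap S_{v_{t'}v}\subseteq Q\cap F_v=\emptyset$. $Lemma$~\ref{lem:lem1} then says recoloring $vv_{t'}$ with $\gamma$ is valid, so any $\gamma\in Q-S_{vv_{t'}}$ works in this branch.

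\textbf{Alternative (ii).} Now $S_{vv_i}\not\subseteq Q$ for all $i$, so $|S_{vv_i}\cap Q|\le 1$ for all $i$ (using $|S_{vv_i}|=2$), and in particular $|Q-S_{vv_{t'}}|\ge |Q|-1\ge 2$, where $vv_{t'}$ is the edge with $d(v,v_{t'})=\beta$ and $\beta\in Z=\{3,4,5,6\}$ as in $Assumption$~\ref{asm:asm2}. I would identify the forbidden colors: by $Fact$~\ref{fact:fact2}, $\gamma\in Q-S_{vv_{t'}}$ is forbidden only if $d$ has a $(\lambda,\gamma,vv_{t'})$ critical path with $\lambda\in S_{vv_{t'}}\cap S_{v_{t'}v}$. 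Here $S_{v_{t'}v}(d)=F_v(d)-\{\beta\}=\{1\}\cup(\{3,4,5,6\}-\{\beta\})$. The value $\lambda=1$ is impossible: by $Claim$~\ref{clm:clm15} there is a $(1,\gamma,vu)$ critical path in $d$ (since $\gamma\in Q\subseteq C-\{1,3,4,5,6\}$), and a $(1,\gamma,vv_{t'})$ critical path would be a second maximal $(1,\gamma)$-bichromatic path through $v$, contradicting $Fact$~\ref{fact:fact1} because $u\ne v_{t'}$. So every possible $\lambda$ lies in $L:=S_{vv_{t'}}\cap\{3,4,5,6\}$ (and $\beta\notin S_{vv_{t'}}$), with $|L|\le 2$. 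For a fixed $\lambda\in L$, the first edge of a $(\lambda,\gamma,vv_{t'})$ critical path must be the unique $\lambda$-edge $vv_{j(\lambda)}$ at $v$, with $v_{j(\lambda)}\in\{v_1,\dots,v_4\}$, and its second edge forces $\gamma\in S_{vv_{j(\lambda)}}$; hence the colors forbidden through $\lambda$ lie in $S_{vv_{j(\lambda)}}\cap Q$, of size at most $1$ by the $Step1$(ii) hypothesis. So at most $|L|\le 2$ colors of $Q-S_{vv_{t'}}$ are forbidden.

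\textbf{Closing the count (the main obstacle).} The one delicate point is the tight case. If $|Q-S_{vv_{t'}}|\ge 3$ we are done since at most two colors are forbidden. If $|Q-S_{vv_{t'}}|=2$, then $|S_{vv_{t'}}\cap Q|=1$, so $S_{vv_{t'}}$ contains a color of $Q$; as $Q\cap\{3,4,5,6\}=\emptyset$ this forces $|L|\le 1$, hence at most one forbidden color and a valid choice remains. Either way a suitable $\gamma$ exists, proving the claim. The crux is precisely this last bookkeeping in the case $|Q-S_{vv_{t'}}|=2$; the rest is routine application of $Lemma$~\ref{lem:lem1}, $Fact$~\ref{fact:fact1} and $Fact$~\ref{fact:fact2}.
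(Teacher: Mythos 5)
Your proposal is correct and follows essentially the same route as the paper: case (i) via Lemma~\ref{lem:lem1}, and in case (ii) the same three ingredients — ruling out $\lambda=1$ by Claim~\ref{clm:clm15} and Fact~\ref{fact:fact1}, the bound $\vert S_{vv_j}\cap Q\vert\le 1$ from the case hypothesis, and a count of candidates versus obstructions. The only difference is presentational: the paper runs the count as a contradiction split on $\vert(F'_v-\{\beta\})\cap S_{vv_{t'}}\vert\in\{1,2\}$ (deducing $S_{vv'}\subseteq Q$), whereas you bound the forbidden colors directly, one per $\lambda\in S_{vv_{t'}}\cap Z$; your handling of the tight case $\vert Q-S_{vv_{t'}}\vert=2$ is the same observation the paper makes when it notes that $\vert L\vert=2$ forces $Q\cap S_{vv_{t'}}=\emptyset$.
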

\begin{proof}
At the beginning of $Step1$, we had the following possible cases:

\begin{itemize}
\item[(i)] \textbf{ One of the edges $vv_i$, for $i \in \{1,2,3,4\}$ is such that  $S_{vv_i} \subset Q$:} \newline
Let $\gamma$ be a color in $Q-S_{vv_i}$. Recolor edge $vv_i$ using color $\gamma$ to get a coloring $d'$. The coloring $d'$ is valid by $Lemma$ \ref{lem:lem1} as $S_{vv_i} \cap S_{v_iv}=\emptyset$ since $Q \cap F_v = \emptyset$.

\item[(ii)] \textbf{ $S_{vv_i} \nsubseteq Q$, for each $i \in \{1,2,3,4\}$:} \newline
Let $t'$ be as defined in $Step1$. Clearly all the colors in $Q-S_{vv_{t'}}$ are candidates for $vv_{t'}$ since $Q \cap F_v = \emptyset$. Note that since $S_{vv_i} \nsubseteq Q$ we have $\vert Q \cap S_{vv_{t'}} \vert \le 1$ and therefore $\vert Q-S_{vv_{t'}} \vert \ge 2$. If any one of the candidate colors is valid for the edge $vv_{t'}$, the statement of the Claim is obviously true. On the other hand if none of them are valid, then there has to be a $(\gamma,\theta)$ bichromatic cycle getting formed, for some $\theta \in F_v-\{d(v,v_{t'})\}= F_v -\{\beta\}$ when we try to recolor edge $vv_{t'}$ using color $\gamma$, for each $\gamma \in Q-S_{vv_{t'}}$. Note that $\theta \neq 1$ because if a $(\gamma,1)$ bichromatic cycle gets formed, then there has to be a $(1,\gamma,vv_{t'})$ critical path with respect to the coloring $d$, a contradiction in view of $Fact$ \ref{fact:fact1} as there already exists a $(1,\gamma,vu)$ critical path by $Claim$ \ref{clm:clm15}. Thus $\theta \in F'_v-\{d(v,v_{t'})\}$ since $F''_v=\{1\}$. Therefore we have $\vert (F'_v-\{d(v,v_{t'})\}) \cap S_{vv_{t'}} \vert \ge 1$. We have the following cases:

\begin{itemize}
\item \textbf{$\vert (F'_v-\{d(v,v_{t'})\}) \cap S_{vv_{t'}} \vert = 1$:}
Let $S_{vv_{t'}} \cap (F'_v-\{d(v,v_{t'})\})=d(v,v')$, for $v' \in \{v_1,v_2,v_3,v_4\}-\{v_{t'}\}$. Thus all the candidate colors of $vv_{t'}$, namely all the colors of $Q-S_{vv_{t'}}$ should form bichromatic cycles passing through the edge $vv'$, implying that $Q-S_{vv_{t'}} \subset S_{vv'}$. But $\vert Q-S_{vv_{t'}} \vert \ge 2$ and $\vert S_{vv'} \vert =2$. Thus $S_{vv'} = Q-S_{vv_{t'}} \subseteq Q$, a contradiction.

\item \textbf{$\vert (F'_v-\{d(v,v_{t'})\}) \cap S_{vv_{t'}} \vert = 2$:}  This means that $S_{vv_{t'}} \subseteq F'_v$ and therefore we have $Q \cap S_{vv_{t'}} = \emptyset$. Thus $\vert Q- S_{vv_{t'}} \vert = \vert Q \vert \ge 3$. Therefore there are at least three candidate colors for the edge $vv_{t'}$. Let $S_{vv_{t'}} \cap (F'_v-\{d(v,v_{t'})\})=\{d(v,v'),d(v,v'')\}$, for $v',v'' \in \{v_1,v_2,v_3,v_4\}-\{v_{t'}\}$. Since for each candidate color we have a bichromatic cycle, we can infer that there are at least three bichromatic cycles, each of them passing through either $vv'$ or $vv''$. Thus at least two bichromatic cycles have to pass through one of $vv'$ and $vv''$. But since $\vert S_{vv'} \vert =2$ and $\vert S_{vv''} \vert =2$, we can infer that either  $S_{vv'} \subseteq Q$ or $S_{vv''} \subseteq Q$, a contradiction.
\end{itemize}

\end{itemize}

\end{proof}

~~~~~

\noindent \textbf{Step2: Let $\gamma$ be the color which was used to recolor the edge $vv_{t'}$ in $Step1$. Now recolor edge $uw$ with color $\gamma$ to get a coloring $d''$.}

~~~~~~

\begin{clm}
\label{clm:clm17}
The coloring $d''$ is proper.
\end{clm}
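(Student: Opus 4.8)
My plan is to observe that $d''$ is obtained from the coloring $d'$ — which is proper, being valid by $Claim$ \ref{clm:clm16} — simply by assigning the colour $\gamma$ to the edge $uw$. Indeed, $uw$ is uncoloured in $d$ (we erased $c'(u,w)$), and it is still uncoloured in $d'$ since $Step1$ only recoloured the edge $vv_{t'}$. Hence $d''$ is proper as soon as one checks that $\gamma$ clashes with no edge adjacent to $uw$ in $G-\{vu\}$, i.e.\ with no edge incident on $u$ other than $uw$ and no edge incident on $w$ other than $uw$.

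First I would handle the vertex $u$. In $G-\{vu\}$ the only edge at $u$ besides $uw$ is $uz$, whose colour was never changed, so $d''(u,z)=c'(u,z)=1$; since $\gamma\in Q\subseteq C-\{1,3,4,5,6\}$ we have $\gamma\neq 1$, so there is no clash at $u$. Next I would handle the vertex $w$. The colours on the edges incident on $w$ other than $uw$ should still be exactly the set $S_{uw}$ computed with respect to $c'$, because the only modifications made after $c'$ are the erasure of $c'(u,w)$ and the recolouring of $vv_{t'}$ in $Step1$, and the latter can affect a colour at $w$ only if $w\in\{v,v_{t'}\}$. Now $w\neq v$ trivially, and by $Claim$ \ref{clm:clm13} we have $1\in S_{uw}$ together with at least three colours of $Z$ in $S_{uw}$, so $\vert S_{uw}\vert\ge 4$ and hence $deg(w)\ge 5$; on the other hand $v_{t'}\in\{v_1,v_2,v_3,v_4\}$, so $deg(v_{t'})\le 3$, giving $w\neq v_{t'}$. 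Thus the edges at $w$ carry precisely the colours of $S_{uw}$, and $\gamma\notin S_{uw}$ by the definition of $Q=(C-\{1,3,4,5,6\})-S_{uw}$. Consequently there is no clash at $w$ either, and $d''$ is proper.

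The one point that really needs care — and the only place a nontrivial fact is used — is the assertion that $Step1$ does not disturb the colours incident on $w$; this is exactly where the degree estimate $deg(w)\ge 5$ obtained from $Claim$ \ref{clm:clm13} comes in. All the remaining verifications are immediate from the definition of $Q$ and from the fact that $uz$ and the edges at $w$ retain their $c'$-colours throughout $Step1$ and $Step2$.
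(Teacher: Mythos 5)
Your proof is correct and follows essentially the same route as the paper: both arguments reduce the claim to showing that the recoloring of $vv_{t'}$ in $Step1$ cannot touch an edge incident on $w$, and both establish this by deriving a lower bound on $deg(w)$ from $Claim$ \ref{clm:clm13} (you get $deg(w)\ge 5$ by also counting $1\in S_{uw}$, the paper settles for $deg(w)\ge 4$; either suffices since $deg(v_{t'})=3$), after which $\gamma\notin S_{uw}$ and $\gamma\neq 1$ give properness.
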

\begin{proof}
We claim that none of the colors in $S_{uw}$ were altered in $Step1$. This is because if they are altered then the edge $vv_{t'}$ should be incident on $w$ and thus one of the end points of $vv_{t'}$, where $t' \in \{1,2,3,4\}$, has to be $w$. Since $v$ cannot be $w$, $v_{t'}$ should be $w$. But we know that $deg(v_i)=3$. Recall that $\vert Z \cap S_{uw} \vert \ge 3$ by $Claim$ \ref{clm:clm13} and thus $\vert S_{uw} \vert \ge 3$. Therefore $deg(w) \ge 4$. Thus $v_{t'}$ cannot be $w$. Thus none of the colors of $S_{uw}$ are modified while getting $d'$ from $d$. Recall that $Q= (C-\{1,3,4,5,6\})-S_{uw}$ and thus $\gamma \notin S_{uw}$. Therefore $\gamma$ is a candidate for the edge $uw$ since $d(u,z)=1$. Thus the coloring $d''$ is proper.
\end{proof}

If the coloring $d''$ is valid, then we have $F_{u} \cap F_v = \{1,\gamma\}$ for a valid coloring of $G-\{vu\}$. This reduces the situation to $case\ 1$. Thus coloring $d''$ is not valid. Since the coloring $d''$ is not valid, there has to be a bichromatic cycle formed due to the recoloring of edge $uw$. Since $d''(u,z)=1$, it has to be a $(1,\gamma)$ bichromatic cycle. Recall that there existed a $(1,\gamma,vu)$ critical path with respect to the coloring $d$ by $Claim$ \ref{clm:clm15}. Note that to get $d''$ from $d$ we have only recolored two edges namely $vv_{t'}$ and $uw$, both with color $\gamma$. Clearly these recolorings cannot break the $(1,\gamma,vu)$ critical path that existed in $d$, but can only extend it. Thus we can infer that in $d''$ the $(1,\gamma)$ bichromatic cycle passes through $v$ and hence through the edges $va$ and $vv_{t'}$. Also note that this can happen only when we have $1 \in S_{vv_{t'}}$. Thus $S_{vv_{t'}} \nsubseteq Q$. It means that substep $(ii)$ of $Step1$ was executed; and the color on $vv_{t'}$ with respect to coloring $d$ was $\beta$ (from $Assumption$ \ref{asm:asm2}). We break the $(1,\gamma)$ bichromatic cycle as follows:

~~~~~~~

\noindent \textbf{Step3: Recolor the edge $va$ with color $\beta$ (see in $Assumption$ \ref{asm:asm2}) to get a coloring $c$}.

~~~~~~~~~~~~~

\begin{clm}
\label{clm:clm18}
The coloring $c$ is valid.
\end{clm}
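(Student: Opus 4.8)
The plan is to exploit the fact that $c$ differs from $d''$ only on the single edge $va$, so that every bichromatic cycle in $c$ must use $va$, and then to rule out each colour that the second edge of such a cycle at $v$ could carry. We are in the situation established just before the Claim: $Step1(ii)$ was executed, $d(v,v_{t'})=\beta$, the coloring $d''$ carries a $(1,\gamma)$ cycle passing through $va$ and $vv_{t'}$, and in $Step3$ we recolour $va$ from $1$ to $\beta$. First I would check that $c$ is proper. In $Step1$--$Step2$ only $vv_{t'}$ (to $\gamma$) and $uw$ (to $\gamma$) were recoloured, so $S_{va}$ is the same in $d''$ as in $d$ (neither recoloured edge is incident to $a$, or if $a=w$ the colour change $2\to\gamma$ still does not introduce $\beta$), and by $Assumption$ \ref{asm:asm2} either $Z\cap S_{va}=\{\alpha\}$ with $\beta\in Z-\{\alpha\}$, or $Z\cap S_{va}=\emptyset$; in both cases $\beta\notin S_{va}$. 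At $v$ the colours present in $c$ are $\beta$ on $va$, $\gamma$ on $vv_{t'}$, and the three colours $\{3,4,5,6\}-\{\beta\}$ on the remaining edges $vv_j$, none of which is $\beta$; hence $\beta$ is a genuine candidate for $va$ and $c$ is proper.

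Next I would argue that $d''$ has at most one bichromatic cycle, and it uses $va$. Indeed $d'$ is valid by $Claim$ \ref{clm:clm16}, and $d''$ is obtained from $d'$ by colouring the previously uncoloured edge $uw$ with $\gamma$; any newly created cycle uses $uw$, hence passes through $u$, where the only colours present are $1$ and $\gamma$ (the edge $uv$ being uncoloured), so the cycle is a $(1,\gamma)$ cycle, and by $Fact$ \ref{fact:fact1} there is at most one such cycle through $u$ — the one we already identified as running through $va$ and $vv_{t'}$. Consequently any bichromatic cycle of $c$ that avoided $va$ would also be a cycle of $d''$ avoiding $va$, which is impossible. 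So suppose $c$ has a bichromatic cycle. It uses $va$ (colour $\beta$), hence passes through $v$, and is a $(\beta,\delta)$ cycle where $\delta$ is the colour in $c$ of the cycle-edge at $v$ other than $va$; thus $\delta\in(\{3,4,5,6\}-\{\beta\})\cup\{\gamma\}$. If $\delta=\gamma$, the cycle uses $vv_{t'}$, so there is a $\beta$-coloured edge at $v_{t'}$, i.e.\ $\beta\in S_{vv_{t'}}$; but the two edges at $v_{t'}$ other than $vv_{t'}$ are never recoloured — since $v_{t'}\notin\{v,a,u\}$ and $deg(w)\ge 4>3=deg(v_{t'})$ forces $v_{t'}\ne w$ — so $S_{vv_{t'}}$ equals its value under $d$, which cannot contain $\beta=d(v,v_{t'})$, a contradiction. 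If $\delta=\alpha$ (only relevant when $Z\cap S_{va}=\{\alpha\}\ne\emptyset$), the cycle uses the edge $vv_t$, which still carries colour $\alpha$ because $v_t\ne v_{t'}$, so $\beta\in S_{vv_t}$, contradicting the choice $\beta\in(Z-\{\alpha\})-S_{vv_t}$ of $Assumption$ \ref{asm:asm2} (again $S_{vv_t}$ is unchanged by the recolourings). If $\delta\in\{3,4,5,6\}-\{\beta\}$ but $\delta\ne\alpha$, then $\delta\notin S_{va}$, so there is no $\delta$-coloured edge at $a$ and the cycle cannot continue past $a$, a contradiction. Hence $c$ has no bichromatic cycle and is valid.

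The main obstacle is the case $\delta=\gamma$: the $\gamma$-alternating structure deliberately created in $Step1$ and $Step2$ could, a priori, close into a cycle once $va$ receives the colour $\beta$. The resolution is precisely the reason $Step1(ii)$ recolours an edge $vv_{t'}$ whose old colour was $\beta$ to $\gamma$: afterwards $\beta$ is absent from the neighbourhood of $v_{t'}$, so no $(\beta,\gamma)$ path can re-enter $v_{t'}$ and the would-be cycle cannot close. The rest is bookkeeping — verifying that the recolourings of $uw$ and $vv_{t'}$ leave $S_{va}$, $S_{vv_t}$ and $S_{vv_{t'}}$ untouched — which follows from the degree facts $deg(v_i)=3$, $deg(w)\ge 4$ together with $a,u,w\notin\{v_1,\dots,v_4\}$.
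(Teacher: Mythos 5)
Your proof is correct and follows essentially the same route as the paper's: rule out a $(\beta,\gamma)$ cycle because $\beta=d(v,v_{t'})\notin S_{vv_{t'}}$, rule out a $(\beta,\alpha)$ cycle by the choice of $\beta$ in Assumption \ref{asm:asm2}, and rule out any other $(\beta,\delta)$ cycle because $\delta\notin S_{va}$. You are in fact slightly more careful than the paper on one point it leaves implicit — since $d''$ is not valid, you explicitly verify that its unique bichromatic cycle passes through $va$, so that every bichromatic cycle of $c$ must contain the recoloured edge $va$.
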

\begin{proof}
Recall by $Assumption$ \ref{asm:asm2} that $\beta \notin S_{va}$. Also clearly $\beta \notin F_v(d'')$ since we recolored $vv_{t'}$ by a color $\gamma \in Q$ to get $d''$ form $d$ ($\beta \neq \gamma$ since $\beta \in F_v(d)$ and $F_v(d) \cap Q = \emptyset$). Therefore color $\beta$ is a candidate for edge $va$. Note that by recoloring $va$ using color $\beta$, we have broken the $(1,\gamma)$ bichromatic cycle that existed in $d''$. We claim that the coloring $c$ is valid. Otherwise there has to be a bichromatic cycle involving $\beta$ and a color in $S_{va} \cap S_{av}$. But $S_{av} = (Z-\{\beta\}) \cup \{\gamma\} = (\{3,4,5,6\}- \{\beta\}) \cup \{\gamma\}$. Since  with respect to $d''$ there was a $(1,\gamma)$ bichromatic cycle passing through the edges $va$ and $d''(v,a)=1$, we have $\gamma \in S_{va} \cap S_{av}$. But there cannot be a $(\beta,\gamma)$ bichromatic cycle getting formed in $c$ since such a cycle should contain edge $vv_{t'}$ and thus $\beta \in  S_{vv_{t'}}$. But $S_{vv_{t'}}(c)=S_{vv_{t'}}(d'')=S_{vv_{t'}}(d)$ and $\beta \notin S_{vv_{t'}}(d)$ since $d(v,v_{t'})=\beta$. Thus $\beta \notin S_{vv_1}(c)$, a contradiction. Thus there cannot be a $(\beta,\gamma)$ bichromatic cycle.

Thus if the coloring $c$ is not valid then there has to be a bichromatic cycle involving $\beta$ and one of the colors in $Z-\{\beta\} \cap S_{va}$. We know by $Assumption$ \ref{asm:asm2} that $Z \cap S_{va} = \alpha$. Thus it has to be a $(\beta,\alpha)$ bichromatic cycle. Since $c(v,v_t)=d(v,v_t)=\alpha$, this bichromatic cycle contains the edge $vv_t$ and hence $\beta \in S_{vv_t}$, a contradiction to the way $\beta$ was selected in $Assumption$ \ref{asm:asm2}. Thus there cannot be a $(\beta,\alpha)$ bichromatic cycle. Thus the coloring $c$ is valid.
\end{proof}

\noindent With respect to the coloring $c$, we have $F_v \cap F_u = \{\gamma\} \subset F'_v$, a contradiction to $Claim$ \ref{clm:clm8}.

\subsection{There exists no vertex $v$ that belongs to one of the configurations $B2$, $B3$, $B4$ or $B5$}

This means that there exists a vertex $v$ that belongs to configuration $B1$, i.e., $deg(v)=2$. Let $Q = \{u \in V\ :\ deg(u) =2\}$. First we claim that $Q$ is an independent set in $G$. Otherwise let $u',u \in Q$ be such that $(u,u') \in E(G)$. Now since $G$ is a minimum counter example, $G-\{uu'\}$ is acyclically edge colorable using $\Delta+3$ colors. Let $c'$ be a valid coloring of $G-\{uu'\}$. Now if $F_u \cap F_{u'} = \emptyset$, then there are $\Delta+3 - 2= \Delta+1$, candidate colors for the edge $uu'$. Since $S_{uu'} \cap S_{u'u} = \emptyset$, by $Lemma$ \ref{lem:lem1}, all the candidate colors are valid for the edge $uu'$. On the other hand if $\vert F_u \cap F_{u'} \vert = 1$, then there are $\Delta+3 - 1= \Delta+2$ candidate colors for the edge $uu'$. Let $N(u)=\{u',u''\}$. If none of them are valid then all those candidate colors have to be actively present in $S_{uu''}$, implying that $\vert S_{uu''} \vert \ge \Delta+2$, a contradiction since $\vert S_{uu''} \vert \le \Delta-1$. Thus there exists a valid coloring of $G$, a contradiction to the assumption that $G$ is a counter example. We infer that $Q$ is an independent set in $G$.

Now delete all the vertices in $Q$ from $G$ to get a graph $G'$. Clearly the graph $G'$ has at most $2\vert V(G') \vert -1$ edges since $Q$ is an independent set. It follows by $Lemma$ \ref{lem:lem4} that there should be a vertex $v'$ in $G'$ such that $v'$ is the pivot of one of the configurations $B1-B5$, say $B'=\{v'\} \cup N_{G'}(v')$. But with respect to graph $G$, $\{v'\} \cup N_{G'}(v')$ did not form any of the configurations $B1-B5$. This means that the degree of at least one of the vertices in $\{v'\} \cup N_{G'}(v')$ should have got decreased by the removal of $Q$ from $G$. Let $P$ be the set of vertices in $\{v'\} \cup N_{G'}(v')$ whose degrees got reduced due to the removal of $Q$ from $G$, i.e., $P=\{z \in \{v'\} \cup N_{G'}(v'): deg_{G'}(z) < deg_{G}(z)\}$.

For a vertex $x \in V(G)$, let $M''_{G}(x)=\{u \in N_{G}(x) : deg_{G}(u) > 3\}$ and $M'_{G}(x)=N_{G}(x) - M''_{G}(x)$. Note that in all the configurations defined in $Lemma$ \ref{lem:lem4}, the main criteria which characterizes each configuration is the degree of the pivot $v'$ and the degrees of the vertices in $N'(v')$. We make the following claim:

\begin{clm}
\label{clm:clm19}
There exists a vertex $x$ in $P$ such that $\vert M''_{G}(x) \vert \le 3$.
\end{clm}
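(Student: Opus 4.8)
The plan is to exploit the fact that the extra neighbours a vertex of $P$ picks up on passing from $G'$ to $G$ all lie in $Q$, hence have degree $2$ in $G$, hence never belong to any set $M''_{G}(\cdot)$.

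First I would record the basic observation: for every $z \in P$ we have $N_{G}(z) \setminus N_{G'}(z) \subseteq Q$, and each vertex of $Q$ has degree $2$ in $G$; therefore $M''_{G}(z) \subseteq N_{G'}(z)$, and in fact $M''_{G}(z) = \{w \in N_{G'}(z) : deg_{G}(w) \ge 4\}$. In particular $\vert M''_{G}(z) \vert \le deg_{G'}(z)$, and any neighbour of $z$ that already has degree $\le 3$ in $G$ may be discarded from $M''_{G}(z)$. Recall also $P \ne \emptyset$, since otherwise $deg_{G}(w) = deg_{G'}(w)$ for every $w \in \{v'\} \cup N_{G'}(v')$ and $N_{G}(v') = N_{G'}(v')$, so $\{v'\} \cup N_{G'}(v')$ would already form configuration $B'$ in $G$ --- impossible, as $v' \notin Q$ rules out $B1$ and the hypothesis of this subsection rules out $B2$--$B5$.

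Next I would split on the configuration $B'$ that $v'$ is a pivot of in $G'$, writing $N'(v')$ and (for $B' \ne B1$) $N''(v')$ for the partition of $N_{G'}(v')$ prescribed by $B'$. If $B' = B1$, then $deg_{G}(v') \ge 3 > 2 = deg_{G'}(v')$, so $v' \in P$ and $\vert M''_{G}(v') \vert \le deg_{G'}(v') = 2$; take $x = v'$. If $B' \in \{B3,B4,B5\}$ and some $z \in N'(v') \cap P$, then $deg_{G'}(z) \le 3$ gives $\vert M''_{G}(z) \vert \le 3$; take $x = z$. Otherwise $N'(v') \cap P = \emptyset$, so every vertex of $N'(v')$ keeps degree $\le 3$ in $G$; if moreover $v' \notin P$ then $deg_{G}(v') = deg_{G'}(v')$ and $N_{G}(v') = N_{G'}(v')$, so $\{v'\} \cup N_{G'}(v')$ forms $B'$ in $G$, a contradiction. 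Hence $v' \in P$, and since all of $N'(v')$ and all the $Q$-neighbours of $v'$ have degree $\le 3$ in $G$, $M''_{G}(v') \subseteq N''(v')$, so $\vert M''_{G}(v') \vert \le \vert N''(v') \vert \le 2$; take $x = v'$.

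The remaining case $B' = B2$ is where the one genuine obstacle lies: the distinguished neighbours $u, v_1$ are only guaranteed degree $\le 4$ in $G'$, so $\vert M''_{G}(z) \vert \le deg_{G'}(z) \le 4$ is not yet enough, and one must use that $v'$ itself is always present in $N_{G'}(z)$. If $deg_{G}(v') \ne 3$, then $v' \in P$ and $\vert M''_{G}(v') \vert \le deg_{G'}(v') = 3$; take $x = v'$. If $deg_{G}(v') = 3$, then since $u$ and $v_1$ both have degree $\le 4$ in $G'$, were neither in $P$ the set $\{v',u,v_1,a\}$ would form $B2$ in $G$, a contradiction; so some $z \in \{u,v_1\} \cap P$, and now $v' \in N_{G'}(z)$ with $deg_{G}(v') = 3 \le 3$ forces $v' \notin M''_{G}(z)$, whence $M''_{G}(z) \subseteq N_{G'}(z) \setminus \{v'\}$ and $\vert M''_{G}(z) \vert \le deg_{G'}(z) - 1 \le 3$; take $x = z$. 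Thus in every case we obtain the required $x \in P$; apart from the slightly finer accounting needed for $B2$, the argument is just a routine walk through the five configurations, driven throughout by the principle that vertices of $Q$ are invisible to $M''_{G}(\cdot)$.
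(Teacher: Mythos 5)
Your proof is correct and follows essentially the same route as the paper: both rest on the observation that $M''_{G}(x)\subseteq N_{G'}(x)$ because the recovered neighbours lie in $Q$ and have degree $2$, and both isolate $B2$ as the only delicate case, where the bound $4$ is brought down to $3$ by discounting $v'$ itself (of degree $3$) from $N_{G'}(z)$. You merely organize the case analysis by configuration type rather than by whether $v'\in P$, which changes nothing of substance.
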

\begin{proof}
It is easy to see that $M''_{G}(x) \subseteq N_{G'}(x)$. If there exists a vertex in $P$, whose degree is at most $3$, say $x$, then we have $\vert M''_{G}(x) \vert \le 3$. Thus we can assume that the degree of any vertex in $P$ is at least $4$.

Now suppose the pivot vertex $v'$ is in $P$. Then let $x=v'$. It is clear that $v'$ has to be in one of the configuration $B3-B5$. In any of these configurations there can be at most two neighbours with degree greater than $3$. Note that in this case all the degree 3 neighbours of $x=v'$ in $G'$ are of degree $3$ in $G$ also since otherwise $P$ will contain a vertex of degree at most $3$, a contradiction. Thus we have $\vert M''_{G}(x) \vert \le 2$.

The only remaining case is when $v' \notin P$. Since the degree of $v'$ has not changed and $\{v'\} \cup N_{G}(v')$ was not in any configuration in $G$, it means that one of the vertex in $N'(v')$ has had its degree decreased. We call that vertex as $x$. Since the degree of any vertex in $P$ is at least $4$, $deg_{G'}(x) \ge 4$. Since we can have degree $\ge 4$ vertex in $N'(v')$ only if $\{v'\} \cup N_{G}(v')$ forms a configuration $B2$, we infer that $deg_{G'}(x) =4$. Moreover $deg_{G'}(v')=deg_{G}(v') =3$. Thus we have $\vert M''_{G}(x) \vert \le \vert N_{G'}(x)-\{v'\} \vert \le 4 - 1 = 3$.

Thus we have $\vert M''_{G}(x) \vert \le 3$.

\end{proof}

In $G$, let $y$ be a two degree neighbour of vertex $x$ - selected in $Claim$ \ref{clm:clm19} - such that $N(y)=\{x,y'\}$. Now by induction $G-\{xy\}$ is acyclically edge colorable using $\Delta+3$ colors. Let $c'$ be a valid coloring of $G-\{xy\}$. With respect to the coloring $c'$ let $F'_{x}(c')=\{c'(x,z) \vert z \in M'(x)\}$ and $F''_{x}(c')=\{c'(x,z) \vert z \in M''(x)\}$ i.e., $F''_{x}=F_x-F'_x$.

Now if $c'(y,y') \notin F_x$ we are done as there are at least three candidate colors which are also valid by $Lemma$ \ref{lem:lem1}. We know by $Claim$ \ref{clm:clm19} that $\vert F''_x \vert \le 3$. If $c'(y,y') \in F'_x$, then let $c=c'$. Else if $c'(y,y') \in F''_x$, then recolor edge $yy'$ using a color from $C-(S_{yy'} \cup F''_x)$ to get a coloring $c$ (Note that $\vert C-(S_{yy'} \cup F''_x) \vert \ge \Delta+3-(\Delta-1+3) = 1$ and since $y$ is a pendant vertex in $G-\{xy\}$ the recoloring is valid). Now if $c(y,y') \notin F_x$ the proof is already discussed. Thus $c(y,y') \in F'_x$.

With respect to coloring $c$, let $a \in M'(x)$ be such that $c(x,a)=c(y,y')=1$. Now if none of the candidate colors in $C-(F_x \cup F_{y})$ are valid for the edge $xy$, then all those candidate colors have to be actively present in $S_{xa}$, implying that $\vert S_{xa} \vert \ge \vert C- (F_x \cup F_{y}) \vert \ge \Delta+3 - (\Delta-1+1-1) = 4$, a contradiction since $\vert S_{xa} \vert \le 2$ (Recall that $a \in M'(x)$ and $deg(a) \le 3$). Thus we have a valid color for the edge $xy$, a contradiction to the assumption that $G$ is a counter example.

\end{proof}


\end{document}